\newtheorem{theorem}{Theorem}
\newtheorem{lemma}{Lemma}
\newtheorem{corollary}{Corollary}
\theoremstyle{definition}
\newtheorem{example}{Example}
\newtheorem{definition}{Definition}
\newtheorem{remark}{Remark}
\renewcommand*\env@matrix[1][c]{\hskip -\arraycolsep
  \let\@ifnextchar\new@ifnextchar
  \array{*\c@MaxMatrixCols #1}}
\newcommand{\Fb}{\mathbb{F}} 
\newcommand{\tth}{{\text{th}}} % superscript 'th' in text mode
\newcommand{\rx}{{\sf Rx}}
\newcommand{\tr}{{\sf T}} % transpose operator
\newcommand{\ravg}{{r_{\rm avg}}}
\newcommand{\Db}{{\mathcal{D}}}
\newcommand{\Kb}{{\mathcal{K}}}
\newcommand{\Sc}{{\mathcal{S}}}
\newcommand{\Mc}{{\mathcal{M}}}
\newcommand{\supp}{{\sf supp}}
\newcommand{\spp}{{\sf span}}
\newcommand{\colsp}{{\mathcal{C}}}
\newcommand{\nullsp}{{\mathcal{N}}}
\newcommand{\rank}{{\rm rank}}
\newcommand{\minrank}{{{\rm minrk}_q}}
\newcommand{\Vin}{{V_{\sf in}}}
\newcommand{\Vout}{{V_{\sf out}}}
\newcommand{\p}{\pmb}
\newcommand{\xb}{\pmb{x}} % % bold font
\newcommand{\cb}{\pmb{c}}
\newcommand{\Ac}{\mathcal{A}} % % calligraphic font
\newcommand{\Bc}{\mathcal{B}} 
\newcommand{\Vc}{\mathcal{V}} 
\newcommand{\Ec}{\mathcal{E}}
\newcommand{\Zc}{\mathcal{Z}} 
\newcommand{\Cs}{\mathscr{C}} 
\newcommand{\Ds}{\mathscr{D}} 
\newcommand{\Gcpu}{\bar{G}_u} % % undirected complements
\newcommand{\Ecpu}{\bar{\mathcal{E}}_u} 
\newcommand{\Ef}{\mathfrak{E}}
\newcommand{\Df}{\mathfrak{D}}
\newcommand{\betaopt}{{\beta}_{\sf opt}}
\newcommand{\wt}{{\sf wt}}
\newcommand{\minrk}{{\rm minrk}}
\title{Locally Decodable Index Codes}
\author{Lakshmi Natarajan, Prasad Krishnan, V.\ Lalitha and Hoang Dau
\thanks{This article was presented in part at the 2018 IEEE International Symposium on Information Theory (ISIT 2018), Vail, Colorado, USA, and in part at the 2019 IEEE International Symposium on Information Theory (ISIT 2019), Paris, France.}%
\thanks{Lakshmi Natarajan is with the Department of Electrical Engineering, Indian Institute of Technology Hyderabad, email: lakshminatarajan@iith.ac.in.}%
\thanks{Prasad Krishnan and V.\ Lalitha are with the Signal Processing \& Communications Research Center, International Institute of Information Technology Hyderabad, India, email:\{prasad.krishnan,\,lalitha.v\}@iiit.ac.in.}%
\thanks{Hoang Dau is with the School of Science, RMIT University, Australia, email: sonhoang.dau@rmit.edu.au.}%
\thanks{Copyright (c) 2017 IEEE. Personal use of this material is permitted. However, permission to use this material for any other purposes must be obtained from the IEEE by sending a request to pubs-permissions@ieee.org.}%
}
\begin{document}

\maketitle

\begin{abstract}
\boldmath
An index code for broadcast channel with receiver side information is \emph{locally decodable} if each receiver can decode its demand by observing only a subset of the transmitted codeword symbols instead of the entire codeword. Local decodability in index coding is known to reduce receiver complexity, improve user privacy and decrease decoding error probability in wireless fading channels. Conventional index coding solutions assume that the receivers observe the entire codeword, and as a result, for these codes the number of codeword symbols queried by a user per decoded message symbol, which we refer to as \emph{locality}, could be large. In this paper, we pose the index coding problem as that of minimizing the broadcast rate for a given value of locality (or vice versa) and designing codes that achieve the optimal trade-off between locality and rate. We identify the optimal broadcast rate corresponding to the minimum possible value of locality for all single unicast problems. We present new structural properties of index codes which allow us to characterize the optimal trade-off achieved by: vector linear codes when the side information graph is a directed cycle; and scalar linear codes when the minrank of the side information graph is one less than the order of the problem. We also identify the optimal trade-off among all codes, including non-linear codes, when the side information graph is a directed 3-cycle. Finally, we present techniques to design locally decodable index codes for arbitrary single unicast problems and arbitrary values of locality. 
\end{abstract}

\begin{IEEEkeywords}
Broadcast rate, directed cycle, index coding, linear codes, locally decodable codes, minrank.
\end{IEEEkeywords}

\section{Introduction} \label{sec:introduction}

\IEEEPARstart{T}{he} fundamental communication problem in broadcast channels is to design an efficient coding scheme to satisfy the demands of multiple clients with minimal use of the shared communication medium.
When the clients have \emph{side-information} about the messages demanded by other users in the network, \emph{index coding}~\cite{BiK_INFOCOM_98,YBJK_IEEE_IT_11} can achieve remarkable savings in channel use by broadcasting a coded version of the information symbols.
All the receivers then simultaneously decode their demands using the broadcast codeword and their individual side-information.
The objective of index coding is to minimize the number of uses of the broadcast channel, or equivalently, the \emph{broadcast rate}. 
Index coding is a central problem in multi-user communication networks, not only because of its wide ranging applications, such as video-on-demand and daily newspaper delivery~\cite{BiK_INFOCOM_98}, but also because of its strong relation to other coding theoretic problems, such as network coding~\cite{RSG_IEEE_IT_10,EEL_IT_15}, coded caching~\cite{MaN_IT_14}, codes for distributed data storage~\cite{Maz_ISIT_14,ShD_ISIT_14} and distributed computation~\cite{LLPPR_IT_18}.
% The index coding problem is to design a code for a broadcast channel where each receiver has prior side information of a subset of messages being transmitted. 
% The receivers decode their demand by using the transmitted codeword and their side information.

Of the several classes of index coding problems discussed in the literature since~\cite{BiK_INFOCOM_98}, the most widely studied is \emph{single unicast index coding}, in which each message available at the server is demanded by a unique client.
Several approaches have been taken, most popularly via graph theoretic ideas, to bound the optimal index coding rate from above and below; see, for example, \cite{BiK_INFOCOM_98,YBJK_IEEE_IT_11,ALSWH_FOCS_08,BKL_arxiv_10,CASL_ISIT_11,TDN_ISIT_12,BKL_IT_13,NTZ_IT_13,SDL_ISIT_13,SDL_ISIT_14,ArK_ISIT_14,AgM_arXiv_16}. 
%% R1version %%
The techniques used in these works naturally lead to converse results and constructions of index codes, most of the constructions being those of (scalar and vector) linear index codes. 
%% R1version %%

\begin{table*}[!t]
\centering
\renewcommand{\arraystretch}{1.5}
\begin{tabulary}{\linewidth}{|L|L|C|C|}
% \hline
% \hline
\Xhline{3\arrayrulewidth}
\textsc{Section}   & \textsc{Summary of Contents} & \textsc{Class of Problems Considered} & \textsc{Main Results} \\
\Xhline{3\arrayrulewidth}
% \hline
\hline
Section~\ref{sec:system_model} & Definitions and preliminaries & All & \\
\hline
\hline
Section~\ref{sec:min_locality} & Optimal $\beta$ for $r=1$ among all index codes & All & Theorem~\ref{thm:frac_coloring} \\
\hline
\hline
Section~\ref{sec:linear_structure} & Properties of locally decodable vector linear index codes & All & Theorem~\ref{thm:zeroes} \\
\hline
Section~\ref{sec:sub:directed_cycles} & Trade-off between $\beta$ and $r$ among vector linear index codes & Directed cycles & Theorem~\ref{thm:directed_cycles} \\
\hline
\hline
Section~\ref{sec:3cycle} & Trade-off between $\beta$ and $r$ among all index codes & Directed $3$-cycle & Theorem~\ref{thm:3cycle} \\
\hline
\hline
Section~\ref{sec:sub:scalar_structure} & Properties of locally decodable scalar linear index codes & All & Corollary~\ref{cor:lower_bound_r} \\
\hline
Section~\ref{sec:sub:minrank_large} & Trade-off between $\beta$ and $r$ among scalar linear codes & ${\rm minrank}=N-1$ & Theorem~\ref{thm:minrank_N-1} \\
\hline
Section~\ref{sec:sub:receiver_localities_cycles} & Feasible receiver localities among scalar linear codes & Directed cycles & Theorem~\ref{thm:cycle_feasible} \\
\hline
\hline
Section~\ref{sec:design} & Constructions of locally decodable index codes & All & \\
% \hline
\Xhline{3\arrayrulewidth}
\end{tabulary}
\caption{Summary of the contents and the organization of this paper. Here $\beta$ denotes the index coding broadcast rate, $r$ is the locality and $N$ is the number of messages or receivers in the index coding problem.}
\label{table:summary}
\end{table*}

Conventional index coding solutions in the literature assume that each receiver observes the entire transmitted codeword. 
If the network involves a large number of receivers, the length of the index code could also be large, and thus, the number of transmissions that each receiver has to observe could be significantly larger than the size of its demanded message.
Thus index coding solutions which are optimal in terms of broadcast rate could be unfavorable in certain applications, such as when the power available at the wireless receivers is limited and the receivers can not afford to listen to radio transmissions for an extended period of time. 
In such scenarios it is desirable to use an index code that is \emph{locally decodable}~\cite{HaL_ISIT_12}, i.e., an index code that requires each receiver to observe or query only a part of the transmitted codeword.
We define the \emph{locality} of an index code to be the ratio of the number of codeword symbols observed by a receiver to the number of message symbols it demands.
The objective of designing locally decodable index codes is to minimize the broadcast rate and the locality simultaneously, and achieve the optimal trade-off between these two parameters.

Designing a locally decodable index code is a `client aware' approach to the broadcast problem that takes into account the cost incurred by the clients or the receivers while participating in the communication protocol. 
%
% For instance, 
If the index coded symbols are broadcast across different time or frequency resource blocks, a client might want to minimize the number of blocks it must listen to in order to decode its desired index coded message, so as to reduce its power consumption or utilize the remaining blocks to participate in other communication sessions.
Locally decodable index codes allow us to reduce the number of coded packets that must be observed by each client.
In contrast, conventional index coding (without locality considerations) is more `channel centric' with its emphasis purely on minimizing the number of channel uses. 

To the best of our knowledge, the idea of local decodability in index coding was introduced in~\cite{HaL_ISIT_12} where the broadcast rates of random index coding problems, modeled as random graphs, were analyzed under a locality requirement. 
% The results in~\cite{HaL_ISIT_12} and~\cite{NKL_ISIT_18} characterize the optimal broadcast rate of an arbitrary index coding problem when locality is set to the minimum possible value, which is unity. 
% Constructions of index codes with locality greater than one were given in~\cite{NKL_ISIT_18}.
Locally decodable index codes were 
studied under the terminology `$k$-limited access schemes' in~\cite{KSCF_arXiv_18} and were shown to improve user privacy in index coding. 
The authors of~\cite{KSCF_arXiv_18} provide constructions that modify any given binary scalar linear index code into a locally decodable scalar linear code at the cost of increased broadcast rate.
The technique of allowing each receiver to observe only a fraction of the index coded symbols was also used in~\cite{TRAR_TVT_17} and~\cite{VaR_arxiv_19}, to reduce the error rate in wireless fading channels and to reduce the receiver complexity, respectively.
% Local decodability of index codes reduces the error rate in wireless fading broadcast channels~\cite{TRAR_TVT_17}, reduces the receiver complexity~\cite{VaR_arxiv_19}

While the notion of locality has been known in the literature, this parameter has not been rigorously formalized before.
Also, a fundamental treatment of locally decodable index codes is not available to the best of our knowledge.
In this work, we present a formal structure to the discussion regarding locally decodable index codes and present constructions of such codes along with their locality parameters for single unicast index coding problems. 
We derive some structural properties of index coding solutions that relate locality to broadcast rate. We use these properties to determine the optimal trade-off between broadcast rate and locality for some families of index coding problems.

\subsection*{Contributions}

We now describe the contributions and organization of this paper. Table~\ref{table:summary} provides a summary of the contributions and the main results of this paper.
In order to help the readers identify these main results in this paper we have highlighted them using boxes.
% In order to help the readers identify the main contributions of our work, we have highlighted the statements of the corresponding theorems using colored boxes throughout this paper.

We formalize the ideas of locality and locality-broadcast rate trade-off in Section~\ref{sec:system_model}, and show that the minimum possible locality for any single unicast index coding problem is one.
In Section~\ref{sec:min_locality}, we identify the optimal broadcast rate corresponding to unit locality for any single unicast problem (considering non-linear index codes as well). This fully characterizes the minimum locality point in the locality-rate trade-off curve.
In Section~\ref{sec:directed_cycles}, we first derive some locality-related properties of vector linear index codes. We then utilize these results to derive the optimal locality-rate trade-off achieved by linear codes for the class of index coding problems whose side information graphs are directed cycles.
% While Section~\ref{sec:directed_cycles} analyzes the trade-off achieved by linear codes for directed cycles, 
For the specific case of directed $3$-cycle we identify the optimal locality-rate trade-off achieved by any index code, including non-linear codes, in Section~\ref{sec:3cycle}.
In Section~\ref{sec:minrank_N_minus_1}, we derive some structural properties of scalar linear locally decodable index codes. We then characterize the locality-rate trade-off achieved by scalar linear codes for the family of problems whose minrank is one less than the number of messages. 
Note that directed cycles considered in Section~\ref{sec:directed_cycles} are a subset of the family of side information graphs considered in Section~\ref{sec:minrank_N_minus_1}.
In Section~\ref{sec:minrank_N_minus_1}, we also characterize the set of all possible receiver localities for directed cycles that can be achieved using scalar linear index codes.
Sections~\ref{sec:min_locality},~\ref{sec:directed_cycles},~\ref{sec:3cycle}, and~\ref{sec:minrank_N_minus_1} consider specific families of index coding problems or a specific value of locality, and identify the optimal solutions for these cases.
In Section~\ref{sec:design} we present techniques to design locally decodable index codes (which are not necessarily optimal) for arbitrary single unicast problems and arbitrary values of locality. % while we do not attempt to prove the optimality or otherwise of these solutions. 
We also show how the traditional \emph{partition multicast}~\cite{BiK_INFOCOM_98,TDN_ISIT_12} and \emph{cycle covering}~\cite{NTZ_IT_13,CASL_ISIT_11} solutions to index coding can be modified to yield locally decodable index codes.
We conclude the paper in Section~\ref{sec:conclusion} and discuss the relation between locally decodable index codes and overcomplete dictionaries for sparse representation of vectors.

The optimal index coding schemes identified in Sections~\ref{sec:min_locality},~\ref{sec:directed_cycles},~\ref{sec:3cycle}, and~\ref{sec:minrank_N_minus_1} are based on known index coding techniques. 
However, one of the main contributions of this paper is the development of new tools to derive good lower bounds on rate and locality that are vital in proving the optimality of these schemes.

% However, the proof of optimality of these schemes are based on new tools developed in this paper to derive tight lower bounds on rate and locality.
% The derivation of the optimal trade-off between rate and locality is enabled by the new tools proposed in this paper for developing tight lower bounds on rate and locality for linear index codes.

\emph{Notation:} For any positive integer $N$, we will denote the set $\{1,\dots,N\}$ by $[N]$. Matrices and column vectors are denoted by bold upper and lower case letters, respectively, such as $\pmb{A}$ and $\pmb{x}$. 
The symbol $\wt$ denotes the Hamming weight of a vector.
The finite field of size $q$ is denoted as $\Fb_q$. The subspace spanned by vectors $\pmb{u}_1,\dots,\pmb{u}_N$ is denoted by $\spp(\pmb{u}_1,\dots,\pmb{u}_N)$. 
The column space of a matrix $\pmb{A}$ is denoted as $\colsp(A)$ and the null space of $\pmb{A}$ is $\nullsp(\pmb{A})=\{\pmb{x}|\pmb{Ax}=\pmb{0}\}$.
The support set of a vector $\pmb{x}$ is denoted as $\supp(\pmb{x})$.
For a set of column vectors $\xb_1,\dots,\xb_N$ and a subset $K \subset [N]$, we define $\xb_K$ to be the column vector obtained by concatenating $\xb_j, j \in K$. The empty set is denoted by $\phi$. For a matrix $\p{A}$, the component in $j^{\text{th}}$ row and $i^{\text{th}}$ column is denoted as $\p{A}_{j,i}$. The transpose of $\p{A}$ is denoted as $\p{A}^\tr$.

\section{System Model \& Preliminaries} \label{sec:system_model}

We consider single unicast index coding for a broadcast channel consisting of $N$ receivers $\rx_1,\dots,\rx_N$. The transmitter holds $N$ messages $\pmb{x}_1,\dots,\pmb{x}_N$ where the $i^\tth$ message $\xb_i$ is demanded by $\rx_i$, and the messages $\pmb{x}_j$, $j \in K_i$, are known at this receiver as side information, where $K_i \subseteq [N] \setminus \{i\}$. 
The side information graph $G=(\mathcal{V},\mathcal{E})$ is the directed graph with vertex set $\mathcal{V} = [N]$ and edge set $\mathcal{E}=\{(i,j)\,|\, i \in [N], j \in K_i\}$.
The transmitter encodes the messages into a codeword $\cb$ and broadcasts the codeword to all the receivers through a noiseless broadcast channel.
Each receiver decodes its demands using $\cb$ and its own side information.
Throughout this paper we will consider only single unicast index coding problems and denote a problem by its side information graph $G$.

% We first formulate the problem of designing locally decodable index codes, including non-linear codes, in Section~\ref{sec:sub:non_linear}, and introduce the relevant notation for the class of linear index codes in Section~\ref{sec:sub:linear}. 

% \subsection{Locality in index coding} \label{sec:sub:non_linear}

We assume that the messages $\xb_1,\dots,\xb_N$ are vectors of length $M$ over a finite alphabet $\mathcal{A}$, with $|\mathcal{A}| > 1$, and the codeword $\cb$ is a vector of length $\ell$ over the same alphabet, i.e.,
$\xb_i \in \mathcal{A}^M$, $i \in [N]$, and $\cb \in \mathcal{A}^\ell$.
We will assume that the alphabet $\mathcal{A}$ is arbitrary but fixed for a given index coding problem. 
The {\em broadcast rate} $\beta=\ell/M$ of an index code is the ratio of the codeword length to message length, and measures the bandwidth or time required by the source to broadcast the coded symbols to all the receivers.
Please note that our formulation includes non-linear index codes also. 

\subsection{Locally Decodable Index Codes}

Unlike the conventional index coding problem where each receiver is required to observe or query the entire codeword $\pmb{c}$, we allow the receivers to observe only a part of the codeword in order to decode their demands.
Index codes that satisfy this property are called \emph{locally decodable}~\cite{HaL_ISIT_12}.
We will assume that $\rx_i$ queries the subvector $\pmb{c}_{R_i}=(c_j, j \in R_i)$, where $R_i \subseteq [\ell]$ is chosen in such a way that $\rx_i$ can decode $\pmb{x}_i$ using $\pmb{c}_{R_i}$ and the available side information $\pmb{x}_j$, $j \in K_i$.
The \emph{locality of the receiver} of $\rx_i$ is $r_i = \frac{|R_i|}{M}$ is the ratio of the number of codeword symbols observed by $\rx_i$ to the number of information symbols demanded.
\begin{definition}
The {\em overall locality} or simply the {\em locality} of an index code is defined as 
\begin{equation} \label{eq:defn_r}
r = \max_{i \in [N]} \, r_i = \max_{i \in [N]} \, \frac{|R_i|}{M}
\end{equation} 
and is equal to the maximum number of coded symbols queried by any of the $N$ receivers per decoded information symbol. The \emph{average locality} of an index code is  
\begin{equation} \label{eq:ravg}
\ravg = \sum_{i \in [N]}\frac{r_i}{N} = \sum_{i \in [N]} \frac{|R_i|}{MN}.
\end{equation} 
\end{definition}

Observe that the average locality is upper bounded by overall locality $\ravg \leq r$.
Without loss of generality we will consider only those index codes for which $R_1 \cup \cdots \cup R_N = [\ell]$ since the subset $\{c_j~|~ j \in [\ell] \setminus (R_1 \cup \cdots \cup R_N) \}$ of codeword symbols 
% any codeword symbol $c_j$ which is not utilized at any of the receivers, i.e., $c_j$ for $j \in [\ell] \setminus (R_1 \cup \cdots \cup R_N)$, 
need not be generated or transmitted by the encoder.
For a given locality $r$, it is desirable to use an index code with as small a value of $\beta$ as possible and vice versa, which leads us to the following definition.

\begin{definition}
Given an index coding problem $G$, the {\em optimal broadcast rate function} $\beta_{G}^*(r)$ is the infimum of the broadcast rates among all message lengths \mbox{$M \geq 1$} and all valid index codes for $G$ with locality at the most $r$. 
\end{definition}

The function $\beta_G^*(r)$ captures the trade-off between broadcast rate and locality, i.e., between the reduction in the number of channel uses possible through coding and the number of codeword symbols that a receiver has to observe to decode each message symbol. 

%% R1version %%
\begin{remark}
Our system model assumes that the codeword vector and the message vectors are defined over the same alphabet $\Ac$.
In some scenarios the underlying alphabet, say $\Zc$, for the codeword $\p{c}$ and the alphabet $\Ac$ for the messages $\p{x}_1,\dots,\p{x}_N$ can be different.
In Appendix~\ref{app:invariance_alphabet} we generalize the definition of locality to the case where $\Zc$ can be different from $\Ac$, and show that the optimal broadcast rate function is independent of the choice of $\Zc$ and $\Ac$ (see Lemmas~\ref{lem:app:channel_alphabet} and~\ref{lem:app:message_alphabet} in Appendix~\ref{app:invariance_alphabet}). 
Hence, without loss of generality, we will only consider the case $\Zc=\Ac$ in the rest of this paper. Also, from Lemma~\ref{lem:app:message_alphabet} of Appendix~\ref{app:invariance_alphabet}, the value of $\beta_G^*(r)$ is independent of the choice of $\Ac$.  
\end{remark}

\begin{remark}
The cost incurred by a receiver in participating in an index coded broadcast can also be measured as the fraction of codeword symbols queried by the receiver. 
The fraction of coded symbols observed by $\rx_i$ is $\delta_i = |R_i|/\ell$, and the worst-case among all receivers is $\delta = \max_i \delta_i$. Note that $\delta_i = r_i/\beta$, and hence, $\delta=r/\beta$.
The design problem posed in this paper is to construct index codes that minimize $\beta$ for a given value of $r$, or equivalently, minimize $r$ for a given value of $\beta$.
Since $\delta=r/\beta$, we observe that for a given value of $\beta$, minimizing $\delta$ is the same problem as minimizing $r$.
% Hence index coding schemes which are optimal with respect to $(\beta,r)$-trade-off are optimal with respect to $(\beta,\delta)$-trade-off as well.
Thus both choices of performance parameters, $r$ and $\delta$, capture the same underlying index coding problem. 
\end{remark}
%% R1version %%

We now prove some elementary properties of the function $\beta_G^*(r)$.
We will rely on information-theoretic inequalities for this purpose. We will assume that the messages $\xb_1,\dots,\xb_N$ are random, independent of each other and are uniformly distributed in $\mathcal{A}^M$. The logarithms used in measuring mutual information and entropy will be calculated to the base $|\mathcal{A}|$. For example, the entropy of $\xb_i$ is $H(\xb_i)=M$ since $\xb_i$ is uniformly distributed in $\mathcal{A}^M$. 

\begin{lemma}
The locality $r$ and the average locality $\ravg$ of any valid index coding scheme satisfy \mbox{$r, \ravg \geq 1$}. 
\end{lemma}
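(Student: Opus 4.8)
The plan is to prove the stronger pointwise fact that $r_i \ge 1$ for \emph{every} receiver $\rx_i$; the claimed bounds then follow at once, since $r = \max_{i\in[N]} r_i$ is a maximum of quantities each at least $1$, and $\ravg = \frac1N\sum_{i\in[N]} r_i \ge \frac1N\sum_{i\in[N]} 1 = 1$.

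To bound $r_i$ from below I would argue information-theoretically. Since $\rx_i$ decodes $\xb_i$ as a deterministic function of the observed symbols $\pmb{c}_{R_i}$ and its side information $\xb_{K_i}$, validity of the index code is equivalent to $H(\xb_i \mid \pmb{c}_{R_i}, \xb_{K_i}) = 0$. Using that $\xb_i$ is uniform over $\mathcal{A}^M$ so $H(\xb_i) = M$ (entropies in base $|\mathcal{A}|$), and that the messages are mutually independent so that $H(\xb_i \mid \xb_{K_i}) = H(\xb_i)$, one writes
\begin{equation*}
M = H(\xb_i \mid \xb_{K_i}) = I(\xb_i; \pmb{c}_{R_i} \mid \xb_{K_i}) + H(\xb_i \mid \pmb{c}_{R_i}, \xb_{K_i}) = I(\xb_i; \pmb{c}_{R_i} \mid \xb_{K_i}).
\end{equation*}
Bounding the conditional mutual information by an unconditional entropy gives $I(\xb_i; \pmb{c}_{R_i} \mid \xb_{K_i}) \le H(\pmb{c}_{R_i}) \le |R_i|$, where the last step holds because $\pmb{c}_{R_i}$ is a string of $|R_i|$ symbols over $\mathcal{A}$ and hence has entropy at most $|R_i|$ in base $|\mathcal{A}|$. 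Combining, $|R_i| \ge M$, i.e., $r_i = |R_i|/M \ge 1$.

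I do not anticipate a real obstacle here: the only points needing a word of care are translating the decodability requirement into the statement $H(\xb_i \mid \pmb{c}_{R_i}, \xb_{K_i}) = 0$ (immediate, as the decoder output is a deterministic function of its inputs), and invoking independence of the messages so that conditioning on $\xb_{K_i}$ does not lower the entropy of $\xb_i$. Everything else is the standard chain $M = I(\xb_i;\pmb{c}_{R_i}\mid\xb_{K_i}) \le H(\pmb{c}_{R_i}) \le |R_i|$, after which $r = \max_i r_i \ge 1$ and $\ravg \ge 1$ are immediate.
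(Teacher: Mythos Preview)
Your proof is correct and follows essentially the same approach as the paper: both establish $M = I(\xb_i;\cb_{R_i}\mid\xb_{K_i}) \le H(\cb_{R_i}) \le |R_i|$ via decodability and independence of the messages, then conclude $r_i \ge 1$ for every $i$. The only cosmetic difference is that the paper reaches the key identity by applying the chain rule to $I(\xb_i;\cb_{R_i},\xb_{K_i})$, whereas you decompose $H(\xb_i\mid\xb_{K_i})$ directly.
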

\begin{proof}
Considering the decoder at $\rx_i$, we have $I(\xb_i;\cb_{R_i},\xb_{K_i})=H(\xb_i)=M$. 
Since $\xb_i$ is independent of $\xb_{K_i}$ (because $i \notin K_i$), $I(\xb_i;\xb_K)=0$, and
\begin{align}
M &= I(\xb_i;\cb_{R_i},\xb_{K_i}) = I(\xb_i;\xb_{K_i}) + I(\xb_i;\cb_{R_i}|\xb_{K_i}) \nonumber \\
  &= I(\xb_i;\cb_{R_i}|\xb_{K_i}) \leq H(\cb_{R_i}) \leq |R_i|. \label{eq:simple_bound_on_r}
\end{align} 
Hence, $r_i = |R_i|/M \geq 1$ for all $i \in [N]$, and thus, $r, \ravg \geq 1$. 
\end{proof}

Note that uncoded transmission, i.e., $\cb=(\xb_1,\dots,\xb_N)$, is a valid index code with $r=1$. Hence, we will assume that the domain of the function $\beta_G^*$ is the interval $1 \leq r < \infty$. 

\begin{lemma} \label{lem:convexity}
The function $\beta_G^*(r)$ is convex and non-increasing. 
\end{lemma}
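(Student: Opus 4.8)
The plan is to prove the two properties separately, both by standard time-sharing (code concatenation) arguments combined with the infimum definition of $\beta_G^*$.

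\textbf{Monotonicity.} First I would show that $\beta_G^*$ is non-increasing, which is essentially immediate from the definition. If $1 \le r \le r'$, then any valid index code with locality at most $r$ is also a valid index code with locality at most $r'$ (the constraint is only relaxed). Hence the set of broadcast rates over which we take the infimum only grows as $r$ increases, so $\beta_G^*(r') \le \beta_G^*(r)$.

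\textbf{Convexity.} This is the substantive part. I would fix $r_1, r_2 \ge 1$, a rational $\lambda \in [0,1]$, and set $r = \lambda r_1 + (1-\lambda) r_2$. Given $\epsilon > 0$, pick index codes $\Cs_1$ and $\Cs_2$ for $G$ with localities at most $r_1$, $r_2$ and broadcast rates within $\epsilon$ of $\beta_G^*(r_1)$, $\beta_G^*(r_2)$ respectively; say $\Cs_t$ has message length $M_t$, codeword length $\ell_t$, and receiver query sets $R_i^{(t)}$. Write $\lambda = a/(a+b)$ with $a,b$ positive integers. The idea is to build a new code that uses $\Cs_1$ on $aM_2$ fresh copies of the message (i.e.\ $aM_2/M_1$ independent blocks, choosing $M_1, M_2$ compatibly, or more robustly taking $a' = aM_2$ blocks of $\Cs_1$ and $b' = bM_1$ blocks of $\Cs_2$ so that both handle the same total message length $L = abM_1M_2$): concatenate $aM_2/M_1$ codewords of $\Cs_1$ and $bM_1/M_2$ codewords of $\Cs_2$, each operating on its own disjoint set of message symbols, to cover a combined message of length $L$. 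Receiver $\rx_i$ queries, in each constituent block, the symbols dictated by that block's code; decodability is inherited block-by-block. The total codeword length is $(aM_2/M_1)\ell_1 + (bM_1/M_2)\ell_2$, giving broadcast rate $\lambda \beta_1' + (1-\lambda)\beta_2'$ where $\beta_t'$ is the rate of $\Cs_t$; and the number of symbols $\rx_i$ queries is $(aM_2/M_1)|R_i^{(1)}| + (bM_1/M_2)|R_i^{(2)}|$, so its locality is $\lambda r_i^{(1)} + (1-\lambda) r_i^{(2)} \le \lambda r_1 + (1-\lambda) r_2 = r$. Taking the max over $i$, the concatenated code has locality at most $r$, hence
\begin{equation}
\beta_G^*(r) \le \lambda \beta_1' + (1-\lambda)\beta_2' \le \lambda \beta_G^*(r_1) + (1-\lambda)\beta_G^*(r_2) + \epsilon. \nonumber
\end{equation}
Letting $\epsilon \to 0$ gives the convexity inequality for rational $\lambda$. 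Since $\beta_G^*$ is non-increasing (hence locally bounded) and satisfies midpoint/rational convexity, it is convex on $[1,\infty)$; alternatively one extends to real $\lambda$ by a density/continuity argument using monotonicity to control behavior between rational points.

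\textbf{Main obstacle.} The only delicate bookkeeping is making the two constituent codes operate on a common message length while keeping the block structure disjoint and the counts integer-valued — this is the usual nuisance in time-sharing proofs for index coding. I would handle it by working with $L = ab\,M_1 M_2$ total message symbols split into $aM_2$ blocks of length $M_1$ for $\Cs_1$-type encoding of the first $\lambda$-fraction of blocks and $bM_1$ blocks of length $M_2$ for $\Cs_2$-type encoding of the rest, so all quantities are integers and the rate and locality bounds come out exactly as the claimed convex combinations. Everything else (decodability, the max over receivers, passing to the infimum) is routine.
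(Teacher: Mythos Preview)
Your approach is essentially identical to the paper's: both prove monotonicity directly from the definition and convexity by time-sharing $aM_2$ copies of a near-optimal code for $r_1$ with $bM_1$ copies of a near-optimal code for $r_2$, obtaining the convex combination for rational $\lambda=a/(a+b)$ and then passing to arbitrary $\lambda$ by density. One small bookkeeping slip: with $aM_2$ blocks of $\Cs_1$ and $bM_1$ blocks of $\Cs_2$ the combined message length is $L=(a+b)M_1M_2$, not $abM_1M_2$; with that correction your rate and locality computations go through exactly as in the paper.
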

\begin{proof}
The non-increasing property of $\beta_G^*$ follows immediately from its definition. 
We use time-sharing to prove convexity, see Appendix~\ref{app:lem:convexity} for details.
\end{proof}

For any valid index code, we have $|R_i| \leq \ell$, and hence, $r = \max_i |R_i|/M \leq \ell/M = \beta$. Hence, if there exists a valid index code with broadcast rate $\beta$, then its locality is at the most $\beta$, and hence, 
\begin{equation} \label{eq:betaG_simple_1}
\beta_G^*(\beta) \leq \beta. 
\end{equation} 
We will denote by $\betaopt(G)$ the infimum among the broadcast rates of all valid index codes for $G$ (considering all possible message lengths $M \geq 1$ and all possible localities $r \geq 1$). Then it follows that $\beta_G^*(r) \geq \betaopt$ for all $r \geq 1$. Together with~\eqref{eq:betaG_simple_1}, choosing $\beta=\betaopt$, we deduce
\begin{equation} \label{eq:betaG_simple_2}
\beta_G^*(\betaopt) = \betaopt.
\end{equation} 

\begin{example}[\emph{Locality-broadcast rate trade-off of the directed $3$-cycle}] \label{ex:first}
\begin{figure}[!t]
\centering
\includegraphics[width=2in]{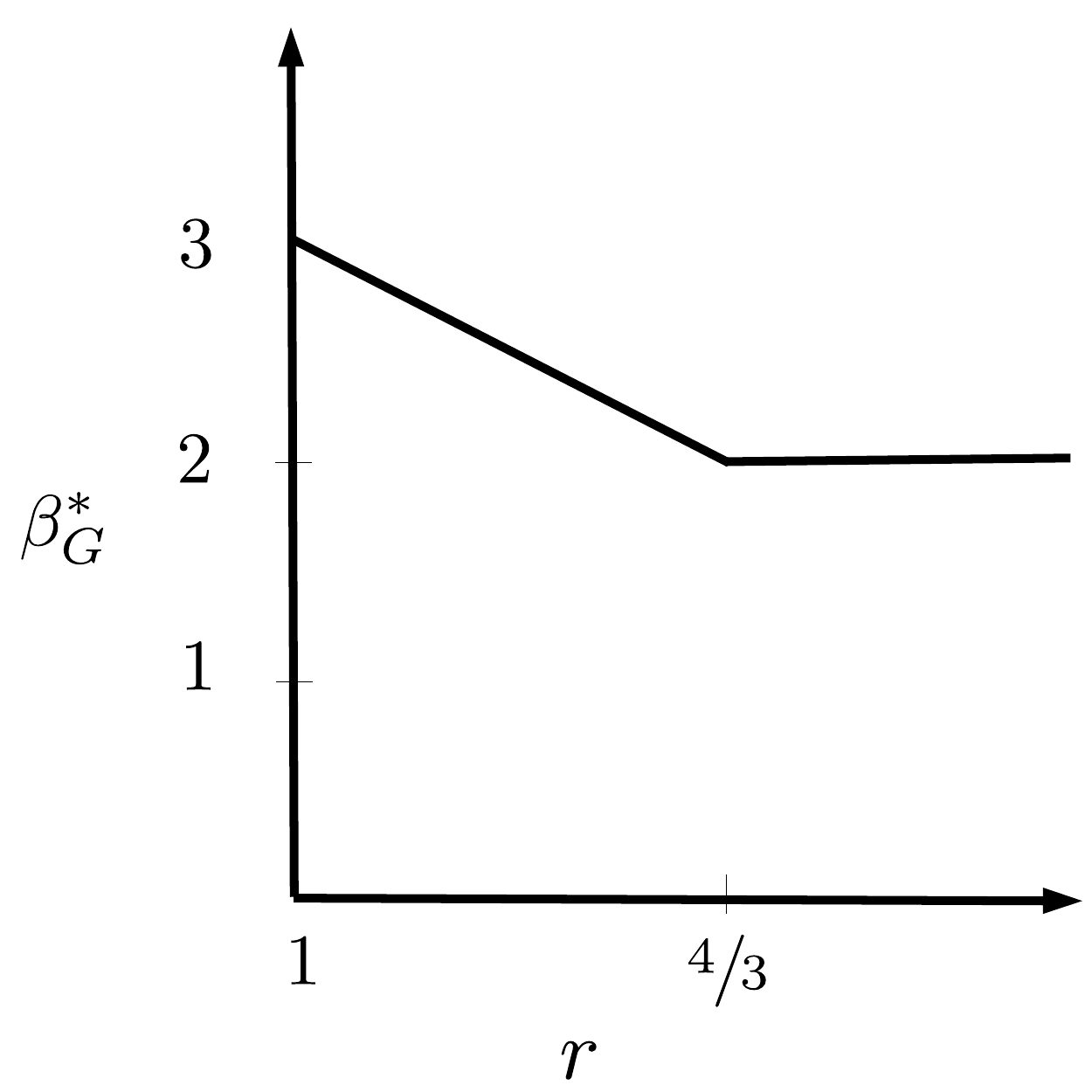}
\caption{The trade-off between the broadcast rate $\beta_G^*(r)$ and the locality $r$ for the index coding problem represented by the $3$-cycle $G$.}
\label{fig:3cycle_tradeoff}
\end{figure} 
Let $G$ be the directed $3$-cycle, i.e., $N=3$ and the side information at the three receivers are $K_1=\{2\}$, $K_2=\{3\}$ and $K_3=\{1\}$.
We show in Section~\ref{sec:3cycle} that for this index coding problem
\begin{equation*} % \label{eq:3cycle_tradeoff}
 \beta_G^*(r) = \max\{6-3r,2\} \text{ for all } r \geq 1.
\end{equation*} 
This function is shown in Fig.~\ref{fig:3cycle_tradeoff}. We observe that in order to achieve any savings in rate compared to the uncoded transmission (\mbox{$\beta=N=3$}), we necessarily require the locality to be strictly greater than $1$, i.e., each receiver must necessarily query more codeword symbols than the message length to achieve savings in the broadcast channel uses. Also, the smallest locality required to achieve the minimum rate \mbox{$\betaopt=2$} is \mbox{$r=4/3$}.
\end{example}

\subsection{Locally Decodable Linear Index Codes}

Some of the results presented in this paper are applicable to the class of linear index codes only.
Linear index codes form the most widely studied family of index coding solutions, and enjoy low complexity encoding and decoding operations.
% Linearity of index codes allows simple implementation of encoding and decoding operations, and linear codes form the most widely studied class of index coding solutions.
% We are interested in linear index codes, and hence, 
When considering (vector) linear index codes, we will assume that each message $\pmb{x}_i$ is a vector of length $M$ over some finite field $\Fb_q$, where $q$ denotes the size of the field. 
% For scalar linear index coding problems the message length $M=1$.
The $M$ components of the $i^\tth$ message vector $\pmb{x}_i$ are denoted as $x_{i,1},\dots,x_{i,M}$.
Encoding is performed by first concatenating the $N$ messages into $\pmb{x} = (\pmb{x}_1^\tr,\dots,\pmb{x}_N^\tr)^\tr \in \Fb_q^{MN}$ and multiplying this vector with a carefully designed encoding matrix $\pmb{L} \in \Fb_q^{MN \times \ell}$ to generate a length $\ell$ the codeword $\pmb{c}^\tr = \pmb{x}^\tr \pmb{L}$.
Note that the $MN$ components of the concatenated vector $\pmb{x}=(x_1,\dots,x_{MN})^\tr$ and the components of the individual message vectors are related as $x_{(i-1)M+m} = x_{i,m}$ for $i \in [N]$ and $m \in [M]$.
% The code length corresponding to the encoding matrix $\pmb{L}$ is $\ell$ and the broadcast rate is $\beta = {\ell}/{M}$.

In the sequel we would like to view a vector linear index coding problem involving $N$ vector messages $\pmb{x}_1,\dots,\pmb{x}_N \in \Fb_q^M$ as a scalar linear problem defined over $MN$ scalar messages $x_1,\dots,x_{MN} \in \Fb_q$. In this case the $i^\tth$  receiver demands $M$ scalar messages $x_{(i-1)M+1},x_{(i-1)M+2},\dots,x_{iM}$, which correspond to the $M$ components of the vector $\pmb{x}_i$. This set of demands of $\rx_i$ is represented by the index set $\Db_i = \{(i-1)M+m \,|\, m \in [M] \}$. The scalar symbols available as side information at $\rx_i$ correspond to the index set 
\begin{equation*}
 \Kb_i = \left\{\, (j-1)M + m \, ~\big\vert~ \, j \in K_i, \, m \in [M] \,\right\}.
\end{equation*} 
Thus the vector linear problem corresponding to the side information graph $G$ with message length $M$ is equivalent to a scalar linear problem with $N$ receivers and $MN$ messages, where the index set of demands of $\rx_i$ is $\Db_i$ and the set corresponding to the side information at $\rx_i$ is $\Kb_i$. 
Note that $\Db_i$ and $\Kb_i$ are subsets of $[MN]$.

We would like to characterize the trade-off between the broadcast rate $\beta$ and the locality $r$ of \emph{linear} index codes over $\Fb_q$ for a given index coding problem $G$. We define the optimum locality-rate trade-off among all linear index codes for $G$ over $\Fb_q$ as
\begin{align*}
 \beta^*_{G,q}(r) = \inf \big\{ \beta ~\, | ~\, \exists &\text{ a linear index code over } \Fb_q \\
 &\text{ for } G \text { with rate } \beta \text{ and locality} \leq r \,\big\},
\end{align*} 
where the infimum considers linear index codes over all possible message lengths $M \geq 1$ for the index coding problem $G$. 
Note the difference between $\beta^*_{G,q}(r)$ and $\beta^*_{G}(r)$; while the former is the optimal trade-off between rate and locality among linear index codes (over $\Fb_q$), the latter considers all valid index codes (including linear and non-linear codes) for $G$.

\begin{example}[\emph{A simple scalar linear code for directed cycles}] \label{ex:simple_code_cycle}
Let the message length $M=1$, $\Fb_q$ be any finite field, and $G$ be the directed cycle of length $N$, i.e., $K_i=\{i+1\}$ for $i \in [N-1]$ and $K_N=\{1\}$. Consider the $N \times (N-1)$ encoder matrix
\begin{equation*}
\pmb{L} = \begin{bmatrix} 
          1 & 1 & \cdots & 1 \\
          1 & 0 & \cdots & 0 \\
          0 & 1 & \cdots & 0 \\
          \vdots &  &  & \vdots \\
          0 & 0 & \cdots & 1
          \end{bmatrix},
\end{equation*} 
that generates the codeword 
\begin{equation*}
\pmb{c}^\tr \!=\! (c_1,\dots,c_{N-1}) \!=\! \pmb{x}^\tr\pmb{L} \!=\! (x_1+x_2,x_1+x_3,\dots,x_1+x_N).
\end{equation*} 
The receivers $\rx_1$ and $\rx_N$ can decode their demands by querying $c_1$ and $c_{N-1}$, respectively, and hence, $|R_1|=|R_N|=1$. For \mbox{$1 < i < N$}, receiver $\rx_i$ queries the symbols \mbox{$c_{i-1}=x_1+x_i$} and \mbox{$c_{i}=x_1 + x_{i+1}$}, and uses its side information $x_{i+1}$ to compute \mbox{$c_{i-1} - c_i - x_{i+1}$}, which equals its demand $x_i$. Hence, $|R_i| = 2$ for $1 < i < N$. Since $M=1$, the locality of each receiver $r_i=1$ if $i=1$ or $N$, and $r_i=2$ otherwise. The overall locality $r=2$ and the average locality $\ravg=2(N-1)/N$.

Since the graph $G$ is symmetric, for any choice of $i \in [N-1]$, the above coding scheme can be modified by an appropriate permutation of the rows of $\pmb{L}$ to allow receiver localities $r_i=1$ and $r_{i+1}=1$ and locality $r_j=2$ at all other receivers $j \neq i,i+1$. 
\end{example}

\section{Minimum-Locality Point\\ in the Locality-Rate Trade-off} \label{sec:min_locality}

The minimum possible locality for any valid index coding solution is \mbox{$r=1$}. Note that $r=1$ immediately implies $\ravg=1$ as well. 
In this section we will show that the optimal index coding scheme with locality $r=1$, considering non-linear schemes also, is the index code based on the fractional coloring of the \emph{interference graph} corresponding to the problem.
This explicitly identifies the value of the locality-rate trade-off function $\beta^*_G$ at $r=1$.
In~\cite{HaL_ISIT_12} it is remarked that the optimal broadcast rate with $r=1$ among scalar linear index codes (i.e., $M=1$, $\mathcal{A}$ is a finite field and the encoder is a linear transformation) is the chromatic number of the interference graph. % which is the special case $m=1$ in the above corollary.
Our results in this section generalize the result of~\cite{HaL_ISIT_12} by considering all valid index codes, including non-linear codes.
We first recall some graph-theoretic terminology related to index coding as used in~\cite{BKL_arxiv_10,SDL_ISIT_13}, state the main result of this section as Theorem~\ref{thm:frac_coloring} below and then prove the achievability and converse parts of this theorem in Sections~\ref{sec:r_1_achieve} and~\ref{sec:r_1_converse}, respectively.

The {\em underlying undirected side information graph} \mbox{$G_u=(\Vc,\Ec_u)$} of the side information graph $G=(\Vc,\Ec)$ is the graph with vertex set \mbox{$\Vc=[N]$} and an undirected edge set 
\begin{equation*}
\Ec_u=\left\{\,\{i,j\} \, | \, (i,j),\,(j,i) \in \Ec \right\}, 
\end{equation*} 
that is, $\{i,j\} \in \Ec_u$ if and only if \mbox{$i \in K_j$} and \mbox{$j \in K_i$}. 
The {\em interference graph} $\Gcpu=(\Vc,\Ecpu)$ is the undirected complement of the graph $G_u$, i.e., $\Ecpu=\left\{\, \{i,j\} \, | \, \{i,j\} \notin \Ec_u \right\}$. Note that 
\begin{equation} \label{eq:Ecpu}
\{i,j\} \in \Ecpu \text{ if and only if either } i \notin K_j \text{ or } j \notin K_i.
\end{equation} 

For positive integers $a$ and $b$, an \mbox{\em $a:b$ coloring} of the undirected graph $\Gcpu=(\Vc,\Ecpu)$ is a set $\{C_1,C_2,\dots,C_N\}$ of $N$ subsets $C_1,\dots,C_N \subseteq [a]$, such that $|C_1|=\cdots=|C_N|=b$ and $C_i \cap C_j = \phi$ if $\{i,j\} \in \Ecpu$. The elements of $[a]$ are {\em colors}, and each vertex of $\Gcpu$ is assigned $b$ colors such that no two adjacent vertices have any colors in common.
The {\em fractional chromatic number} $\chi_f$ of the undirected graph $\Gcpu$ is
\begin{equation*}
 \chi_f(\Gcpu) = \min \left\{\, \frac{a}{b} ~\Big\vert~ \text{an } a:b \text{ coloring of } \Gcpu \text{ exists} \right\}.
\end{equation*} 
The fractional chromatic number is a rational number and can be obtained as a solution to a linear program~\cite{godsil2013algebraic}. The \emph{chromatic number} $\chi(\Gcpu)$ of the graph $\Gcpu$ is the smallest integer $a$ such that an $a:1$ coloring of $\Gcpu$ exists. In general, we have $\chi_f(\Gcpu) \leq \chi(\Gcpu)$.
The main result of this section is

\begin{tcolorbox}[colback=white,boxsep=3pt,left=2pt,right=2pt,top=2pt,bottom=2pt]
\begin{theorem} \label{thm:frac_coloring}
For any single unicast index coding problem $G$, the optimal broadcast rate for locality $r=1$ is $\beta_G^*(1) = \chi_f(\Gcpu)$.
\end{theorem}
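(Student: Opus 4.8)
The plan is to prove the two inequalities $\beta_G^*(1) \le \chi_f(\Gcpu)$ and $\beta_G^*(1) \ge \chi_f(\Gcpu)$ separately.

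\emph{Achievability ($\beta_G^*(1) \le \chi_f(\Gcpu)$).} Let $a/b = \chi_f(\Gcpu)$ be achieved by an $a:b$ coloring $\{C_1,\dots,C_N\}$ of the interference graph, with each $C_i \subseteq [a]$, $|C_i| = b$, and $C_i \cap C_j = \phi$ whenever $\{i,j\} \in \Ecpu$. I would set the message length $M = b$ and the codeword length $\ell = a$, giving broadcast rate $\beta = a/b = \chi_f(\Gcpu)$. The idea is to assign to each color $k \in [a]$ one codeword symbol $c_k$, formed as a coordinate-wise sum (over $\Ac$, viewed as an abelian group, or simply a fixed injection on symbols) of the message components mapped to color $k$; concretely each receiver $\rx_i$ needs its $b$ demanded symbols placed, via a bijection $C_i \leftrightarrow [M]$, into the $b$ codeword symbols indexed by $C_i$. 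Because the coloring is proper on $\Gcpu$, any two receivers whose color sets overlap must be non-adjacent in $\Gcpu$, which by~\eqref{eq:Ecpu} means each has the other entirely in its side information; hence any "collision" of message symbols at a shared codeword coordinate involves only symbols known to the receiver, and it can cancel them out. Thus $\rx_i$ decodes $\xb_i$ by querying exactly the $b$ symbols $\cb_{C_i}$, so $|R_i| = b = M$ and $r_i = 1$ for every $i$, giving locality $r = 1$. This is essentially the standard fractional-coloring index code, and I would spell out the cancellation argument carefully since it is the only non-formal point.

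\emph{Converse ($\beta_G^*(1) \ge \chi_f(\Gcpu)$).} This is the harder direction and the main obstacle, because it must rule out all index codes, including non-linear ones, with $r = 1$. Suppose we have a valid index code with message length $M$, codeword length $\ell$, and locality $r = 1$, so $|R_i| \le M$ for all $i$; combined with~\eqref{eq:simple_bound_on_r}, which forces $|R_i| \ge M$, we get $|R_i| = M$ exactly, and moreover equality throughout~\eqref{eq:simple_bound_on_r} forces $H(\cb_{R_i} \mid \xb_{K_i}) = H(\cb_{R_i}) = M$, i.e.\ $\cb_{R_i}$ is independent of $\xb_{K_i}$ and has full entropy, and $\xb_i$ is a deterministic function of $\cb_{R_i}$ (since $I(\xb_i;\cb_{R_i}\mid\xb_{K_i}) = H(\xb_i)$ with $\cb_{R_i}$ independent of $\xb_{K_i}$ gives $H(\xb_i \mid \cb_{R_i}) \le H(\xb_i\mid\cb_{R_i},\xb_{K_i})=0$). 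The plan is then: (i) show that if $\{i,j\} \in \Ecpu$ then $R_i \cap R_j = \phi$ — if some coordinate $k$ lay in both, then $c_k$ would be a function of both $\cb_{R_i}$ (hence determined by $\xb_i$ together with independent randomness in $\cb_{R_i}$) and $\cb_{R_j}$, and one exploits that $\cb_{R_i}$ determines $\xb_i$ while being independent of $\xb_{K_i}$ together with the analogous fact for $j$, plus independence of $\xb_i$ and $\xb_j$, to derive a contradiction with an entropy count (this is where one uses $i \notin K_j$ or $j \notin K_i$); (ii) conclude that the sets $\{R_i\}_{i\in[N]}$, each of size $M$, contained in $[\ell]$, with $R_i \cap R_j = \phi$ whenever $\{i,j\} \in \Ecpu$, constitute an $\ell : M$ coloring of $\Gcpu$ (reading each coordinate of $[\ell]$ as a color), hence $\ell/M \ge \chi_f(\Gcpu)$, i.e.\ $\beta \ge \chi_f(\Gcpu)$; (iii) take the infimum over all such codes to get $\beta_G^*(1) \ge \chi_f(\Gcpu)$.

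The delicate point is step (i): the naive argument "$c_k$ is a function of $\xb_i$ and a function of $\xb_j$, which are independent, so $c_k$ is constant" is false because $\cb_{R_i}$ is only guaranteed to \emph{determine} $\xb_i$, not be determined by it. The correct route is via entropies of the coordinate subsets: from $H(\cb_{R_i}) = M = |R_i|$ each symbol $c_k$, $k\in R_i$, is uniform and the symbols in $R_i$ are mutually independent; if $k \in R_i \cap R_j$ then $H(c_k \mid \xb_{K_i}) = 1$ and $H(c_k \mid \xb_{K_j}) = 1$, and — using that $\cb_{R_i}$ determines $\xb_i$ so $H(\xb_i \mid \cb_{R_i \setminus \{k\}}, c_k) = 0$ while $H(\xb_i \mid \cb_{R_i\setminus\{k\}}) \ge \dots$ — one shows $c_k$ must carry information about $\xb_i$ and about $\xb_j$ in a way incompatible with $\xb_i \perp \xb_j$ unless $\{i,j\} \in \Ec_u$ (equivalently $\{i,j\} \notin \Ecpu$). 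I would carry this out with a short, self-contained entropy calculation, and I expect this to be where the real work of the theorem lies; everything else is bookkeeping around the definitions of $\chi_f$ and the coloring.
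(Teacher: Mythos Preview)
Your overall architecture matches the paper exactly: the achievability via the fractional-coloring code is correct, and the converse reduces to showing that $R_i \cap R_j = \phi$ whenever $\{i,j\} \in \Ecpu$, after which the $R_i$'s form an $\ell:M$ coloring of $\Gcpu$. Steps (ii) and (iii) of your converse are fine.

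The gap is in step (i). Your claimed fact ``$\xb_i$ is a deterministic function of $\cb_{R_i}$'' is false, and the justification you give, $H(\xb_i \mid \cb_{R_i}) \le H(\xb_i \mid \cb_{R_i},\xb_{K_i})$, has the inequality reversed (conditioning reduces entropy). A concrete counterexample: $N=2$, $M=1$, $K_1=\{2\}$, $K_2=\phi$, code $c_1=x_1+x_2$, $c_2=x_2$, with $R_1=\{1\}$. Here $r_1=1$ but $x_1$ is not determined by $c_1$ alone. Your subsequent sketch (``$c_k$ must carry information about $\xb_i$ and about $\xb_j$ incompatible with $\xb_i\perp\xb_j$'') rests on this false premise and never becomes a proof.

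The paper's fix is to use the \emph{opposite} direction: from $|R_i|=M$ and the chain $M=I(\xb_i;\cb_{R_i}\mid\xb_{K_i})=H(\cb_{R_i}\mid\xb_{K_i})-H(\cb_{R_i}\mid\xb_i,\xb_{K_i})$ with $H(\cb_{R_i}\mid\xb_{K_i})\le M$, one gets $H(\cb_{R_i}\mid\xb_i,\xb_{K_i})=0$, i.e.\ $\cb_{R_i}$ is a function of $(\xb_i,\xb_{K_i})$. A second consequence of the same chain, applied with an arbitrary extra conditioning set $P\not\ni i$, is that $H(\cb_{R_i}\mid\xb_P)=M$, i.e.\ $\cb_{R_i}$ is uniform given any messages that exclude $\xb_i$. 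Now suppose $j\notin K_i$ and $t\in R_i\cap R_j$. Take $P=\{i\}\cup K_i$; then $c_t$ is determined by $\xb_P$ (it is a coordinate of $\cb_{R_i}$), so $H(c_t\mid\xb_P)=0$. But $j\notin P$, so applying the uniformity fact to receiver $j$ gives $H(\cb_{R_j}\mid\xb_P)=M$, hence every coordinate of $\cb_{R_j}$, in particular $c_t$, is uniform given $\xb_P$, so $H(c_t\mid\xb_P)=1$. This is the contradiction. The missing idea in your attempt is precisely this reversal: you should be proving that $\cb_{R_i}$ is \emph{determined by} $(\xb_i,\xb_{K_i})$, not that it \emph{determines} $\xb_i$.
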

\end{tcolorbox}

\subsection{Proof of Theorem~\ref{thm:frac_coloring}: Achievability Part} \label{sec:r_1_achieve}

It is well known that there exists a coding scheme, which is known as the {\em fractional clique covering}  or the {\em fractional coloring} solution, for any index coding problem $G$ with broadcast rate $\beta=\chi_f(\Gcpu)$, for example, see~\cite{BKL_arxiv_10}. 
% These known schemes do not consider the parameter $r$, or equivalently, they assume $r=\beta$. 
It is straightforward to observe that $r=1$ for this coding scheme.
% ; see Appendix~\ref{app:coloring_achievability} for details. 
It then follows that $\beta_G^*(1) \leq \chi_f(\Gcpu)$. 
For completeness, we now recall the fractional coloring solution and deduce that $r=1$ for this scheme.

Let $\chi_f(\Gcpu)=a/b$ for integers $a$ and $b$, and let $C_1,\dots,C_N \subseteq \{1,\dots,a\}$ be an $a:b$ coloring of $\Gcpu$. Set the codeword length \mbox{$\ell=a$} and message length \mbox{$M=b$}. Denote the components of the message vectors $\xb_i \in \Ac^M$ using the variables $w_{i,t} \in \Ac$ as follows: 
% \begin{equation*}
$\xb_i = \left( w_{i,t}, t \in C_i \right)$, i.e., one message symbol $w_{i,t}$ corresponding to each color $t$ in the set $C_i$.
% \end{equation*} 
Endow the set $\Ac$ with any abelian group structure $(\Ac,+)$.
The symbols of the codeword $\cb=(c_1,\dots,c_\ell)$ are generated as 
% \begin{equation*}
$c_t = \sum_{i:\, t \in C_i} w_{i,t}$, for $t \in [\ell]$.
% \end{equation*} 
% 
Decoding at $\rx_i$ is performed as follows. Note that $\xb_i$ is composed of all symbols $w_{i,t}$ such that $t \in C_i$. In order to decode $w_{i,t}$, the receiver retrieves the code symbol $c_t$ which is related to $w_{i,t}$ as 
\begin{equation} \label{eq:frac_color_ach}
 c_t = w_{i,t} + \sum_{\substack{j \neq i \\ j: \, t \in C_j}} w_{j,t}.
\end{equation} 
For any choice of the index $j$ in the summation above, we have $i \neq j$ and $t \in C_i \cap C_j$. Since $C_1,\dots,C_N$ is a coloring of $\Gcpu$ and $C_i \cap C_j \neq \phi$, we deduce that $\{i,j\} \notin \Ecpu$, or equivalently, $i \in K_j$ and $j \in K_i$. Hence, for each $j \neq i$ such that $t \in C_j$, $\rx_i$ knows the value of $w_{j,t}$, and thus, can recover $w_{i,t}$ from $c_t$ using~\eqref{eq:frac_color_ach}. Using a similar procedure $\rx_i$ can decode all the $M$ symbols in $\xb_i$ from the $M$ coded symbols $(c_t,t \in C_i)$. This decoding method uses $R_i=C_i$, and hence, $r_i=|R_i|/M=1$ for all $i \in [N]$, implying $r=1$.

\subsection{Proof of Theorem~\ref{thm:frac_coloring}: Converse Part} \label{sec:r_1_converse}

The key result in the proof of the converse is Lemma~\ref{lem:r_1_R_non_intersecting}, which states that $R_i \cap R_j = \phi$ for all $i,j$ such that $\{i,j\} \in \Ecpu$. Before providing the proof of the converse, we will give the intuition behind Lemma~\ref{lem:r_1_R_non_intersecting}. The idea behind Lemma~\ref{lem:r_1_R_non_intersecting} is based on the analysis of scalar linear codes over $\Fb_2$ with locality $1$; see also~\cite{HaL_ISIT_12}. 
Note that each codeword symbol $c_k$ of a binary scalar linear index code is a sum of a subset $S_k$ of the scalars $x_1,\dots,x_N \in \Fb_2$. Since the locality is $1$, each receiver $\rx_i$ observes exactly one codeword symbol, say $c_{f(i)}$. This implies that the set $S_{f(i)}$ must contain $x_i$, and any other symbol present in $S_{f(i)}$ must be available as side information at $\rx_i$. Now, if $j \notin K_i$, i.e., $\{i,j\} \in \Ecpu$, then necessarily $x_j \notin S_{f(i)}$. On the contrary, the code symbol $c_{f(j)}$ observed by $\rx_j$ will satisfy the property that $x_j \in S_{f(j)}$. We conclude that the code symbols $c_{f(i)}$ and $c_{f(j)}$ observed by $\rx_i$ and $\rx_j$, respectively, are necessarily distinct. Thus $R_i \cap R_j =\{f(i)\} \cap \{f(j)\} =  \phi$.
Lemma~\ref{lem:r_1_R_non_intersecting} extends this result to all general index codes with locality $1$.

\emph{Proof of the converse:} Consider any valid index code, possibly non-linear, for $G$ with locality \mbox{$r=1$}, message length $M$ and code length $\ell$.
We assume that $\xb_1,\dots,\xb_N$ are independent and uniformly distributed over $\Ac^M$. Let $\cb$ be the codeword generated by the index code. 
% We will now show that the broadcast rate of this index code is at least $\chi_f(\Gcpu)$.
% 
Since $r=1$, from~\eqref{eq:defn_r} % , we have $|R_i| \leq m$ for all $i \in [N]$. Together with
and~\eqref{eq:simple_bound_on_r}, we deduce that $|R_1|=\cdots=|R_N|=M$ for this valid index code.

\begin{lemma} \label{lem:basic_lemma_1}
For any $i \in [N]$ and any $P \subset [N]$ such that $i \notin P$, we have 
{(i)} $I(\cb_{R_i};\xb_i|\xb_{K_i \cup P})=M$; 
{(ii)} $H(\cb_{R_i}|\xb_i,\xb_{K_i})=0$; and
{(iii)} $H(\cb_{R_i}|\xb_P)=M$.

\end{lemma}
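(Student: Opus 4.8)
The plan is to prove the three claims in sequence, using the basic decoding constraint at $\rx_i$ together with the fact that $\xb_1,\dots,\xb_N$ are independent and uniform over $\Ac^M$ and that $|R_i|=M$. Throughout, recall that for a valid index code with locality $r=1$ we have $|R_1|=\cdots=|R_N|=M$, so $H(\cb_{R_i})\leq |R_i|=M$, and that the decoding requirement gives $H(\xb_i\,|\,\cb_{R_i},\xb_{K_i})=0$.

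For part (ii), I would argue that $\cb_{R_i}$ is a deterministic function of $(\xb_1,\dots,\xb_N)$, hence certainly of $(\xb_i,\xb_{K_i},\xb_{[N]\setminus(\{i\}\cup K_i)})$. The subtlety is that a priori $\cb_{R_i}$ could depend on messages outside $\{i\}\cup K_i$. To rule this out, combine the decodability $H(\xb_i\,|\,\cb_{R_i},\xb_{K_i})=0$ with the entropy budget: from the chain of inequalities $M=H(\xb_i)=I(\xb_i;\cb_{R_i}\,|\,\xb_{K_i})\leq H(\cb_{R_i}\,|\,\xb_{K_i})\leq H(\cb_{R_i})\leq M$ (using $\xb_i\perp\xb_{K_i}$ and $H(\xb_i\,|\,\cb_{R_i},\xb_{K_i})=0$), every inequality is tight. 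In particular $H(\cb_{R_i}\,|\,\xb_{K_i})=M=H(\cb_{R_i}\,|\,\xb_i,\xb_{K_i})+I(\cb_{R_i};\xb_i\,|\,\xb_{K_i})$ and $I(\cb_{R_i};\xb_i\,|\,\xb_{K_i})=M$, forcing $H(\cb_{R_i}\,|\,\xb_i,\xb_{K_i})=0$, which is exactly (ii). This same tightness chain gives part (i) in the case $P=\phi$: $I(\cb_{R_i};\xb_i\,|\,\xb_{K_i})=M$.

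For part (iii), I would use (ii) plus a conditioning/independence argument. Since $\xb_P$ is independent of $(\xb_i,\xb_{K_i})$ when... actually $P$ may intersect $K_i$, so write instead: $H(\cb_{R_i}\,|\,\xb_P)\geq H(\cb_{R_i}\,|\,\xb_{[N]\setminus\{i\}})$. Now $H(\cb_{R_i}\,|\,\xb_{[N]\setminus\{i\}})=H(\cb_{R_i}\,|\,\xb_{K_i},\xb_{[N]\setminus(\{i\}\cup K_i)})$, and combining with (ii) — $\cb_{R_i}$ is a function of $(\xb_i,\xb_{K_i})$ — gives $H(\cb_{R_i}\,|\,\xb_{[N]\setminus\{i\}})=H(\xb_i\,|\,\xb_{[N]\setminus\{i\}})$ restricted to the image, which equals $I(\cb_{R_i};\xb_i\,|\,\xb_{[N]\setminus\{i\}})$... let me instead take the cleaner route: $H(\cb_{R_i}\,|\,\xb_P)\leq M$ always; for the reverse, $H(\cb_{R_i}\,|\,\xb_P)\geq H(\cb_{R_i}\,|\,\xb_P,\xb_{K_i\setminus P})=H(\cb_{R_i}\,|\,\xb_{K_i},\xb_{P\setminus K_i})\geq I(\cb_{R_i};\xb_i\,|\,\xb_{K_i},\xb_{P\setminus K_i})$, and since $\xb_i\perp(\xb_{K_i},\xb_{P\setminus K_i})$ as $i\notin K_i\cup P$, this mutual information equals $H(\xb_i)-H(\xb_i\,|\,\cb_{R_i},\xb_{K_i},\xb_{P\setminus K_i})=M-0=M$ by decodability from $(\cb_{R_i},\xb_{K_i})$. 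Hence $H(\cb_{R_i}\,|\,\xb_P)=M$, which is (iii). Finally, part (i) for general $P$: $M\geq H(\cb_{R_i}\,|\,\xb_{K_i\cup P})\geq I(\cb_{R_i};\xb_i\,|\,\xb_{K_i\cup P})=H(\xb_i)-H(\xb_i\,|\,\cb_{R_i},\xb_{K_i\cup P})=M-0=M$ using $\xb_i\perp\xb_{K_i\cup P}$ and decodability, forcing equality $I(\cb_{R_i};\xb_i\,|\,\xb_{K_i\cup P})=M$.

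The main obstacle is bookkeeping the independence conditions carefully — $P$ is an arbitrary subset not containing $i$, so it may overlap $K_i$, and one must be attentive that $\xb_i$ remains independent of whatever is being conditioned on (which holds precisely because $i\notin K_i\cup P$). Once that is handled, everything reduces to the standard facts that conditioning cannot increase entropy, that $H(\cb_{R_i})\leq M$ by the locality bound, and that decodability kills the conditional entropy of $\xb_i$ given $(\cb_{R_i},\xb_{K_i})$. I do not anticipate any genuinely hard step; the content is entirely in exploiting the equality case of the locality-$1$ entropy chain.
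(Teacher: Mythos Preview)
Your argument is correct and uses essentially the same ingredients as the paper: decodability $H(\xb_i\,|\,\cb_{R_i},\xb_{K_i})=0$, independence of $\xb_i$ from $\xb_{K_i\cup P}$ since $i\notin K_i\cup P$, and the cap $H(\cb_{R_i})\leq |R_i|=M$, combined in a squeeze argument. The only cosmetic difference is the order: the paper proves (i) first for general $P$ via $I(\cb_{R_i};\xb_i\,|\,\xb_{K_i\cup P})=H(\xb_i\,|\,\xb_{K_i\cup P})-H(\xb_i\,|\,\cb_{R_i},\xb_{K_i\cup P})=M-0$, then deduces $H(\cb_{R_i}\,|\,\xb_{K_i\cup P})=M$ and $H(\cb_{R_i}\,|\,\xb_i,\xb_{K_i\cup P})=0$ in one stroke, obtaining (ii) by taking $P=\phi$ and (iii) from the chain $M\geq H(\cb_{R_i}\,|\,\xb_P)\geq H(\cb_{R_i}\,|\,\xb_{K_i\cup P})=M$; you instead establish (ii) first via the $P=\phi$ case, then (iii), then general (i), arriving at the same conclusions by the same inequalities.
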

\begin{proof}
We first observe that
% \begin{equation*}
$I(\cb_{R_i};\xb_i|\xb_{K_i \cup P})=H(\xb_i|\xb_{K_i \cup P}) - H(\xb_i|\cb_{R_i},\xb_{K_i \cup P})$. 
% \end{equation*} 
Since \mbox{$i \notin K_i \cup P$}, $\xb_i$ is independent of $\xb_{K_i \cup P}$. Also, $\xb_i$ can be decoded using $\cb_{R_i}$ and $\xb_{K_i}$. Hence, $H(\xb_i|\xb_{K_i \cup P})=M$ and $H(\xb_i|\cb_{R_i},\xb_{K_i \cup P})=0$, thereby proving part~\emph{(i)}.

Using the result in part~\emph{(i)} and decomposing the mutual information term $I(\cb_{R_i};\xb_i|\xb_{K_i \cup P})$, we have
\begin{align} \label{eq:key_for_basic_lemma_0}
M = H(\cb_{R_i}|\xb_{K_i \cup P}) - H(\cb_{R_i}|\xb_i,\xb_{K_i \cup P}). 
\end{align} 
Since $\cb_{R_i}$ is a length $M$ vector, we have $H(\cb_{R_i}|\xb_{K_i \cup P}) \leq M$. % with equality if and only if $\cb_{R_i}$ is uniformly distributed in $\Ac^m$ and $\cb_{R_i}$ is independent of $\xb_{K_i \cup P}$. 
Also, $H(\cb_{R_i}|\xb_i,\xb_{K_i \cup P}) \geq 0$. % with equality if and only if $\cb_{R_i}$ is a deterministic function of $(\xb_i,\xb_{K_i \cup P})$. 
Considering these facts together with~\eqref{eq:key_for_basic_lemma_0}, we deduce that
% Since $H(\cb_{R_i}|\xb_{K_i \cup P}) - H(\cb_{R_i}|\xb_i,\xb_{K_i \cup P})=m$, we necessarily have 
\begin{equation} \label{eq:key_for_basic_lemma_1}
H(\cb_{R_i}|\xb_i,\xb_{K_i \cup P})=0 \text{ and } H(\cb_{R_i}|\xb_{K_i \cup P}) = M.
\end{equation} 
Observe that~\eqref{eq:key_for_basic_lemma_1} holds for any choice of $P$ such that $i \notin P$. Choosing \mbox{$P=\phi$} in the first equality in~\eqref{eq:key_for_basic_lemma_1} proves part~\emph{(ii)} of this lemma.
% the first equality above reduces to $H(\cb_{R_i}|\xb_i,\xb_{K_i})=0$. 
% 
Now using the fact that $\cb_{R_i}$ is of length $M$, and the second equality in~\eqref{eq:key_for_basic_lemma_1}, we have
\begin{align*}
 M \geq H(\cb_{R_i}) \geq H(\cb_{R_i}|\xb_P) \geq H(\cb_{R_i}|\xb_{K_i \cup P}) = M.
\end{align*} 
This shows that $H(\cb_{R_i}|\xb_P)=M$, proving part~\emph{(iii)}.
\end{proof}

\begin{lemma} \label{lem:r_1_R_non_intersecting}
For any $\{i,j\} \in \Ecpu$, we have $R_i \cap R_j = \phi$.
\end{lemma}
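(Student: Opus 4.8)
The plan is to argue by contradiction: suppose $\{i,j\} \in \Ecpu$ but there exists some index $k \in R_i \cap R_j$. By the definition of the interference graph in~\eqref{eq:Ecpu}, $\{i,j\} \in \Ecpu$ means that $i \notin K_j$ or $j \notin K_i$; without loss of generality assume $j \notin K_i$, so that $j \in P := [N] \setminus (K_i \cup \{i\})$ and in particular $i \notin P \supseteq \{j\}$. The key quantitative fact I will exploit is that since $r = 1$, each $\cb_{R_i}$ is a length-$M$ vector, and Lemma~\ref{lem:basic_lemma_1}(iii) tells us $H(\cb_{R_i} \mid \xb_P) = M$; since $H(\cb_{R_i}) \le M$ as well, this forces $\cb_{R_i}$ to be independent of $\xb_P$ and to have full entropy $M$. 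The same argument applied with the roles of $i$ and $j$ swapped (using part~(ii), $H(\cb_{R_j} \mid \xb_j, \xb_{K_j}) = 0$, so $\cb_{R_j}$ is a deterministic function of $(\xb_j, \xb_{K_j})$) will give us a structural handle on $\cb_{R_j}$: it is determined by messages in $\{j\} \cup K_j$, and it must "contain information about" $\xb_j$ in the sense that $I(\cb_{R_j}; \xb_j \mid \xb_{K_j}) = M$.

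The heart of the argument is to look at the single shared coordinate $c_k$ with $k \in R_i \cap R_j$ and derive a contradiction from these two incompatible descriptions. Concretely, I would proceed as follows. Since $\cb_{R_i}$ is a length-$M$ vector of entropy $M$ over the alphabet $\Ac$, each of its coordinates has entropy exactly $1$ and the coordinates are "jointly independent" in the strong sense that $H(\cb_{R_i}) = \sum_{t \in R_i} H(c_t) = M$; moreover $\cb_{R_i}$ is independent of $\xb_P$, hence so is its subcollection, hence $c_k$ is independent of $\xb_P \supseteq \xb_{\{j\}}$, i.e., $H(c_k \mid \xb_j) = H(c_k) = 1$. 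On the other side, I want to show $c_k$ cannot be independent of $\xb_j$ given $\xb_{K_j}$. Here is the mechanism: from $\rx_j$'s perspective, $\cb_{R_j}$ together with $\xb_{K_j}$ determines $\xb_j$, and removing any one coordinate $c_k$ leaves the remaining $|R_j| - 1 = M - 1$ coordinates plus $\xb_{K_j}$, which can carry at most $M - 1$ symbols of information about $\xb_j$; since $I(\cb_{R_j}; \xb_j \mid \xb_{K_j}) = M$ by Lemma~\ref{lem:basic_lemma_1}(i) with $P = \phi$, the chain rule forces the dropped coordinate $c_k$ to contribute strictly positively, namely $I(c_k; \xb_j \mid \xb_{K_j}, \cb_{R_j \setminus \{k\}}) \ge 1$ (each term is at most $1$ and there are $M$ of them summing to $M$, so every term equals exactly $1$). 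In particular $c_k$ is not conditionally independent of $\xb_j$ given $(\xb_{K_j}, \cb_{R_j \setminus \{k\}})$.

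The step I expect to be the main obstacle is cleanly bridging "$c_k$ is independent of $\xb_j$ unconditionally" (from the $\rx_i$ side, using that $j \in P$) and "$c_k$ carries one full symbol about $\xb_j$ given side information" (from the $\rx_j$ side) — these are not literally contradictory without care, since conditioning can create dependence. To close this gap I would instead compare $H(c_k \mid \xb_{K_j})$ computed two ways, or more robustly, run the whole entropy bookkeeping on the combined decoder $\rx_i$: the point is that $\xb_j$ is among the messages in $P$ that $\cb_{R_i}$ is independent of, so $c_k$ (a coordinate of $\cb_{R_i}$) satisfies $I(c_k; \xb_j \mid \xb_{K_i}) = 0$ as well, because $\cb_{R_i}$ is independent of $\xb_P$ and one can add the conditioning $\xb_{K_i}$ using that $(\xb_i, \xb_{K_i}, \xb_P)$ are jointly independent — $\rx_i$'s side information cannot correlate $c_k$ with $\xb_j$. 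Then $I(c_k; \xb_j \mid \xb_{K_j})$: I use $j \notin K_i$ to relate $K_j$ and $K_i$ via $P$, but in fact the cleanest route is to note $c_k \in R_j$, so $c_k$ is a function of $(\xb_j, \xb_{K_j})$ by part~(ii), and being independent of $\xb_{K_i \cup \{j\}}$ (as a coordinate of $\cb_{R_i}$, by part~(iii) with $P = K_i \cup \{j\} \setminus \{i\}$... adjusting so $i \notin P$) forces $c_k$ to be a function of $\xb_{K_j \setminus K_i}$ that is independent of $\xb_j$ — then removing it from $\cb_{R_j}$ costs nothing, contradicting that every coordinate of $\cb_{R_j}$ is essential. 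I would verify the index-set manipulation $\{j\} \cup (K_i \setminus \{i\})$ avoids $i$ carefully, and present the final contradiction as: $c_k$ essential for $\rx_j$ (carries a symbol about $\xb_j$) yet $c_k$ independent of $\xb_j$ — which is the genuine contradiction once the conditioning is handled via the joint independence of the message blocks.
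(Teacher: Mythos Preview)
Your proposal contains the right ingredients --- parts~\emph{(ii)} and~\emph{(iii)} of Lemma~\ref{lem:basic_lemma_1} applied to the shared coordinate $c_k$ --- but you have assigned them to the wrong receivers relative to the WLOG assumption $j \notin K_i$, and this creates a genuine gap.

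You apply part~\emph{(iii)} to $\rx_i$ (to get $c_k$ independent of $\xb_j$) and part~\emph{(ii)} to $\rx_j$ (to get $c_k$ a function of $\xb_{\{j\}\cup K_j}$). To close the argument via your ``cleanest route'' you need $H(c_k\mid \xb_{\{j\}\cup K_j})=1$ from part~\emph{(iii)} applied to $\rx_i$ with $P=\{j\}\cup K_j$; but this requires $i\notin P$, i.e.\ $i\notin K_j$, which is \emph{not} guaranteed by $\{i,j\}\in\Ecpu$ once you have already used the WLOG freedom to assume $j\notin K_i$. The alternative patch you sketch --- ``$c_k$ is a function of $\xb_{\{j\}\cup K_j}$ and independent of $\xb_{K_i\cup\{j\}}$, hence a function of $\xb_{K_j\setminus K_i}$'' --- is false in general: for independent uniform bits $x_1,x_2$, the XOR $c=x_1\oplus x_2$ is a function of $\xb_{\{1,2\}}$ and independent of $x_2$, yet not a function of $x_1$ alone. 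Consequently the final contradiction (``$c_k$ independent of $\xb_j$'' versus ``$c_k$ essential for $\rx_j$'') does not go through, because conditioning on $\cb_{R_j\setminus\{k\}}$ can create dependence between $c_k$ and $\xb_j$ even when they are unconditionally independent.

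The paper's proof simply swaps the assignment: set $P=\{i\}\cup K_i$, apply part~\emph{(ii)} to $\rx_i$ to get $H(c_t\mid\xb_P)=0$, and apply part~\emph{(iii)} to $\rx_j$ with this same $P$ (valid precisely because $j\neq i$ and $j\notin K_i$) to get $H(c_t\mid\xb_P)=1$. The contradiction is then immediate, with no conditioning subtleties. Your difficulty dissolves once the roles of $i$ and $j$ in the two lemma applications are exchanged to match the direction of the WLOG assumption.
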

\begin{proof}
Using~\eqref{eq:Ecpu}, we will assume without loss of generality that $j \notin K_i$. We will now assume that $R_i \cap R_j \neq \phi$ and prove the lemma by contradiction. Let \mbox{$t \in R_i \cap R_j$} and \mbox{$P=\{i\} \cup K_i$}. 
From part~\emph{(ii)} of Lemma~\ref{lem:basic_lemma_1}, we have $H(\cb_{R_i}|\xb_i,\xb_{K_i})=0$. In particular, since $t \in R_i$, we have 
\begin{equation} \label{eq:lem:contradiction_1}
H(c_t|\xb_i,\xb_{K_i})=H(c_t|\xb_P)=0.
\end{equation} 

Note that \mbox{$j \notin P$} since \mbox{$j \neq i$} and \mbox{$j \notin K_i$}.
From part~\emph{(iii)} of Lemma~\ref{lem:basic_lemma_1}, we observe that $H(\cb_{R_j}|\xb_P)=M$. This implies that for any given realization of $\xb_P$, the vector $\cb_{R_j}$ is uniformly distributed over $\Ac^M$. Hence, the $M$ coordinates of $\cb_{R_j}$ are independent and uniformly distributed over $\Ac$. Since $t \in R_j$, we conclude that for any given realization of $\xb_P$, $c_t$ is uniformly distributed over $\Ac$, and hence, \mbox{$H(c_t|\xb_P)=1$} which contradicts~\eqref{eq:lem:contradiction_1}. 
\end{proof}

\begin{lemma} \label{lem:r_1_converse}
For any valid index coding scheme for $G$ with $r=1$, the broadcast rate $\beta \geq \chi_f(\Gcpu)$.
\end{lemma}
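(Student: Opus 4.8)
The plan is to show that the sets $R_1,\dots,R_N \subseteq [\ell]$, together with the normalization $|R_i|=M$ for all $i$ established above, yield a fractional coloring of the interference graph $\Gcpu$ with ratio $\ell/M = \beta$. Since $\beta_G^*(1)$ is an infimum over all valid index codes with $r=1$, and $\chi_f(\Gcpu)$ is the minimum of $a/b$ over all valid fractional colorings, this immediately gives $\beta \ge \chi_f(\Gcpu)$ for every such code.

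First I would recall what has already been proved: for any valid index code with $r=1$, part (i) of the preamble's discussion gives $|R_i|=M$ for every $i \in [N]$, and Lemma~\ref{lem:r_1_R_non_intersecting} gives $R_i \cap R_j = \phi$ whenever $\{i,j\} \in \Ecpu$. Now set $a = \ell$, $b = M$, and take the colors to be the codeword coordinates $[\ell]$, assigning to vertex $i$ the color set $C_i = R_i \subseteq [\ell]$. By the normalization, $|C_i| = M = b$ for all $i$; by Lemma~\ref{lem:r_1_R_non_intersecting}, $C_i \cap C_j = \phi$ for every edge $\{i,j\} \in \Ecpu$. Hence $\{C_1,\dots,C_N\}$ is a valid $\ell : M$ coloring of $\Gcpu$, so by the definition of the fractional chromatic number $\chi_f(\Gcpu) \le \ell/M = \beta$.

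There is essentially no obstacle here: the lemma is a short corollary of Lemma~\ref{lem:r_1_R_non_intersecting} plus the $|R_i|=M$ normalization, and all the real work — extending the binary scalar-linear intuition to arbitrary non-linear codes via the entropy argument — has already been carried out in Lemmas~\ref{lem:basic_lemma_1} and~\ref{lem:r_1_R_non_intersecting}. The only point worth stating carefully is that $a$ and $b$ in the definition of $a{:}b$ coloring are required to be positive integers, which is automatic since $\ell$ and $M$ are positive integers. Combined with the achievability argument of Section~\ref{sec:r_1_achieve}, which exhibits a scheme with $r=1$ and $\beta = \chi_f(\Gcpu)$, this establishes $\beta_G^*(1) = \chi_f(\Gcpu)$ and completes the proof of Theorem~\ref{thm:frac_coloring}.
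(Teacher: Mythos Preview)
Your proposal is correct and follows essentially the same argument as the paper: use $|R_i|=M$ together with Lemma~\ref{lem:r_1_R_non_intersecting} to recognize $\{R_1,\dots,R_N\}$ as an $\ell{:}M$ coloring of $\Gcpu$, and conclude $\beta = \ell/M \geq \chi_f(\Gcpu)$.
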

\begin{proof}
% We show that every index coding scheme for $G$ induces a fractional coloring on $\Gcpu$. 
From Lemma~\ref{lem:r_1_R_non_intersecting}, the subsets $R_1,\dots,R_N \subseteq [\ell]$ are such that $R_i \cap R_j =\phi$ if $\{i,j\} \in \Ecpu$ and $|R_i|=M$ for all $i \in [N]$. Hence, $\{R_1,\dots,R_N\}$ is an \mbox{$\ell:M$} coloring of $\Gcpu$. Consequently, the broadcast rate
% \begin{equation*}
$\beta = {\ell}/{M} \geq \chi_f(\Gcpu)$.
% \end{equation*} 
\end{proof}

Combining the converse result in Lemma~\ref{lem:r_1_converse} with the achievability result in Section~\ref{sec:r_1_achieve}, we arrive at Theorem~\ref{thm:frac_coloring}.
% 
% \subsection{Corollary and remarks}
% 
Theorem~\ref{thm:frac_coloring} can be easily generalized to the case where the message length $M$ is fixed.

\begin{corollary}
The optimal broadcast rate for index coding problem $G$ with locality $r=1$ and message length $M$ is 
\begin{equation*}
\min \left\{ \frac{a}{M} \, \Big\vert \, \text{an } a:M \text{ coloring of } \Gcpu \text{ exists} \,\right\}.
\end{equation*}
In particular, the optimum rate for \mbox{$M=r=1$} is $\chi(\Gcpu)$. 
\end{corollary}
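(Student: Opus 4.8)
The plan is to re-run the argument that established Theorem~\ref{thm:frac_coloring}, observing that every intermediate lemma there was already proved for an \emph{arbitrary} valid index code with $r=1$ and a fixed message length $M$, so no additional work is required to specialize to fixed $M$. Concretely, I would split the proof into the usual achievability and converse halves.

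\emph{Achievability.} Suppose an $a:M$ coloring $C_1,\dots,C_N \subseteq [a]$ of $\Gcpu$ exists. Apply the fractional coloring construction of Section~\ref{sec:r_1_achieve} verbatim with code length $\ell = a$ and message length $M$; note that this construction already produces messages of length exactly the color-class size, which is now $M$, so no blocklength rescaling is needed. The resulting code is valid for $G$, uses $R_i = C_i$, hence $r_i = |R_i|/M = 1$ for every $i$, and has broadcast rate $\beta = a/M$. Taking the infimum over all admissible $a$ shows the optimal rate at $(r,M)=(1,M)$ is at most $\min\{a/M \mid \text{an } a:M \text{ coloring of } \Gcpu \text{ exists}\}$.

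\emph{Converse.} Conversely, take any valid index code for $G$ with locality $r=1$ and message length $M$, and let $\ell$ be its code length. As in the proof of Lemma~\ref{lem:r_1_converse}, the bound~\eqref{eq:simple_bound_on_r} together with $|R_i|/M \le r = 1$ forces $|R_i| = M$ for all $i$, and Lemma~\ref{lem:r_1_R_non_intersecting} gives $R_i \cap R_j = \phi$ whenever $\{i,j\} \in \Ecpu$. Hence $\{R_1,\dots,R_N\}$ is an $\ell:M$ coloring of $\Gcpu$, so $\ell$ is one of the values $a$ appearing in the minimization, and $\beta = \ell/M \ge \min\{a/M \mid \text{an } a:M \text{ coloring of } \Gcpu \text{ exists}\}$. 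Combining the two halves yields the claimed equality. For the final assertion, when $M=1$ an $a:1$ coloring is exactly a proper vertex coloring of $\Gcpu$ with $a$ colors, so the minimization collapses to $\min\{a \mid \Gcpu \text{ is } a\text{-colorable}\} = \chi(\Gcpu)$.

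\emph{Main obstacle.} There is essentially no new difficulty: the fractional coloring construction, the entropy identities of Lemma~\ref{lem:basic_lemma_1}, and the disjointness statement of Lemma~\ref{lem:r_1_R_non_intersecting} were all derived for a generic valid index code with $r=1$, without ever assuming that $M$ was chosen to minimize the rate. The only point that deserves a line of care is to confirm that the Section~\ref{sec:r_1_achieve} construction delivers message length exactly equal to the common color-class size, so that prescribing that size to be the given $M$ is immediate and no time-sharing or blocklength manipulation is needed.
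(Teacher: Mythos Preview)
Your proposal is correct and follows essentially the same approach as the paper: achievability via the fractional coloring construction of Section~\ref{sec:r_1_achieve} specialized to color-class size $M$, converse via Lemma~\ref{lem:r_1_R_non_intersecting} identifying $\{R_1,\dots,R_N\}$ as an $\ell:M$ coloring of $\Gcpu$, and the $M=1$ case reducing to the chromatic number. The paper's proof is in fact a terser version of exactly what you wrote.
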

\begin{proof}
The achievability result is similar to the arguments used in Section~\ref{sec:r_1_achieve} with the additional restriction that the subsets of colors $C_1,\dots,C_N$ are all of size $M$. Converse follows by recognizing that the set of subsets $\{R_1,\dots,R_N\}$ is an $\ell:M$ coloring of $\Gcpu$.
When $M=1$, the smallest $\ell$ for which an $\ell:1$ coloring of $\Gcpu$ exists is the chromatic number of $\Gcpu$.
\end{proof}

%%%%%%%%%%%%%%%%%%%%%%%%%%%%%%%%%%

\section{Vector Linear Codes for Directed Cycles} \label{sec:directed_cycles}

In this section, we will first derive some properties of locally decodable (vector) linear index codes (codes for any message length $M \geq 1$) in Section~\ref{sec:linear_structure} for arbitrary index coding problems.
% 
% and then analyze scalar linear codes (message length $M=1$) specifically in Section~\ref{sec:scalar_structure}. 
These structural properties of linear index codes will be useful in deriving the rate-locality trade-off of directed cycles presented in Section~\ref{sec:sub:directed_cycles}. 
% In Section~\ref{sec:cycle_scalar_linear}, we analyze a simple scalar linear code for directed cycles that will be useful in proving the tightness of the lower bounds derived in this work.

% \subsection{Preliminaries} \label{sec:sub:linear_prelim}

\subsection{Structure of locally decodable linear index codes} \label{sec:linear_structure}

Following the notation from~\cite{DSC_IT_12}, for a vector $\pmb{u} \in \Fb_q^{MN}$ and set \mbox{$E \subset [MN]$}, we write $\pmb{u} \lhd E$ to denote \mbox{$\supp(\pmb{u}) \subseteq E$}.
Let us denote the columns of the encoder matrix $\pmb{L}$ as $\pmb{L}_1,\dots,\pmb{L}_\ell \in \Fb_q^{MN}$. Then the $k^\tth$ symbol of the codeword is $c_k = \pmb{x}^\tr\pmb{L}_k$.
Note that the $i^\tth$ receiver queries the subvector \mbox{$\pmb{c}_{R_i}=(c_k,k \in R_i)$}, and utilizes the side information $\pmb{x}_{\Kb_i}=(x_j, j \in \Kb_i)$, to decode the demand $\pmb{x}_{\Db_i}=(x_j, j \in \Db_i)$. 

We first consider a necessary and sufficient condition for a given encoder matrix $\pmb{L}$ and receiver queries $R_1,\dots,R_N$ to represent a valid index code for a single unicast problem $G$. 
We observe that the proofs of Lemmas~3.1 and~4.3 and Corollary~4.4 of~\cite{DSC_IT_12} can be directly adapted to the scenario of locally decodable index codes, immediately yielding the following constraints on the encoder matrix and receiver queries.
Let $\pmb{e}_1,\dots,\pmb{e}_{MN}$ be the standard basis of $\Fb_q^{MN}$.

\begin{theorem} \label{thm:design_criterion}
For each $i \in [N]$, let $\rx_i$ query the subvector $\pmb{c}_{R_i}$ of the codeword $\pmb{c}^\tr=\pmb{x}^\tr\pmb{L}$ and have the side information $\pmb{x}_{\Kb_i}$. Then $\rx_i$ can decode its demand $\pmb{x}_{\Db_i}$ if and only if for each $j \in \Db_i$ there exists a $\pmb{u}_j \in \Fb_q^{MN}$ such that $\pmb{u}_j \lhd \Kb_i$ and $\pmb{u}_j + \pmb{e}_j \in \spp(\pmb{L}_k, k \in R_i)$.
\end{theorem}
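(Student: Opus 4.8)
The plan is to translate the decodability requirement at each receiver into a linear-algebraic statement about the span of the queried columns, essentially repeating the argument of \cite{DSC_IT_12} but restricted to the sub-collection of columns indexed by $R_i$. Fix a receiver $\rx_i$. It observes $\pmb{c}_{R_i} = (\pmb{x}^\tr \pmb{L}_k, k \in R_i)$, which is a deterministic function of $\pmb{x}$ lying in $\spp(\pmb{L}_k, k \in R_i)^\tr$ applied to $\pmb{x}$; equivalently, the information available to $\rx_i$ is the value of the linear functional $\pmb{x} \mapsto \pmb{x}^\tr \pmb{v}$ for every $\pmb{v} \in \spp(\pmb{L}_k, k \in R_i)$, together with the coordinates $x_j$, $j \in \Kb_i$. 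I would argue that $\rx_i$ can recover $x_j$ (for a fixed $j \in \Db_i$) if and only if $x_j$ is a deterministic function of this available information, and since everything is linear over $\Fb_q$ and the messages are uniform and independent, this holds if and only if $\pmb{e}_j$ lies in the span of $\{\pmb{e}_t : t \in \Kb_i\} \cup \{\pmb{L}_k : k \in R_i\}$.

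The second step is to rewrite that span condition in the stated form. Membership $\pmb{e}_j \in \spp(\{\pmb{e}_t : t \in \Kb_i\} \cup \{\pmb{L}_k : k \in R_i\})$ means there is a vector $\pmb{w} \in \spp(\pmb{e}_t : t \in \Kb_i)$ and a vector $\pmb{y} \in \spp(\pmb{L}_k, k \in R_i)$ with $\pmb{e}_j = \pmb{w} + \pmb{y}$. Setting $\pmb{u}_j = -\pmb{w}$ gives $\pmb{u}_j \lhd \Kb_i$ (since $\supp(\pmb{u}_j) \subseteq \Kb_i$) and $\pmb{u}_j + \pmb{e}_j = \pmb{y} \in \spp(\pmb{L}_k, k \in R_i)$; conversely such a $\pmb{u}_j$ produces the required decomposition. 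Doing this for every $j \in \Db_i$ and every $i \in [N]$ yields the theorem. One subtlety to handle carefully is the ``only if'' direction of the recoverability-equivalence: I would note that if $\pmb{e}_j \notin \spp(\{\pmb{e}_t : t \in \Kb_i\} \cup \{\pmb{L}_k : k \in R_i\})$, then there is a nonzero $\pmb{z}$ orthogonal to $\Kb_i$ and to all $\pmb{L}_k$, $k \in R_i$, with $\pmb{z}^\tr \pmb{e}_j \neq 0$; adding any multiple of $\pmb{z}$ to $\pmb{x}$ changes $x_j$ but changes neither the side information $\pmb{x}_{\Kb_i}$ nor the observed codeword $\pmb{c}_{R_i}$, so no decoder can succeed.

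The main obstacle, such as it is, is bookkeeping rather than mathematics: one must be precise that ``query $\pmb{c}_{R_i}$'' means the decoder's function may depend on $\pmb{c}_{R_i}$ and $\pmb{x}_{\Kb_i}$ only, and that over a finite field a coordinate is a deterministic function of a set of linear functionals precisely when it lies in their span. Since the excerpt explicitly says the proofs of Lemmas~3.1 and~4.3 and Corollary~4.4 of \cite{DSC_IT_12} adapt directly, I would keep this short: state the span criterion as an intermediate claim, justify both directions via the orthogonal-complement / indistinguishable-inputs argument above, and then perform the one-line algebraic rewriting $\pmb{u}_j = -\pmb{w}$ to land on the stated necessary and sufficient condition. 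No separate handling of scalar versus vector problems is needed, since the whole discussion has already been recast as a scalar linear problem over the $MN$ variables $x_1, \dots, x_{MN}$ with demand sets $\Db_i$ and side-information sets $\Kb_i$.
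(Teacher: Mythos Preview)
Your proposal is correct and is precisely the argument the paper has in mind: the paper does not give its own proof but simply notes that the proofs of Lemmas~3.1 and~4.3 and Corollary~4.4 of~\cite{DSC_IT_12} adapt directly when the full column span $\colsp(\pmb{L})$ is replaced by $\spp(\pmb{L}_k, k \in R_i)$. Your span criterion, the indistinguishable-inputs argument for the ``only if'' direction, and the algebraic rewriting via $\pmb{u}_j=-\pmb{w}$ are exactly that adaptation.
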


We will say that $\pmb{L} \in \Fb_q^{MN \times \ell}$ is a valid encoder matrix corresponding to the queries $R_1,\dots,R_N \subseteq [\ell]$ if it satisfies the criterion stated in Theorem~\ref{thm:design_criterion} for decodability at all the receivers.

Observe that among the component symbols in the codeword $\pmb{c}=(c_1,\dots,c_\ell)^\tr$, some are queried exactly once, i.e., queried by a single receiver, and the other symbols are queried by multiple receivers in the network. Let 
$\Sc_i = R_i \setminus \left(R_1 \cup \cdots \cup R_{i-1} \cup R_{i+1} \cdots \cup R_N \right)$
denote the index set of coded symbols that are queried only by $\rx_i$. Also, let
$\Mc_i = R_i \cap \left(R_1 \cup \cdots \cup R_{i-1} \cup R_{i+1} \cdots \cup R_N \right)$
denote the index set of coded symbols that are queried by $\rx_i$ and at least one other receiver. Note that $\Sc_i \cap \Mc_i = \phi$ and $\Sc_i \cup \Mc_i = R_i$ for each $i \in [N]$.

The following result shows that certain entries of the encoder matrix can be assumed to be equal to zero without affecting the locality or rate of the linear index code.

\begin{tcolorbox}[colback=white,boxsep=3pt,left=2pt,right=2pt,top=2pt,bottom=2pt]
\begin{theorem} \label{thm:zeroes}
Let $\pmb{L} \in \Fb_q^{MN \times \ell}$ be a valid encoding matrix corresponding to the queries $R_1,\dots,R_N$. Then there exists a valid encoding matrix $\pmb{L}' \in \Fb_q^{MN \times \ell}$ for the queries $R_1,\dots,R_N$ such that for each $i \in [N]$
\begin{equation*}
 \pmb{L}'_k \lhd \Db_i \text{ for all } k \in \Sc_i.
\end{equation*} 
\end{theorem}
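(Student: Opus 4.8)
The plan is to recast the decoding condition of Theorem~\ref{thm:design_criterion} in terms of a coordinate projection, and then to rebuild the columns indexed by each $\Sc_i$ receiver-by-receiver via a dimension count. For $i \in [N]$ let $\pi_i \colon \Fb_q^{MN} \to \Fb_q^{MN}$ be the linear map that sets to zero every coordinate in $\Kb_i$ and fixes the rest, write $W_i = \spp(\pmb{L}_k, k \in R_i)$, and put $D_i = \spp(\pmb{e}_j, j \in \Db_i)$. The first step is to observe that $\rx_i$ can decode $\pmb{x}_{\Db_i}$ from $\pmb{c}_{R_i}$ and $\pmb{x}_{\Kb_i}$ if and only if $D_i \subseteq \pi_i(W_i)$. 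This follows directly from Theorem~\ref{thm:design_criterion}: applying $\pi_i$ to $\pmb{u}_j + \pmb{e}_j \in W_i$ with $\pmb{u}_j \lhd \Kb_i$ and $j \in \Db_i$ annihilates $\pmb{u}_j$ and fixes $\pmb{e}_j$ (since $\Db_i \cap \Kb_i = \phi$), giving $\pmb{e}_j \in \pi_i(W_i)$; conversely any $\pmb{w} \in W_i$ with $\pi_i(\pmb{w}) = \pmb{e}_j$ splits as $\pmb{e}_j + \pmb{u}_j$ with $\pmb{u}_j \lhd \Kb_i$.

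Next I would note that the modifications decouple across receivers. Since $\Sc_1,\dots,\Sc_N$ are pairwise disjoint and no index of $\Sc_i$ lies in any $\Mc_{i'}$ (an index in $\Sc_i$ is queried by $\rx_i$ alone), replacing the columns $\{\pmb{L}_k : k \in \Sc_i\}$ alters $W_i$ only and leaves every column indexed by the sets $\Mc_{i'}$, $i' \in [N]$, untouched; in particular decodability at $\rx_{i'}$ is unaffected for $i' \ne i$. Hence it suffices to fix one $i$, keep $\pmb{L}_k$ for $k \in \Mc_i$, and find new vectors $\pmb{L}'_k \lhd \Db_i$ for $k \in \Sc_i$ so that the new $W_i' = \spp(\pmb{L}'_k, k \in R_i)$ still satisfies $D_i \subseteq \pi_i(W_i')$; doing this for every $i$ assembles the desired $\pmb{L}'$.

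For the construction itself, set $A = \spp(\pmb{L}_k, k \in \Mc_i)$ and $B = \spp(\pmb{L}_k, k \in \Sc_i)$, so $W_i = A + B$ and $\dim B \le |\Sc_i|$, and let $\rho$ be the quotient map onto $\Fb_q^{MN}/\pi_i(A)$. The original decodability $D_i \subseteq \pi_i(W_i) = \pi_i(A) + \pi_i(B)$ gives $\rho(D_i) \subseteq \rho(\pi_i(B))$, so $\dim \rho(D_i) \le \dim B \le |\Sc_i|$. Choose a subspace $B' \subseteq D_i$ complementary to $D_i \cap \pi_i(A)$ inside $D_i$; then $\rho$ maps $B'$ isomorphically onto $\rho(D_i)$, whence $\dim B' = \dim \rho(D_i) \le |\Sc_i|$ and $\pi_i(A) + B' \supseteq D_i$. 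Pick vectors $\pmb{L}'_k \in D_i$, $k \in \Sc_i$, that together span $B'$, e.g.\ a basis of $B'$ padded with zero vectors to fill exactly $|\Sc_i|$ columns. Then $\pmb{L}'_k \lhd \Db_i$ because $\pmb{L}'_k \in D_i = \spp(\pmb{e}_j, j \in \Db_i)$; and since every vector of $\spp(\pmb{L}'_k, k \in \Sc_i) \subseteq D_i$ has support in $\Db_i$ (disjoint from $\Kb_i$), $\pi_i$ acts as the identity on it, so $\pi_i(W_i') = \pi_i(A) + \spp(\pmb{L}'_k, k \in \Sc_i) \supseteq \pi_i(A) + B' \supseteq D_i$, and $\rx_i$ still decodes by the criterion of the first step.

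The main obstacle is the bookkeeping on \emph{how many} columns the replacement may use: the rebuilt private span must fit into exactly the $|\Sc_i|$ slots indexed by $\Sc_i$, and a priori it is not obvious this is always possible. The resolution is precisely the inequality $\dim \rho(D_i) \le \dim B \le |\Sc_i|$ extracted in the first step from the \emph{original} valid code — the portion of $W_i$ spanned only by the private symbols $\Sc_i$ already contributes enough dimensions modulo $\pi_i(A)$, so no extra budget is needed. Two minor points to handle carefully are that enlarging $\pi_i(W_i')$ beyond $D_i + \pi_i(A)$ is harmless because the decoding criterion is an inclusion, and that the per-receiver modifications in the second step genuinely do not interfere (the $\Sc_i$ are disjoint and avoid every $\Mc_{i'}$), so the order in which the receivers are processed is irrelevant.
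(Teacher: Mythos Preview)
Your proof is correct and follows essentially the same route as the paper: decouple the receivers (changing columns in $\Sc_i$ leaves all other $W_{i'}$ intact), show by a dimension count that what the private columns must supply inside $U_{\Db_i}$ modulo the contribution of $V_{\Mc_i}+U_{\Kb_i}$ has dimension at most $|\Sc_i|$, and then rebuild those columns inside $U_{\Db_i}$ by extending a basis. Your packaging via the projection $\pi_i$ (with $\ker\pi_i=U_{\Kb_i}$) and the quotient $\rho$ by $\pi_i(V_{\Mc_i})$ is a clean rephrasing of the paper's explicit $\Vin/\Vout$ decomposition---indeed $D_i\cap\pi_i(A)$ coincides with the paper's $\Vin$, and your inequality $\dim\rho(D_i)\le\dim B\le|\Sc_i|$ is exactly the paper's bound $\dim U_{\Db_i}-\dim\Vin\le|\Sc_i|$.
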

\end{tcolorbox}
\begin{proof}
See Appendix~\ref{app:proof:thm:zeroes}.
\end{proof}

The new index code guaranteed by Theorem~\ref{thm:zeroes} employs the same code length and the same set of queries as the given index code. Hence, the broadcast rate $\beta$, overall locality $r$ and the average locality $\ravg$ of the new code are identical to those of the given index code. Additionally, the new code guarantees that for any $i \in [N]$ any codeword symbol $c_k$, $k \in \Sc_i$, queried only by $\rx_i$, can be expressed as a linear combination of the demands of $\rx_i$. 
Since any valid encoder matrix can be modified to satisfy this property using Theorem~\ref{thm:zeroes}, in the sequel, without loss of generality, we will only consider encoder matrices $\pmb{L}$ that satisfy
\begin{equation} \label{eq:zeroes}
 \pmb{L}_k \lhd \Db_i \text{ for all } k \in \Sc_i \text{ and } i \in [N].
\end{equation} 
Further, without loss of generality, we will assume that for each $i \in [N]$, the vectors $\pmb{L}_k$, $k \in R_i$, are linearly independent. If this is not the case, then at least one of the codeword symbols $c_k$ queried by $\rx_i$ is a linear combination of the other queried symbols $\pmb{c}_{R_i \setminus \{k\}}$. 
Reducing the index set of the queries of $\rx_i$ from $R_i$ to $R_i \setminus \{k\}$ does not affect decodability at $\rx_i$ since $c_k$ can be reconstructed from $\pmb{c}_{R_i \setminus \{k\}}$. 
Note that this reduction in the queries does not increase the value of either the overall locality $r$ or the average locality $\ravg$ of the index code.
This process can be repeated till the columns of $\pmb{L}$ corresponding to the queries of each of the receivers are linearly independent. 
Finally, any codeword symbol that is not queried by any of the receivers can be removed from the transmission since this symbol will not be used for decoding.

Let us denote the index set of codeword symbols that are queried exactly once by
\begin{equation}
\Sc = \Sc_1 \cup \cdots \cup \Sc_N.
\end{equation} 
Note that $\Sc_i \cap \Sc_j = \phi$ for any $i \neq j$. Hence, $|\Sc|=\sum_{i=1}^{N}|\Sc_i|$.
Let 
\begin{equation}
\Mc = \Mc_1 \cup \cdots \cup \Mc_N 
\end{equation} 
denote the index set corresponding to the codeword symbols that have been queried by more than one receiver. Observe that $\Sc \cap \Mc = \phi$ and $\Sc \cup \Mc = [\ell]$.

\begin{lemma} \label{lem:single_queries}
For any valid index code for message length $M$, number of receivers $N$, rate $\beta$ and average locality $\ravg$, 
\begin{equation*}
 |\Sc| \geq M(2\beta - N\ravg),
\end{equation*} 
where $|\Sc|$ is the number of codeword symbols that have been queried exactly once.
\end{lemma}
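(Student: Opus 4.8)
The plan is a straightforward double-counting argument on the number of receivers querying each codeword symbol. For each $k \in [\ell]$, let $n_k = |\{\, i \in [N] \,:\, k \in R_i \,\}|$ denote the number of receivers that query the codeword symbol $c_k$. Counting the pairs $(i,k)$ with $k \in R_i$ in two ways gives
\begin{equation*}
 \sum_{k=1}^{\ell} n_k = \sum_{i=1}^{N} |R_i| = MN\ravg,
\end{equation*}
where the last equality is just the definition~\eqref{eq:ravg} of $\ravg$.

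Next I would use the standing assumption that $R_1 \cup \cdots \cup R_N = [\ell]$ (recall the paper discards any codeword symbol queried by nobody), so $n_k \geq 1$ for every $k$, and $[\ell]$ is the disjoint union of $\Sc = \{k : n_k = 1\}$ and $\Mc = \{k : n_k \geq 2\}$, with $|\Sc| + |\Mc| = \ell = \beta M$. Splitting the sum according to this partition and using $n_k \geq 2$ on $\Mc$,
\begin{equation*}
 MN\ravg = \sum_{k=1}^{\ell} n_k = \sum_{k \in \Sc} n_k + \sum_{k \in \Mc} n_k \geq |\Sc| + 2|\Mc| = |\Sc| + 2(\beta M - |\Sc|) = 2\beta M - |\Sc|.
\end{equation*}
Rearranging yields $|\Sc| \geq 2\beta M - MN\ravg = M(2\beta - N\ravg)$, which is the claim.

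There is essentially no hard step here; the only point requiring a moment's care is making sure every codeword symbol is queried by at least one receiver (so that the contribution of $\Mc$ to $\sum_k n_k$ is at least $2|\Mc|$ rather than something smaller), which is exactly the without-loss-of-generality reduction already made in Section~\ref{sec:linear_structure}. Note also that the argument never uses linearity, so the lemma in fact holds for arbitrary (possibly non-linear) index codes, though it is stated and used here in the linear setting.
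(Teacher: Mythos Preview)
Your proof is correct and essentially identical to the paper's own argument: both count the total number of queries $\sum_i |R_i| = MN\ravg$ and lower-bound it by $|\Sc| + 2|\Mc| = 2\ell - |\Sc|$, then substitute $\ell = M\beta$ and rearrange. Your explicit introduction of the multiplicities $n_k$ and the closing remark that linearity is never used are minor elaborations, but the core double-counting is the same.
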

\begin{proof}
We will count the total number of queries made by all the receivers in two different ways and relate these expressions to arrive at the statement of this lemma. 
The number of queries made by $\rx_i$ is $|R_i|$. Hence, the total number of queries made by all the receivers is $\sum_{i \in [N]}|R_i|$. From~\eqref{eq:ravg} this is equal to $MN\ravg$. The number of times a codeword symbol $c_k$ is queried is equal to $1$ if $k \in \Sc$, and is at least $2$ if $k \in \Mc$. Thus the total number of queries satisfies
\begin{align*}
 MN\ravg &\geq \sum_{k \in \Sc} 1 \, + \, \sum_{k \in \Mc} 2= |\Sc| + 2|\Mc| \\ 
&= |\Sc| + 2(\ell - |\Sc|) = 2\ell - |\Sc|,
\end{align*} 
where we have used the fact $|\Sc| + |\Mc|=\ell$. Substituting $\ell = M\beta$ in the above inequality, we arrive at $MN\ravg \geq 2M\beta - |\Sc|$ thereby proving the lemma.
\end{proof}

\subsection{Optimal Linear Codes for Directed Cycles} \label{sec:sub:directed_cycles}

We will now consider the index coding problem where $G$ is a directed $N$-cycle, i.e., for $i=1,\dots,N-1$, $K_i = \{i+1\}$ and $K_N=\{1\}$. 
In this subsection we will provide the proof of the following result.

\begin{tcolorbox}[colback=white,boxsep=3pt,left=2pt,right=2pt,top=2pt,bottom=2pt]
\begin{theorem} \label{thm:directed_cycles}
For any $N \geq 3$, let $G$ be a directed cycle of length $N$. For any finite field $\Fb_q$, the optimal trade-off between rate and locality among (vector) linear index codes for $G$ over $\Fb_q$ is 
\begin{equation} \label{eq:N_cycle_tradeoff}
\beta^*_{G,q}(r) = \max\left\{\frac{N(N-1-r)}{N-2},N-1\right\}, ~~~r \geq 1.
\end{equation} 
\end{theorem}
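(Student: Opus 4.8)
The plan is to establish the theorem by proving a matching achievability bound and converse bound for $\beta^*_{G,q}(r)$, treating the two regimes of the $\max$ separately. Throughout, I would exploit the structural results of Section~\ref{sec:linear_structure}, especially Theorem~\ref{thm:zeroes} (which lets me assume the normalized form~\eqref{eq:zeroes}) and Lemma~\ref{lem:single_queries} (which lower-bounds $|\Sc|$ in terms of $\beta$ and $\ravg$), together with the basic facts $r \ge 1$, $\ravg \le r$, and $\beta^*_{G,q}(r) \ge \betaopt(G) = N-1$ (the last being the classical minrank/cycle result, recovered here as $\beta_G^*(\betaopt) = \betaopt$ specialized to the $N$-cycle, whose minrank over any field is $N-1$). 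The endpoints are easy: at $r = N-1$ the first branch equals $N-1$, matching $\betaopt$; and the uncoded scheme shows $\beta^*_{G,q}(1) \le N$, which equals the first branch at $r=1$. So the content is in the intermediate slope.

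For achievability, I would construct, for each target locality $r$ in the range $1 \le r \le N-1$, an explicit vector linear code of message length $M$ achieving rate $\tfrac{N(N-1-r)}{N-2}$ (for $r$ in the left regime) by time-sharing between two corner schemes: the $r=1$ scheme (uncoded, $\beta = N$) and the $r=N-1$ scheme, which is the length-$(N-1)$ MDS-type code for the directed cycle (the natural generalization of Example~\ref{ex:simple_code_cycle}, or the clique-cover/MDS index code attaining $\beta = N-1$). By Lemma~\ref{lem:convexity}, $\beta^*_{G,q}$ is convex, and the straight line joining the points $(1,N)$ and $(N-1,N-1)$ has the equation $\beta = \tfrac{N(N-1-r)}{N-2}$ exactly; splitting the $M$ message coordinates of each user into a fraction handled by the uncoded scheme and a fraction handled by the MDS scheme realizes every point on this segment with arbitrarily small approximation error, and one checks the locality of the combined scheme is the weighted combination of $1$ and $N-1$, i.e.\ equals the chosen $r$. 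For $r \ge N-1$ the rate $N-1$ is already achievable by the MDS scheme alone (which has locality exactly $N-1$, hence locality $\le r$).

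For the converse I must show $\beta \ge \tfrac{N(N-1-r)}{N-2}$ for every valid linear code for the $N$-cycle with locality $\le r$ (the bound $\beta \ge N-1$ being immediate from $\betaopt$). The key is a lower bound on $\beta$ in terms of $\ravg$. Assume the normalized form~\eqref{eq:zeroes}: every codeword symbol $c_k$ with $k \in \Sc_i$ is a linear combination of $\rx_i$'s demands $\xb_{\Db_i}$ alone. Such ``private'' symbols carry information only about a single message $\xb_i$, so the $|\Sc_i|$ private symbols of $\rx_i$ contribute at most $M$ dimensions' worth of information about $\xb_i$; meanwhile $\rx_i$ must recover all $M$ symbols of $\xb_i$ from its $|R_i| = |\Sc_i| + |\Mc_i|$ queried symbols using only side information $\xb_{i+1}$. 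The heart of the argument is that the shared symbols (those indexed by $\Mc$) must collectively suffice, across all receivers, to "close the cycle": by a cut/counting argument on the directed cycle — each $\rx_i$ needs enough non-private information to bridge from what $\xb_{i+1}$ resolves to all of $\xb_i$ — one derives a lower bound of the form $|\Mc| \ge$ (something like) $\tfrac{MN}{N-1} - $ (a term decreasing in the slack $\ravg - 1$, or more precisely a bound coupling $|\Mc|$, $M$, and $\sum_i(|R_i|-M)$). Combined with $\ell = |\Sc| + |\Mc|$, $|\Sc| \ge M(2\beta - N\ravg)$ from Lemma~\ref{lem:single_queries}, $\ell = M\beta$, and $\ravg \le r$, algebraic elimination yields $\beta \ge \tfrac{N(N-1-\ravg)}{N-2} \ge \tfrac{N(N-1-r)}{N-2}$.

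The main obstacle is precisely this converse counting step: quantifying how much "coding benefit" (reduction of $\beta$ below $N$) the shared symbols can provide as a function of the total query budget $N\ravg$, on the specific combinatorial structure of a directed cycle. I expect this requires a delicate argument tracking, for each $i$, the dimension of the space $\spp(\pmb{L}_k, k \in \Mc_i)$ modulo the private part, and showing that the cycle structure forces these shared spans to overlap in a way that makes $2\beta - N\ravg = |\Sc|/M$ small enough — equivalently, forcing $\beta \le N\ravg/2 + |\Sc|/(2M)$ to be incompatible with $\beta < \tfrac{N(N-1-\ravg)}{N-2}$. A clean way to organize it may be to first prove the bound for the scalar case $M=1$ (where $|\Db_i|=1$ and "private symbol in $\Db_i$" forces $\pmb{L}_k$ to be a scalar multiple of $\pmb{e}_i$, so each receiver has at most one private symbol, $|\Sc_i|\le 1$), establish the cycle inequality there via a direct graph-cut argument, and then lift to general $M$ by viewing the vector problem as a scalar problem on $MN$ messages with the block structure of $\Db_i,\Kb_i$ and re-running the same counting with the normalization~\eqref{eq:zeroes} in force.
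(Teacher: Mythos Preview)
Your achievability argument contains a concrete error. You claim the straight line joining $(r,\beta)=(1,N)$ and $(r,\beta)=(N-1,N-1)$ is $\beta = \tfrac{N(N-1-r)}{N-2}$, but that line actually has slope $-1/(N-2)$, whereas the target line has slope $-N/(N-2)$; at $r=N-1$ the expression $\tfrac{N(N-1-r)}{N-2}$ equals $0$, not $N-1$. The correct second corner is $(r,\beta)=\big(2(N-1)/N,\,N-1\big)$, and the single scalar code of Example~\ref{ex:simple_code_cycle} has overall locality $2$, not $N-1$. The achievability idea you are missing is that one must time-share \emph{cyclic shifts} of the Example~\ref{ex:simple_code_cycle} code (using the $N$ locality profiles $\pmb{r}_1,\dots,\pmb{r}_N$ in which each receiver gets locality $1$ in exactly two of the shifts) so as to drive the per-receiver locality down to $\ravg=2(N-1)/N$ at rate $N-1$. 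Time-sharing $(1,N)$ with $\big(2(N-1)/N,N-1\big)$ then traces out exactly the claimed line. Without this step your scheme only achieves the strictly worse line $\beta = N - (r-1)/(N-2)$.

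Your converse sketch has the right ingredients (Theorem~\ref{thm:zeroes}, Lemma~\ref{lem:single_queries}) but never arrives at a usable inequality; the ``cut/counting'' bound on $|\Mc|$ you allude to is left unspecified and it is not clear it can be made to work. The paper's argument is more direct and avoids bounding $|\Mc|$ altogether: from $\sum_i |\Sc_i| \ge M(2\beta - N\ravg)$ pick one receiver, say $\rx_N$, with $|\Sc_N|\ge M(2\beta-N\ravg)/N$. The decoding conditions for $\rx_1,\dots,\rx_{N-1}$ yield $M(N-1)$ linearly independent vectors in $\colsp(\pmb{L})$ with a block lower-triangular structure (all supported away from the last block row except for $\pmb{C}_{N-1}$). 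By~\eqref{eq:zeroes} the $|\Sc_N|$ private columns of $\rx_N$ are supported in $\Db_N$ and independent, and appending them to the previous $M(N-1)$ vectors keeps the whole set independent. Hence $\ell \ge \rank(\pmb{L}) \ge M(N-1)+|\Sc_N|$, which rearranges to $\beta \ge N(N-1-\ravg)/(N-2)$. This rank-of-column-space step is the crux you have not supplied.
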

\end{tcolorbox}

Note that locality \mbox{$r \geq 1$} for any valid index coding scheme, and hence, $\beta^*_{G,q}(r)$ is defined for \mbox{$r \geq 1$} only.
\begin{figure}[!t]
\centering
\includegraphics[width=2in]{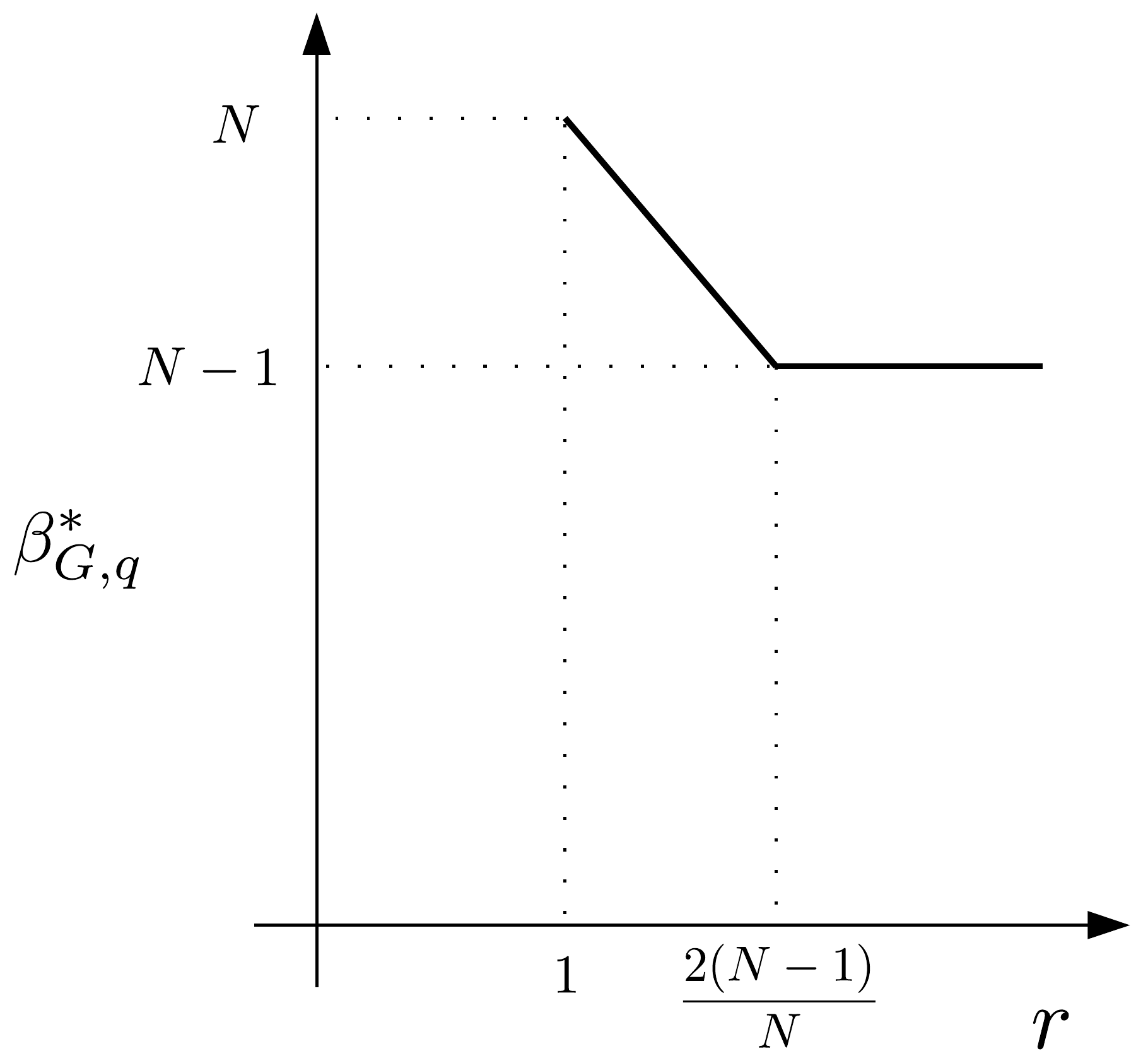}
\caption{The locality-rate trade-off of linear index codes for the directed $N$-cycle index coding problem.}
\label{fig:N_cycle_tradeoff}
\end{figure} 
The trade-off between rate and locality is shown in Fig.~\ref{fig:N_cycle_tradeoff}. 
Sections~\ref{sec:converse_directed_cycle} and~\ref{sec:achievability_directed_cycle} provide the proofs for the converse and achievability, respectively, of this rate-locality trade-off.
% % R1version % % 
% It is known that the minimum possible broadcast rate for the directed $N$-cycle $G$ is $N-1$.
It is known that for the directed $N$-cycle $G$, $N-1$ is the minimum possible broadcast rate\footnote{From~\cite[Theorem~3]{YBJK_IEEE_IT_11}, we know that the broadcast rate is lower bounded by the size of the maximum acyclic induced subgraph of $G$, which is equal to $N-1$ in this case. Further, Example~1 of~\cite{YBJK_IEEE_IT_11} provides a scalar linear code that achieves rate $N-1$ for directed $N$-cycles.}.
% see~\cite{YBJK_IEEE_IT_11} or~\cite[Theorem~1]{Ong_IT_17}. 
Our result shows that the smallest locality at which this rate is achievable using linear codes is $r=2(N-1)/N$. 
% % R1version % % 
This locality is achievable if the message length $M$ of the vector linear code is chosen carefully.
In Section~\ref{sec:locality_and_message_length}, we provide a detailed analysis of the effect of the message length $M$ on the locality $r$ when the broadcast rate is $N-1$.

\subsubsection{Converse} \label{sec:converse_directed_cycle}

In this subsection we will show that $\beta^*_{G,q}(r)$ is lower bounded by both $N-1$ and $N(N-1-r)/(N-2)$. It is clear that $\beta^*_{G,q}(r) \geq N-1$, since even without any locality constraints the smallest possible broadcast rate for the directed $N$-cycle is $N-1$. To complete the converse, we only need to show that $\beta^*_{G,q}(r) \geq N(N-1-r)/(N-2)$.

Suppose that the encoding matrix $\pmb{L}$ is valid with respect to a set of queries $R_1,\dots,R_N$. From Lemma~\ref{lem:single_queries}, the number of codeword symbols queried exactly once
% \begin{equation*}
$|\Sc| = \sum_{i \in [N]}|\Sc_i| \geq M(2\beta-N\ravg)$.
% \end{equation*} 
Hence there exists an $i \in [N]$ such that $|\Sc_i| \geq M(2\beta -N\ravg)/N$. Without loss of generality, let us assume that 
\begin{equation} \label{eq:cycle_converse:2}
|\Sc_N| \geq M(2\beta - N\ravg)/N. 
\end{equation} 
We now relate this lower bound on $|\Sc_N|$ to the rank of $\pmb{L}$ to complete the converse.

%% R1version %%
For $i=1,\dots,N-1$, we have $\Kb_i=\{iM+1,iM+2,\dots,(i+1)M\}$ and $\Db_i = \{(i-1)M+1,\dots,iM\}$. From Theorem~\ref{thm:design_criterion}, for each $j \in \Db_i$ there exists a $\pmb{u}_j \lhd \Kb_i$ such that $\pmb{e}_j + \pmb{u}_j \in \colsp(\pmb{L})$, where $\colsp(\pmb{L})$ denotes the column span of the matrix $\pmb{L}$. 
The non-zero entries of the $MN \times M$ matrix $[\pmb{u}_j,~ j \in \Db_i]$ are restricted to the $M$ rows indexed by $\Kb_i$. Let $\pmb{C}_i$ be the $M \times M$ submatrix of $[\pmb{u}_j,~ j \in \Db_i]$ corresponding to the rows $\Kb_i$.
%% R1version %%
Considering the first $N-1$ receivers $i=1,\dots,N-1$ and each of their demands $j \in \Db_i$, we obtain $M(N-1)$ vectors $\pmb{e}_j + \pmb{u}_j$, all which lie in $\colsp(\pmb{L})$. Arranging these column vectors into a matrix of size $MN \times M(N-1)$ we arrive at
\begin{equation} \label{eq:cycle_converse:1}
          \begin{bmatrix} 
          \pmb{I} & \pmb{0} & \cdots & \pmb{0} & \pmb{0} \\
          \pmb{C}_1 & \pmb{I} & \cdots & \pmb{0} & \pmb{0} \\
          \pmb{0} & \pmb{C}_2 & \cdots & \pmb{0} & \pmb{0} \\
          \vdots & \vdots &            & \vdots & \vdots  \\
          \pmb{0} & \pmb{0} & \cdots & \pmb{I} & \pmb{0} \\
          \pmb{0} & \pmb{0} & \cdots & \pmb{C}_{N-2} & \pmb{I} \\
          \pmb{0} & \pmb{0} & \cdots & \pmb{0} & \pmb{C}_{N-1}
          \end{bmatrix},
\end{equation} 
where each of the submatrices is of size $M \times M$. Note that the columns of this matrix are linearly independent.

Now considering $\rx_N$, we note that $\Db_N=\{(N-1)M+1,\dots,MN\}$. The coded symbols $\pmb{x}^\tr\pmb{L}_k$, $k \in \Sc_N$, are queried only by $\rx_N$. Using Theorem~\ref{thm:zeroes} and~\eqref{eq:zeroes}, we assume without loss of generality that $\pmb{L}_k \lhd \Db_N$ for all $k \in \Sc_N$, i.e., $\supp(\pmb{L}_k) \subseteq \Db_N$. Since the set of vectors $\{\pmb{L}_k \, | \, k \in R_N\}$ is linearly independent and $\Sc_N \subseteq R_N$, we observe that the vectors $\pmb{L}_k$, $k \in \Sc_N$, are linearly independent as well. Note that each of these vectors is a column of $\pmb{L}$ and hence lies in $\colsp(\pmb{L})$. Appending these $|\Sc_N|$ vectors as columns to the matrix in~\eqref{eq:cycle_converse:1}, we arrive at the block matrix
\begin{equation*}
 \pmb{A} = \begin{bmatrix} 
          \pmb{I} & \pmb{0} & \cdots & \pmb{0} & \pmb{0}  & \pmb{0} \\
          \pmb{C}_1 & \pmb{I} & \cdots & \pmb{0} & \pmb{0} & \pmb{0}  \\
          \pmb{0} & \pmb{C}_2 & \cdots & \pmb{0} & \pmb{0} & \pmb{0}  \\
          \vdots & \vdots &            & \vdots & \vdots & \pmb{0}   \\
          \pmb{0} & \pmb{0} & \cdots & \pmb{I} & \pmb{0} & \pmb{0}  \\
          \pmb{0} & \pmb{0} & \cdots & \pmb{C}_{N-2} & \pmb{I} & \pmb{0}  \\
          \pmb{0} & \pmb{0} & \cdots & \pmb{0} & \pmb{C}_{N-1} & \pmb{B}_N
          \end{bmatrix},
\end{equation*} 
where $\pmb{B}_N$ is an $M \times |\Sc_N|$ matrix with linearly independent columns. 
Note that each column of $\pmb{A}$ lies in $\colsp(\pmb{L})$, i.e., \mbox{$\colsp(\pmb{A}) \subseteq \colsp(\pmb{L})$}, and the columns of $\pmb{A}$ are linearly independent, i.e., $\rank(\pmb{A}) = M(N-1) + |\Sc_N|$.
Thus, 
\begin{align*}
\ell &\geq \rank(\pmb{L}) \geq \rank(\pmb{A}) ~~~~~~~~~~~~~ (\text{since } \colsp(\pmb{L}) \supseteq \colsp(\pmb{A})) \\
&= M(N-1) + |\Sc_N| \\
&\geq M(N-1) + M(2\beta - N\ravg)/N ~~~~~~~(\text{using~\eqref{eq:cycle_converse:2}})
\end{align*} 
Since broadcast rate \mbox{$\beta=\ell/M$}, the above inequality yields \mbox{$\beta \geq (N-1) + (2\beta - N\ravg)/N$}, which upon manipulation results in
\begin{equation} \label{eq:ravg_beta_bound_N_cycle}
\beta \geq {N(N-1-\ravg)}/{(N-2)}.
\end{equation} 
Since $\ravg \leq r$, we arrive at $\beta \geq {N(N-1-r)}/{(N-2)}$.

\subsubsection{Achievability} \label{sec:achievability_directed_cycle}

In this subsection we show that the trade-off in~\eqref{eq:N_cycle_tradeoff} is achievable using linear index codes. We will show that the points $(r,\beta)=(1,N)$ and $(r,\beta)=(2(N-1)/N,N-1)$ are achievable. Then any point on the line segment $\beta = N(N-1-r)/(N-2)$, $1 \leq r \leq 2(N-1)/N$ can be achieved using time sharing between these two schemes.
The achievability of the points $\beta=N-1$ and $r>2(N-1)/N$ will follow immediately since the rate $N-1$ is already achievable with $r=2(N-1)/N$.

\subsubsection*{Achieving $r=1$, $\beta=N$}

The point $(r,\beta)=(1,N)$ can be achieved trivially using the uncoded scheme, i.e., the transmitted codeword equals the message vector $\pmb{c}=\pmb{x} \in \Fb_q^{MN}$. Each receiver $\rx_i$ queries $\pmb{c}_{\Db_i}=\pmb{x}_{\Db_i}$ to meet its demand. 
% The side information available at the receivers is not utilized by this scheme. Since the code length $\ell = MN$, we have $\beta=N$ and since $|R_i|=|\Db_i|=M$, we have $r=1$. Also note that this is a linear index code corresponding to the encoding matrix $\pmb{L}=\pmb{I}$.

\subsubsection*{Achieving $r=2(N-1)/N$, $\beta=N-1$}

Example~\ref{ex:simple_code_cycle} provides a family of $N$ scalar linear codes for $G$, one for each choice of $i\in [N]$, with rate $\beta=N-1$. The $i^\tth$ code provides localities $r_i=r_{i+1}=1$ and $r_j=2$ for all $j \neq i,i+1$, where we interpret $i+1$ as $1$ if $i=N$. Using $\pmb{r}=(r_1,r_2,\dots,r_N)$ to represent the tuple of receiver localities, we observe that rate $N-1$ can be achieved using scalar linear codes for the following values of locality vector $\pmb{r}$
\begin{align}
\pmb{r}_1&=(1,1,2,2,\dots,2),~ \pmb{r}_2=(2,1,1,2,\dots,2),\dots, \nonumber \\
\pmb{r}_{N-1}&=(2,2,\dots,2,1,1),~ \pmb{r}_N=(1,2,\dots,2,1). \label{eq:r_tuples}
\end{align}  

If $N$ is an odd integer, we time share the $N$ scalar linear codes corresponding to $\pmb{r}_1,\pmb{r}_2,\dots,\pmb{r}_N$. Observe that the overall scheme is a vector linear code for message length $M=N$, rate $N-1$ and locality $r=\ravg=2(N-1)/N$.
If $N$ is an even integer, we time share $N/2$ scalar linear codes corresponding to $\pmb{r}_1,\pmb{r}_3,\dots,\pmb{r}_{N-1}$, that yields a vector linear code with $M=N/2$, rate $N-1$ and $r=\ravg=2(N-1)/N$.

\subsubsection{Dependence of locality on message length} \label{sec:locality_and_message_length}

From the achievability scheme in Section~\ref{sec:achievability_directed_cycle} we observe that for $\beta=N-1$, the locality $r=\ravg=2(N-1)/N$ can be achieved using message length $M=N$ if $N$ is odd, and $M=N/2$ if $N$ is even.
We will now show that the message length used by the proposed scheme is the minimum required to attain locality $2(N-1)/N$ with rate $N-1$.

\begin{lemma} \label{lem:minimum_M_for_locality}
Let $N \geq 3$. The message length $M$ of any index code with locality equal to $2(N-1)/N$ for the directed $N$-cycle satisfies $M \geq N$ if $N$ is odd, and $M \geq N/2$ if $N$ is even.
\end{lemma}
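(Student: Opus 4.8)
Fix an index code for the directed $N$-cycle with $\beta = N-1$, some message length $M$, queries $R_1,\dots,R_N$, and overall locality $r = 2(N-1)/N$. Since $\ravg \le r = 2(N-1)/N$, the converse chain in Section~\ref{sec:converse_directed_cycle} is essentially tight: from $\ell = M(N-1)$ and the bound $\ell \ge M(N-1) + |\Sc_N|$ (after relabeling so that $\Sc_N$ is the largest), together with $|\Sc| = \sum_i |\Sc_i| \ge M(2\beta - N\ravg)$, every inequality used there must hold with near-equality. In particular $\ravg = r$ (so all receivers have $r_i = 2(N-1)/N$), and $|\Sc| = M(2\beta - N\ravg) = M\cdot\frac{2(N-1)(N-2)}{N}$ must hold \emph{with equality}, which in turn forces $|\Mc_i| $ to be exactly $2$ for every shared symbol... more precisely, it forces that no codeword symbol is queried by three or more receivers and that $|\Sc_i|$ is the same for all $i$. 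The plan is to extract these equality conditions and show they are incompatible with certain divisibility constraints unless $M$ is large.

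First I would make the counting exact. Each $r_i = |R_i|/M = 2(N-1)/N$, so $|R_i| = 2M(N-1)/N$; hence $N \mid 2M(N-1)$, and since $\gcd(N, N-1)=1$ this gives $N \mid 2M$, i.e.\ $M \ge N/2$ when $N$ is even and $M \ge N$ when $N$ is odd --- \emph{provided} we can also rule out the possibility that $N$ is odd and $M$ is an odd multiple of $N/2$ that is not a multiple of $N$; but for odd $N$, $N \mid 2M$ already forces $N \mid M$, so that case does not arise. Wait --- so the pure receiver-locality divisibility already yields the odd case cleanly, and for even $N$ it yields $M \ge N/2$. So the crux is purely: for odd $N$, the bound $M \ge N$ is immediate from $|R_i| = 2M(N-1)/N \in \Zb$ and $\gcd(N,N-1)=1$; and for even $N$, $M \ge N/2$ likewise. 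That looks like the whole proof, modulo justifying that $r_i = r$ for \emph{every} $i$.

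So the real content is: \textbf{why must every receiver's locality equal the overall locality?} This is where I would use the equality analysis of the converse. If some $r_j < 2(N-1)/N$ strictly, then $\ravg < 2(N-1)/N = r$, and feeding $\ravg$ (rather than $r$) into~\eqref{eq:ravg_beta_bound_N_cycle} gives $\beta \ge N(N-1-\ravg)/(N-2) > N-1$, contradicting $\beta = N-1$. Hence $\ravg = 2(N-1)/N$; combined with each $r_i \ge 1$ and, more to the point, each $r_i$ being at most... hmm, individual $r_i$ could exceed $r$? No --- $r = \max_i r_i$ by definition, so $r_i \le r$ for all $i$, and $\sum_i r_i / N = \ravg = r$ forces $r_i = r$ for all $i$. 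So every $r_i = 2(N-1)/N$, hence $|R_i| = 2M(N-1)/N$ must be a nonnegative integer, and the divisibility argument above closes both cases.

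\textbf{Main obstacle.} The only subtlety is that $|R_i|/M = 2(N-1)/N$ is a statement about the \emph{reduced} code (after removing linearly dependent queried symbols, per the reductions following Theorem~\ref{thm:zeroes}); I should check that those reductions do not change $M$ and only decrease $r_i$, so that an optimal code of locality exactly $2(N-1)/N$ survives the reduction with all $r_i$ still equal to $2(N-1)/N$ (equality in $r_i \le r$ after reduction, which it must, again by the $\ravg$ argument applied to the reduced code). Once that bookkeeping is in place, the proof is the short divisibility observation $N \mid 2M(N-1) \Rightarrow N \mid 2M$, read separately for $N$ odd and $N$ even. I expect the write-up to be three or four lines of argument plus a sentence pointing back to the converse for the claim $\ravg = r$.
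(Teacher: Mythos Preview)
Your divisibility argument is correct and is precisely what the paper uses: from $|R_i|/M = 2(N-1)/N$ one gets $N \mid 2M(N-1)$, and since $\gcd(N,N-1)=1$ this yields $N \mid 2M$, hence $M \geq N$ for $N$ odd and $M \geq N/2$ for $N$ even.

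However, you are working much harder than necessary, and in doing so you introduce an assumption that is not in the lemma. You assume $\beta = N-1$ and then invoke the converse bound~\eqref{eq:ravg_beta_bound_N_cycle} to force $\ravg = r$, concluding that \emph{every} $r_i$ equals $2(N-1)/N$. But the lemma does not assume $\beta = N-1$; it applies to any index code (linear or not, of any rate) whose overall locality is exactly $2(N-1)/N$. Your detour through the converse would leave a gap for codes with $\beta > N-1$.

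The paper's proof sidesteps all of this with a one-line observation: since $r = \max_{i} r_i$ by definition, there automatically exists \emph{some} index $i$ with $r_i = r = 2(N-1)/N$, i.e.\ $|R_i|/M = 2(N-1)/N$. One such $i$ is all the divisibility argument needs. There is no need to show that every receiver attains the maximum, no need to assume a particular rate, no need for linearity, and no need to worry about the reductions following Theorem~\ref{thm:zeroes}. Your ``main obstacle'' simply does not arise.
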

\begin{proof}
Consider any valid coding scheme with locality $r=2(N-1)/N$. There exists an $i \in [N]$ such that
\begin{equation*}
\frac{2(N-1)}{N} = r = r_i = \frac{|R_i|}{M},
\end{equation*} 
that is 
\begin{equation*}
 M = \frac{|R_i|\,N}{2(N-1)}.
\end{equation*} 
If $N$ is odd, $N$ and $2(N-1)$ have no common factors, and since $M$ is an integer, we deduce that $M$ must be a multiple of $N$, i.e., $M \geq N$. If $N$ is even, using the fact $N/2$ and $N-1$ have no common factors we arrive at $M \geq N/2$.
\end{proof}

From Lemma~\ref{lem:minimum_M_for_locality}, it is clear that the minimum $M$ required to attain $r=2(N-1)/N$ at rate $N-1$ is $M=N$ if $N$ is odd and $M=N/2$ if $N$ is even. We will now derive the optimal locality when the message length is smaller than this quantity, i.e., $M<N$. We do so by analysing the two cases, $M < N/2$ and $N/2 \leq M < N$.

\subsubsection*{Locality when $M < N/2$}

From~\eqref{eq:ravg_beta_bound_N_cycle} we deduce that for any vector linear scheme of rate $\beta=N-1$, we have 
\begin{equation*}
 \ravg \geq 2(N-1)/N.
\end{equation*} 
Thus, $\sum_{i=1}^{N}|R_i| = MN\ravg \geq 2M(N-1)$. It follows that there exists an $i \in [N]$ such that 
\begin{equation} \label{eq:size_of_Ri_M_and_N}
|R_i| \geq \frac{2M(N-1)}{N} = 2M - \frac{M}{N/2}.
\end{equation} 
If $M<N/2$, considering the fact that $|R_i|$ is an integer, we deduce that $|R_i| \geq 2M$. Hence, $r_i=|R_i|/M \geq 2$, and thus, $r \geq 2$. This lower bound on $r$ can be achieved by simply using the scalar linear code of Example~\ref{ex:simple_code_cycle} $M$ times, leading to a vector linear code for message length $M$, rate $N-1$ and $r=2$. Note that this code still achieves the optimal value of average locality $\ravg=2(N-1)/N$.

\subsubsection*{Locality when $N/2 \leq M < N$}

If $N$ is even, the message length $M=N/2$ is sufficient to attain $r=2(N-1)/N$. Thus it is enough to consider larger values of $M$, i.e., \mbox{$N/2 \leq M < N$} only for $N$ odd.
From~\eqref{eq:size_of_Ri_M_and_N} and using the fact that $|R_i|$ is an integer, we arrive at $|R_i| \geq 2M-1$. Thus, 
\begin{equation*}
r \geq r_i \geq 2 - \frac{1}{M}.
\end{equation*} 
Assuming $N$ is odd, this lower bound on $r$ is achieved by time sharing the $M$ scalar linear codes from Section~\ref{sec:achievability_directed_cycle} corresponding to the tuples of localities 
\begin{equation*}
\pmb{r}_1,\pmb{r}_3,\dots,\pmb{r}_{N-3},\pmb{r}_N,\pmb{r}_2,\pmb{r}_4,\dots,\pmb{r}_{2M-(N+1)},
\end{equation*} 
see~\eqref{eq:r_tuples}. It is straightforward to show that this scheme has rate $N-1$, $r=2\,-\,1/M$ and $\ravg=2(N-1)/N$. 
Note that in the interval $N/2 \leq M < N$, the value of the optimal locality increases with $M$. Hence, the choice $M=(N+1)/2$ yields the smallest locality in this interval. 
\begin{example}
Consider the problem of designing a vector linear index code for the directed $5$-cycle, i.e., $N=5$, with message length $M=3$.
Note that $N$ is odd and $N/2 \leq M$.
%% R1version %%
From Example~\ref{ex:simple_code_cycle} and~\eqref{eq:r_tuples}, we know that there exist three scalar linear encoders, each of rate $4$, for this index coding problem for which the tuple of receiver localities are
\begin{equation*}
\pmb{r}_1 = (1,1,2,2,2),\,\pmb{r}_3 = (2,2,1,1,2)~\text{and}~\pmb{r}_5=(1,2,2,2,1).
\end{equation*}
To arrive at a vector linear scheme of message length $M=3$, we consider three generations of scalar messages and encode them using the above three scalar linear index coding schemes, respectively. The overall rate of this time-sharing scheme is $\beta=4$. The total number of codeword symbols observed by each receiver for this vector linear code is $|R_1|=4$, $|R_2|=5$, $|R_3|=5$, $|R_4|=5$ and $|R_5|=5$. Thus, this scheme has $r=5/3$ and $\ravg=8/5$, which are equal to $2-1/M$ and $2(N-1)/N$, respectively.
%% R1version %%
\end{example}

\section{Locality-Broadcast Rate Trade-Off \\ of Directed $3$-Cycle} \label{sec:3cycle}

Let $G$ be the directed $3$-cycle i.e., $N=3$ and $K_1=\{2\}$, $K_2=\{3\}$ and $K_3=\{1\}$.
In this section we identify the optimal locality-rate trade-off function $\beta_G^*(r)$. 
The converse presented in this section applies to any valid index code (including non-linear codes), while the achievability proof uses the vector linear scheme of Section~\ref{sec:sub:directed_cycles}.
Note that for this index coding problem the trade-off curves $\beta_G^*(r)$ and $\beta_{G,q}^*(r)$ are identical. 
%  
%% We will now characterize its locality-rate trade-off given by the optimal broadcast rate function $\beta_G^*(r)$ using the achievability schemes of Section~\ref{sec:fractional_coloring}~and~\ref{sec:ais_cover} and a converse based on information inequalities. 
%% The objective of this section is to prove % (see Example~\ref{ex:first})
The main result of this section is
\begin{tcolorbox}[colback=white,boxsep=3pt,left=2pt,right=2pt,top=2pt,bottom=2pt]
\begin{theorem} \label{thm:3cycle}
For the directed $3$-cycle $G$, the optimal locality-broadcast rate trade-off function (considering non-linear codes also) is 
\begin{equation*} % \label{eq:3cycle_tradeoff}
 \beta_G^*(r) = \max\{6-3r,2\} \text{ for all } r \geq 1.
\end{equation*} 
\end{theorem}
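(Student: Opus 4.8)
The plan is to establish the two pieces of the formula separately: $\beta_G^*(r) \geq \max\{6-3r,\,2\}$ (converse) and the matching achievability. The achievability for the line segment $\beta = 6-3r$ over $1 \le r \le 4/3$ follows from Theorem~\ref{thm:directed_cycles} specialized to $N=3$, which gives $\beta^*_{G,q}(r) = \max\{3(2-r),2\}$; the endpoint $(r,\beta)=(1,3)$ is uncoded transmission and the endpoint $(r,\beta)=(4/3,2)$ is the vector linear scheme of Section~\ref{sec:achievability_directed_cycle} with $M=N=3$, and time-sharing fills in between. Since rate $2$ is already achieved at $r=4/3$, all points with $r \ge 4/3$ are covered. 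Also the bound $\beta_G^*(r) \ge 2 = \betaopt(G)$ holds for all $r$ because $2$ is the unconstrained optimal broadcast rate of the directed $3$-cycle (e.g.\ the maximum acyclic induced subgraph has size $2$). So the real work is the converse inequality $\beta_G^*(r) \ge 6 - 3r$ for $1 \le r \le 4/3$, valid against \emph{arbitrary}, possibly nonlinear, index codes.

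For the nonlinear converse I would argue information-theoretically. Fix any valid index code with message length $M$, codeword length $\ell$, queries $R_1, R_2, R_3 \subseteq [\ell]$, and assume $\xb_1,\xb_2,\xb_3$ i.i.d.\ uniform over $\Ac^M$. The decoding constraint at $\rx_i$ says $H(\xb_i \mid \cb_{R_i}, \xb_{K_i}) = 0$, i.e.\ for the $3$-cycle, $H(\xb_1 \mid \cb_{R_1},\xb_2)=H(\xb_2 \mid \cb_{R_2},\xb_3)=H(\xb_3 \mid \cb_{R_3},\xb_1)=0$. The goal is to lower bound $\ell = H(\cb) \ge H(\cb_{R_1 \cup R_2 \cup R_3})$ — actually $R_1\cup R_2\cup R_3=[\ell]$ by our normalization — in terms of $M$ and $\max_i |R_i|$. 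A natural route: use a cyclic chain of conditioning. Since $\xb_3$ is recoverable from $(\cb_{R_3},\xb_1)$, and $\xb_1$ from $(\cb_{R_1},\xb_2)$, and $\xb_2$ from $(\cb_{R_2},\xb_3)$, a standard index-coding "acyclic" bound style argument gives $H(\cb) \ge H(\xb_1,\xb_2) = 2M$, recovering the $\beta \ge 2$ piece. To get the stronger $\ell \ge (6-3r)M$ when $r$ is close to $1$, I would instead bound the contribution of the symbols queried by only a single receiver. The key structural inputs are the analogues of Lemmas~\ref{lem:single_queries} and the converse argument of Section~\ref{sec:converse_directed_cycle}, but made to work entropically: count $|\Sc_i|$, the number of coded symbols seen only by $\rx_i$, show $\sum_i |\Sc_i| \ge 2\ell - 3M\ravg \ge 2\ell - 3Mr$, pick the largest $|\Sc_i|$, say $|\Sc_3| \ge (2\ell - 3Mr)/3$, and then show via a rank-type / entropy-type argument that $\ell \ge 2M + |\Sc_3|$. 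Combining $\ell \ge 2M + (2\ell-3Mr)/3$ and rearranging yields $\ell \ge 3M(3-r)$, i.e.\ $\beta \ge 3(3-r)$ — wait, that would be too strong; the correct bookkeeping should produce $\ell \ge 3M(2-r) \cdot$ something, so the constants need care. The precise linear-algebra step of Section~\ref{sec:converse_directed_cycle} gives, for linear codes, $\ell \ge M(N-1) + |\Sc_N|$ with $N=3$, i.e.\ $\ell \ge 2M + |\Sc_3|$; substituting $|\Sc_3|\ge(2\ell - 3M\ravg)/3$ gives $\ell \ge 2M + (2\ell-3M\ravg)/3$, hence $\ell/3 \ge 2M - M\ravg$, i.e.\ $\beta \ge 6 - 3\ravg \ge 6 - 3r$, which is exactly the claimed bound. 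So for linear codes the converse is immediate from earlier results; the content here is doing the same for nonlinear codes.

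Hence the main obstacle — and the part I expect the authors actually prove — is replacing the linear-algebraic inequality $\ell \ge 2M + |\Sc_3|$ by a purely information-theoretic one. The idea would be: the $|\Sc_3|$ symbols in $\cb_{\Sc_3}$ are queried only by $\rx_3$, so they carry information "about $\xb_3$ only" in a suitable conditional sense; meanwhile, by the cyclic decodability, $\cb$ must already contain $2M$ worth of entropy to let $\rx_1$ and $\rx_2$ decode; the claim is that $\cb_{\Sc_3}$ contributes $|\Sc_3|$ fresh entropy on top of that. Formally I would try to prove $H(\cb) \ge H(\cb_{[\ell]\setminus\Sc_3}) + H(\cb_{\Sc_3}\mid \xb_1, \xb_2) $ and separately $H(\cb_{[\ell]\setminus\Sc_3}) \ge 2M$ and $H(\cb_{\Sc_3}\mid\xb_1,\xb_2) \ge |\Sc_3| - (\text{slack})$, being careful that "$H(\cb_{\Sc_3}\mid \xb_1,\xb_2)$ is large" uses that $\rx_3$ decodes $\xb_3$ from $\cb_{R_3}$ and $\xb_1$, and that the symbols outside $\Sc_3$ queried by $\rx_3$ are shared. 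This is the delicate step: controlling how much of $\cb_{\Sc_3}$'s entropy could already be "explained" by the rest of the codeword. I would model the argument on the submodularity manipulations used to prove the maximum-acyclic-subgraph bound, augmented with the observation that a single-queried symbol cannot help any receiver other than its owner, so removing it from the codeword only hurts $\rx_3$; a careful chain-rule expansion conditioning successively on $\xb_1$, then $\xb_2$, then the shared symbols $\cb_{R_3\setminus\Sc_3}$, should isolate the $|\Sc_3|$-term. Once that inequality is in hand, the arithmetic above closes the converse, and combined with achievability we get $\beta_G^*(r)=\max\{6-3r,2\}$.
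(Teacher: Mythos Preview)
Your achievability and the lower bound $\beta_G^*(r)\ge 2$ match the paper exactly. For the converse $\beta\ge 6-3r$, however, you take a genuinely different route from the paper, and you stop just short of the finish line.

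Your plan is to port the linear converse of Section~\ref{sec:converse_directed_cycle}: pick $i$ with $|\Sc_i|$ largest, use the combinatorial bound $|\Sc_i|\ge (2\ell-3M\ravg)/3$ from Lemma~\ref{lem:single_queries} (whose proof is pure counting and does not use linearity), and combine it with $\ell\ge 2M+|\Sc_i|$. You flag this last inequality as ``the delicate step'' and leave it as a sketch, worrying about how much entropy of $\cb_{\Sc_i}$ could be ``explained'' by the rest of the codeword. In fact it is immediate and needs none of that. By definition of $\Sc_i$, for $j\ne i$ we have $R_j\subseteq[\ell]\setminus\Sc_i$. For the $3$-cycle, conditioning on $\xb_i$ and the subcodeword $\cb_{[\ell]\setminus\Sc_i}$ one can decode the other two messages in a chain (e.g.\ for $i=3$: from $\xb_3$ and $\cb_{R_2}$ recover $\xb_2$, then from $\xb_2$ and $\cb_{R_1}$ recover $\xb_1$). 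Hence
\[
\ell-|\Sc_i| \;\ge\; H\!\left(\cb_{[\ell]\setminus\Sc_i}\right) \;\ge\; I\!\left(\xb_{[N]\setminus\{i\}};\,\cb_{[\ell]\setminus\Sc_i}\,\big|\,\xb_i\right) \;=\; 2M,
\]
which is exactly $\ell\ge 2M+|\Sc_i|$, with no linearity assumption. Substituting and rearranging gives $\beta\ge 6-3\ravg\ge 6-3r$, and your argument is complete.

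The paper does \emph{not} argue via $\Sc_i$. Instead it first symmetrizes using Lemma~\ref{lem:cyclic_symmetry} so that $|R_1|=|R_2|=|R_3|$ and the three pairwise intersections $|R_i\cap R_{i+1}|$ are equal, then proves entropically that $|R_i\cap R_{i+2}|\le 2(|R_i|-M)$, and finally plugs into the inclusion--exclusion lower bound $\ell\ge \sum_j|R_j|-\sum_j|R_j\cap R_{j+2}|$ to get $\ell\ge 6M-3|R_i|$, i.e.\ $\beta\ge 6-3r$. Your $\Sc_i$ route is arguably cleaner here (no symmetrization lemma needed) and even yields the slightly stronger $\beta\ge 6-3\ravg$. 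The paper's intersection-bound approach, on the other hand, is more in the spirit of Remark~\ref{rem:3cycle} and makes explicit why $N=3$ is special: for $N>3$ the inclusion--exclusion expansion has too many cross terms to control.
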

\end{tcolorbox}
% We will now first prove the achievability part which will provide an upper bound on $\beta_G^*(r)$, and then provide an information-theoretic converse to arrive at a lower bound. Theorem~\ref{thm:3cycle} will then follow immediately from these two bounds.

% The proof of the achievability part uses the vector linear index code of Section~\ref{sec:sub:directed_cycles}. 
Using $N=3$ in~\eqref{eq:N_cycle_tradeoff}, we have $\beta_{G,q}^*(r)=\max\{6-3r,2\}$ for $r \geq 1$. Achievability follows from observing that $\beta_G^*(r) \leq \beta_{G,q}^*(r)$, where the right hand side of the inequality considers only linear index codes.

In the rest of this section we prove the converse for Theorem~\ref{thm:3cycle}.
Towards proving the converse, we first present a result that exploits the symmetry in the side information graph.
Let $G$ be any single unicast index coding problem involving $N$ messages such that the cyclic permutation $\sigma$ on $[N]$ that maps $i \in [N]$ to $\sigma(i) = (i \mod N) + 1$ is an automorphism of $G$. 
Since we are considering non-linear codes too, an index code for $G$ is represented by the encoding function $\Ef$ and the decoding functions $\Df_1,\dots,\Df_N$ defined over a finite alphabet $\Ac$. 
The encoder is a function $\Ef:\Ac^{MN} \to \Ac^\ell$ that maps the message vectors $\xb_1,\dots,\xb_N \in \Ac^M$ to the codeword $\cb \in \Ac^\ell$. The decoder used at $\rx_i$ is a function $\Df_i:\Ac^{|R_i|} \times \Ac^{M|K_i|} \to \Ac^M$ which maps the vectors $\cb_{R_i}$ and $\xb_j$, $j \in K_i$, to $\xb_i$.
We represent an index code by the tuple $(\Ef,\Df_1,\dots,\Df_N)$ of encoding and decoding functions.

\begin{lemma} \label{lem:cyclic_symmetry}
Let the cyclic permutation $\sigma$ be an automorphism of $G$,
and $(\Ef,\Df_1,\dots,\Df_N)$ be a valid index code for $G$ with rate $\beta$ and locality $r$.
%  and if there exists an index code $(\Ef,\Df_1,\dots,\Df_N)$ for $G$ with broadcast rate $\beta$, 
Then there exists a valid index code $(\Ef',\Df_1',\dots,\Df_N')$ for $G$ with rate $\beta$ and locality at the most $r$ such that the index sets of codeword symbols observed by the $N$ receivers $R_1',\dots,R_N'$ for this code satisfy:
\begin{enumerate}%[(i)]
\item[\emph{(i)}] $|R_1'|=|R_2'|\cdots=|R_N'|$; and 
\item[\emph{(ii)}] $|R_1' \cap R_2'| = |R_2' \cap R_3'| = \cdots = |R_N' \cap R_1'|$.
\end{enumerate} 
\end{lemma}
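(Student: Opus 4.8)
The plan is to exploit the cyclic automorphism $\sigma$ by averaging: start from the given code $(\Ef,\Df_1,\dots,\Df_N)$ and build a new code that time-shares over all $N$ cyclic relabelings of the original, so that the resulting code ``looks the same'' from every receiver's vantage point and therefore has the claimed symmetry in the sizes $|R_i'|$ and the pairwise intersections $|R_i' \cap R_{i+1}'|$.

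Concretely, for each $s \in \{0,1,\dots,N-1\}$ I would define a code $(\Ef^{(s)},\Df_1^{(s)},\dots,\Df_N^{(s)})$ as follows. Since $\sigma$ is an automorphism of $G$, applying $\sigma$ to the message indices (and correspondingly to the demand/side-information structure) gives again a valid instance of the same problem $G$. So let $\Ef^{(s)}$ be the encoder that, on input $(\xb_1,\dots,\xb_N)$, runs the original encoder $\Ef$ on the permuted tuple $(\xb_{\sigma^{-s}(1)},\dots,\xb_{\sigma^{-s}(N)})$, and let receiver $\rx_i$ in the $s$-th code decode using $\Df_{\sigma^s(i)}$ applied to the appropriate codeword coordinates and its side information; because $\sigma^s$ maps the side-information structure of $\rx_i$ in $G$ exactly onto that of $\rx_{\sigma^s(i)}$, this decoding is valid. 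The $s$-th code has the same codeword length $\ell$ and message length $M$ as the original, and its query set for $\rx_i$ is $R_i^{(s)} = R_{\sigma^s(i)}$, i.e.\ the original query sets cyclically relabeled. Then I form $(\Ef',\Df_1',\dots,\Df_N')$ by time-sharing (concatenating) these $N$ codes over $N$ independent message generations: message length becomes $NM$, codeword length becomes $N\ell$, so the rate is unchanged at $\beta = \ell/M$, and the query set of $\rx_i$ is $R_i' = \bigsqcup_{s=0}^{N-1} R_i^{(s)}$ (a disjoint union across the $N$ blocks), giving $|R_i'| = \sum_{s=0}^{N-1} |R_{\sigma^s(i)}| = \sum_{j=1}^N |R_j|$, which is independent of $i$. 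This proves (i), and since $|R_i'| = \sum_j |R_j| \le N \cdot rM$ while the new message length is $NM$, the new locality is $\max_i |R_i'|/(NM) \le r$. For (ii), $R_i'$ and $R_{i+1}'$ are disjoint unions over blocks, so $|R_i' \cap R_{i+1}'| = \sum_{s=0}^{N-1} |R_{\sigma^s(i)} \cap R_{\sigma^s(i+1)}|$; because $\sigma^s$ sends the consecutive pair $(i,i+1)$ to the consecutive pair $(\sigma^s(i),\sigma^s(i+1)) = (\sigma^s(i),\sigma^s(i)+1)$, this sum ranges over all $N$ consecutive pairs regardless of $i$, hence is independent of $i$.

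The one technical point to be careful about — and the place I expect the most bookkeeping — is verifying that the cyclically-relabeled code is genuinely a \emph{valid} index code for the same $G$: one must check that when $\sigma$ is an automorphism, the map $\rx_i \mapsto \rx_{\sigma^s(i)}$ together with the relabeling of messages preserves the demand set $\{i\} \mapsto \{\sigma^s(i)\}$ and the side-information set $K_i \mapsto K_{\sigma^s(i)}$ (this is exactly what ``$\sigma$ is an automorphism of $G$'' means, applied $s$ times), and that the codeword coordinates $\rx_i$ needs to read in the $s$-th code are precisely $R_{\sigma^s(i)}$ of that block. Once this is pinned down, the identities for $|R_i'|$ and $|R_i' \cap R_{i+1}'|$ follow from the fact that $\{\sigma^s(i) : s = 0,\dots,N-1\} = [N]$ and $\{(\sigma^s(i),\sigma^s(i)+1) : s=0,\dots,N-1\}$ is the full set of consecutive pairs, both of which hold because $\sigma$ is an $N$-cycle. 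No nonlinearity issue arises since we only ever permute message and codeword coordinates and concatenate across generations, operations under which an arbitrary (possibly non-linear) encoder/decoder pair remains an encoder/decoder pair.
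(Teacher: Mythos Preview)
Your proposal is correct and is essentially the same as the paper's proof: time-share the $N$ cyclic relabelings of the given code so that each receiver's query set becomes the (disjoint) union of all the original $R_j$'s across blocks, making both $|R_i'|$ and $|R_i'\cap R_{i+1}'|$ independent of $i$. The paper's version differs only in bookkeeping (it indexes the shifts so that $\rx_i$ uses $\Df_{(i-n)_N}$ and $R_{(i-n)_N}$) and observes slightly more, namely that the new locality is exactly $\ravg \le r$ rather than just $\le r$.
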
 
\begin{proof}
See Appendix~\ref{app:lem:cyclic_symmetry}.
\end{proof}

Lemma~\ref{lem:cyclic_symmetry} shows that, without affecting the broadcast rate and the locality, we can assume that the receiver queries in any valid index code for $G$ satisfy the two symmetry properties listed above.
In the rest of this section we will assume that $G$ is a directed $3$-cycle, and $(\Ef,\Df_1,\Df_2,\Df_3)$ is a valid index code for $G$ such that $|R_1|=|R_2|=|R_3|$ and $|R_1 \cap R_2| = |R_2 \cap R_3|=|R_3 \cap R_1|$.

%% R1version %%
For the sake of brevity, we abuse the notation mildly by using $i+1$ to denote the receiver index $(i\!\! \mod 3) + 1$, 
and similarly we use $i+2$ to denote $\big((i+1) \!\! \mod 3\big) + 1$. 
With this notation, for $i=1,2,3$, the side information index set of the $i^{\text{th}}$ receiver is $K_i=\{i+1\}$.
%% R1version %%
Assume, as usual, that the messages $\xb_1,\xb_2,\xb_3$ are independently and uniformly distributed in $\Ac^M$.

Now considering the decoding operation at $\rx_i$, we have \mbox{$I(\xb_i;\cb_{R_i}|\xb_{i+1})=H(\xb_i)=M$}. Expanding this term as a difference of conditional entropies, we have
$H(\cb_{R_i}|\xb_{i+1}) - H(\cb_{R_i}|\xb_i,\xb_{i+1}) = M$. Using  this with the upper bound $H(\cb_{R_i}|\xb_{i+1}) \leq H(\cb_{R_i}) \leq |R_i|$, we arrive at
\begin{equation*}
H(\cb_{R_i}|\xb_i,\xb_{i+1}) \leq |R_i| - M.
\end{equation*} 
Using the above inequality and the fact that $\cb_{R_i}$ is a deterministic function of all three messages $\xb_i,\xb_{i+1},\xb_{i+2}$, we have $I(\cb_{R_i};\xb_{i+2}|\xb_i,\xb_{i+1})=H(\cb_{R_i}|\xb_i,\xb_{i+1}) \leq |R_i|-M$. Hence,
\begin{align*}
H(\xb_{i+2}|\xb_i,\xb_{i+1}) - H(\xb_{i+2}|\cb_{R_i},\xb_i,\xb_{i+1}) \leq |R_i| - M.
\end{align*} 
Since $H(\xb_{i+2}|\xb_i,\xb_{i+1})=M$, we obtain the lower bound
\begin{equation} \label{eq:3cycle_conv_1}
H(\xb_{i+2}|\cb_{R_i},\xb_i,\xb_{i+1}) \geq 2M-|R_i|. 
\end{equation} 

Our objective now is to use the above inequality to obtain an upper bound on $|R_i \cap R_{i+2}|$, which can then be translated into a lower bound on $\ell$, and hence, a lower bound on $\beta$. 
To do so, observe that $\cb_{R_i}$ is composed of the following two sub-vectors $\cb_{R_i \cap R_{i+2}}$ and $\cb_{R_i \setminus R_{i+2}}$. 
Using~\eqref{eq:3cycle_conv_1}, we obtain
\begin{align*}
H(\xb_{i+2}|\cb_{R_i \cap R_{i+2}},\xb_i) \! \geq H(\xb_{i+2}|\cb_{R_i},\xb_i,\xb_{i+1}) \! \geq 2M-|R_i|.
\end{align*} 
Using this inequality, and the relation $H(\xb_{i+2}|\cb_{R_{i+2}},\xb_i)=0$ (to satisfy the demands of $\rx_{i+2}$), we obtain the following
\begin{align*}
|R_{i+2} \setminus R_i| &\geq H(\cb_{R_{i+2} \setminus R_i}) 
\geq I(\xb_{i+2};\cb_{R_{i+2} \setminus R_i}|\cb_{R_i \cap R_{i+2}},\xb_i) \\
&= H(\xb_{i+2}|\cb_{R_i \cap R_{i+2}},\xb_i) \\ 
&~~~~~~~~~~~~~~~~ - H(\xb_{i+2}|\cb_{R_{i+2}\setminus R_i},\cb_{R_i \cap R_{i+2}},\xb_i) \\
&= H(\xb_{i+2}|\cb_{R_i \cap R_{i+2}},\xb_i) - H(\xb_{i+2}|\cb_{R_{i+2}},\xb_i) \\
&\geq 2M - |R_i|.
\end{align*} 
Since $|R_1|=|R_2|=|R_3|$, we now have
\begin{align*}
|R_i \cap R_{i+2}| &= |R_{i+2}| - |R_{i+2} \setminus R_i| \\
& \leq |R_i| - (2M - |R_i|) \\
& = 2\left(|R_i|-M \right).
\end{align*} 
Finally, since $|R_i \cap R_{i+2}|$ is independent of $i$, we have
\begin{align*}
\ell = |R_1 \cup R_2 \cup R_3| 
&\geq \sum_{j=1}^{3} |R_j| - \sum_{j=1}^{3} |R_j \cap R_{j+2}| \\ %% R1version %%
&= 3|R_i| - 3|R_i \cap R_{i+2}| \\
&\geq 3|R_i| - 3 \times 2(|R_i|-M) \\
&= 6M - 3|R_i|.
\end{align*} 
Dividing both sides by the message length $M$, and remembering that all the receivers have the same locality $r=r_1=r_2=r_3$, we have $\beta=\ell/M \geq 6-3r$. Thus we have 
\begin{align*}
\beta_G^*(r) \geq 6-3r \text{ for all } r \geq 1.
\end{align*} 
Further, the minimum possible broadcast rate $\betaopt(G)=2$, and hence, $\beta_G^*(r) \geq 2$ for all $r \geq 1$. Combining this with the above inequality we have arrived at the converse
% \begin{equation*} % \label{eq:3cycle_converse}
$\beta_G^*(r) \geq \max\{6-3r,2\}$ for all $r \geq 1$.
% \end{equation*} 

%% R1version %%
\begin{remark} \label{rem:3cycle}
The main idea behind the derivation of $\beta_G^*(r)$ for the directed $3$-cycle in this section is to lower bound the codelength using the identity
% \begin{equation} \label{rep:eq:3_cycle}
$\ell = |R_1 \cup R_2 \cup R_3| \geq \sum_{i=1}^{3} |R_i| \, - |R_1 \cap R_2| - |R_2 \cap R_3| - |R_3 \cap R_1|$. % \tag{E6}
% \end{equation} 
The symmetry of the side information graph implies, through Lemma~\ref{lem:cyclic_symmetry}, that it is enough to find or bound the values of $|R_1|$ and $|R_1 \cap R_2|$.
The other terms in this inequality are equal to either $|R_1|$ or $|R_1 \cap R_2|$.
% Since $|R_1|=rM$, the rest of the technical work in the proof of Theorem~5 is towards bounding the value of $|R_1 \cap R_2|$.
% in order to use~\eqref{rep:eq:3_cycle}.
% 
In contrast, for directed $N$-cycles with $N>3$, we have
\begin{equation} \label{rep:eq:N_cycle}
 \ell = \left|\cup_{i=1}^{N} R_i\right|  \geq \sum_{i=1}^{N} |R_i| - \sum_{i \neq j} |R_i \cap R_j|. 
\end{equation}
The symmetry of the side information graph and the results in Lemma~\ref{lem:cyclic_symmetry} seem to be only partially helpful here. 
Although it is true that $|R_1|=\cdots=|R_N|$ (Lemma~\ref{lem:cyclic_symmetry}, property~\emph{(i)}), there is no guarantee that the terms $|R_i \cap R_j|$, $i \neq j$, are all equal.
It is possible that $|R_1 \cap R_2| \neq |R_1 \cap R_3|$ when $N > 3$.
In general,~\eqref{rep:eq:N_cycle} contains a large number of terms that need to be individually analyzed and bounded. 
% This seems to increase the difficulty of the problem significantly. 
\end{remark}
%% R1version %%

\section{Scalar Linear Codes for Large Minrank} \label{sec:minrank_N_minus_1}

In this section we analyze scalar linear codes that are locally decodable. If the minrank of the side-information graph $G$ is $N$, then $G$ is a directed acyclic graph and uncoded transmission is an optimal scalar linear index code. For this scheme, the locality of each receiver $r_i=1$. Thus, we observe that when $\minrk(G)=N$, we can achieve the optimal broadcast rate $\beta=N$ and the minimum possible locality $r = \ravg = 1$ simultaneously.
Thus, the problem of designing locally decodable index codes is interesting only when $\minrk(G) \leq N-1$.

In Section~\ref{sec:sub:scalar_structure} we identify some structural properties of scalar linear codes for arbitrary index coding problems $G$. 
These results are then specialized in Section~\ref{sec:sub:minrank_large} to the family of index coding problems for which the number of messages is one greater than minrank, and the optimal trade-off between broadcast rate and locality is derived for this family of problems.
Finally, in Section~\ref{sec:sub:receiver_localities_cycles} we consider directed $N$ cycles (minrank is $N-1$ for these problems) and characterize the set of all possible vectors of receiver localities $\pmb{r}=(r_1,\dots,r_N)$ that can be achieved through scalar linear coding when the broadcast rate is $N-1$.

\subsection{Preliminaries: Locally Decodable Scalar Linear Codes} \label{sec:sub:scalar_structure}

We will now derive a few properties of scalar linear index codes that will be useful in relating broadcast rate to locality. Note that, in this case message length $M=1$, demand set $\Db_i=\{i\}$ and $\Kb_i=K_i$ for all $i \in [N]$. 
%% R1version %%
The locality of each receiver $r_i=|R_i|/M = |R_i|$ is an integer, and so is the overall locality $r=\max_i r_i$. 
%% R1version %%

We say that a matrix $\pmb{A} \in \Fb_q^{N \times N}$ \emph{fits} $G=(\mathcal{V},\mathcal{E})$ if the diagonal elements of $\pmb{A}$ are all equal to $1$, and the $(j,i)^\tth$ entry of $\pmb{A}$ is zero if $j \notin K_i$, i.e., $\pmb{A}_{j,i}=0$ if $(i,j) \notin \mathcal{E}$. The \emph{minrank} of $G$ over $\Fb_q$ is the minimum among the ranks of all possible matrices $\pmb{A} \in \Fb_q^{N \times N}$ that fit $G$, and is denoted as $\minrank(G)$. It is known that the smallest possible scalar linear index coding rate is equal to $\minrank(G)$~\cite{YBJK_IEEE_IT_11,DSC_IT_12}. We also know from~\cite{YBJK_IEEE_IT_11,DSC_IT_12} that a matrix $\pmb{L}$ is a valid encoder matrix for $G$ if and only if for each receiver $i \in [N]$, there exists a vector $\pmb{u}_i \in \Fb_q^N$ such that $\pmb{u}_i \lhd K_i$ and $\pmb{u}_i + \pmb{e}_i \in \colsp(\pmb{L})$, where $\colsp$ denotes the column span of a matrix. 
If $\pmb{L}$ is a valid encoder matrix, stacking these vectors we obtain the $N \times N$ matrix 
\begin{equation*}
\pmb{A} = \left[ \pmb{u}_1 + \pmb{e}_1~~\pmb{u}_2+\pmb{e}_2~\cdots~\pmb{u}_N+\pmb{e}_N  \right].
\end{equation*} 
Notice that $\pmb{A}$ fits $G$ and $\colsp(\pmb{A}) \subseteq \colsp(\pmb{L})$. We will say that $\pmb{A}$ is a \emph{fitting matrix corresponding to the encoder matrix} $\pmb{L}$.

Suppose $\pmb{L} \in \Fb_q^{N \times \ell}$ is a valid scalar linear encoder and the queries of the $N$ receivers are $R_1,\dots,R_N \subseteq [\ell]$. 
% Let $\pmb{A}$ be any fitting matrix corresponding to $\pmb{L}$. Denote the columns of $\pmb{A}$ as $\pmb{A}_1,\dots,\pmb{A}_N$. 
From Theorem~\ref{thm:design_criterion}, for each $i \in [N]$, there exists a vector $\pmb{u}_i \lhd K_i$ such that $\pmb{u}_i + \pmb{e}_i \in \spp(\pmb{L}_k, k \in R_i)$.
Thus, there exist scalars $\alpha_{i,k}$, $k \in R_i$, such that $\pmb{u}_i+\pmb{e}_i = \sum_{k \in R_i} \alpha_{i,k}\pmb{L}_k$. 
The $i^\tth$ receiver decodes its demand by computing \mbox{$\sum_{k \in R_i}\alpha_{i,k}c_k - \pmb{x}^\tr \pmb{u}_i$}, which is equal to
\begin{align*}
\sum_{k \in R_i} \alpha_{i,k} \pmb{x}^\tr\pmb{L}_k - \pmb{x}^\tr\pmb{u}_i = \pmb{x}^\tr(\pmb{u}_i + \pmb{e}_i) - \pmb{x}^\tr\pmb{u}_i = x_i.
\end{align*} 
Notice that the receiver can compute $\pmb{x}^\tr\pmb{u}_i$ using its side information since $\supp(\pmb{u}_i) \subseteq K_i$.
We will assume that each scalar $\alpha_{i,k}$ is non-zero since if $\alpha_{i,k}=0$ the receiver does not need to query the coded symbol $c_k$.
Finally, notice that stacking the vectors $\pmb{u}_i + \pmb{e}_i$, $i \in [N]$, we obtain a fitting matrix $\pmb{A}$ corresponding to $\pmb{L}$.
  
For certain choices of $S \subseteq [N]$, we will now relate the sizes of $R_i$, $i \in S$, and their union $\cup_{i \in S}R_i$. Let $\nullsp(\pmb{A})$ denote the null space of $\pmb{A}$.

\begin{lemma} \label{lem:fitting_support}
Let $\pmb{A}$ be any fitting matrix corresponding to a valid scalar linear encoder $\pmb{L}$, $S \subseteq [N]$ be the support of a non-zero vector in $\nullsp(\pmb{A})$, and let the vectors $\pmb{L}_k$, $k \in \cup_{i \in S}R_i$, be linearly independent. Then
\begin{equation*}
\sum_{i \in S}|R_i| ~\geq~ 2\,\left| \bigcup_{i \in S} R_i \right|.
\end{equation*} 
\end{lemma}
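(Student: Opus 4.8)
The plan is to exploit the interplay between the null-space vector supported on $S$ and the decodability condition from Theorem~\ref{thm:design_criterion}. Let $\pmb{v} \in \nullsp(\pmb{A})$ be nonzero with $\supp(\pmb{v}) = S$, so $\pmb{A}\pmb{v} = \pmb{0}$. Writing $\pmb{A} = [\pmb{u}_1 + \pmb{e}_1, \dots, \pmb{u}_N + \pmb{e}_N]$, the relation $\sum_{i \in S} v_i (\pmb{u}_i + \pmb{e}_i) = \pmb{0}$ rearranges to $\sum_{i \in S} v_i \pmb{e}_i = -\sum_{i \in S} v_i \pmb{u}_i$. The left-hand side is a vector whose support is exactly $S$ (since every $v_i \neq 0$ for $i \in S$). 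So $\pmb{w} := \sum_{i \in S} v_i \pmb{u}_i$ satisfies $\supp(\pmb{w}) = S$. On the other hand, each $\pmb{u}_i + \pmb{e}_i$ lies in $\spp(\pmb{L}_k, k \in R_i)$, hence $\sum_{i \in S} v_i(\pmb{u}_i + \pmb{e}_i) = \pmb{0}$ is automatically consistent; the useful consequence is a counting argument on how the symbols in $\cup_{i \in S} R_i$ must be ``used'' by the receivers in $S$.

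The heart of the argument should be the following: consider the submatrix $\pmb{L}_U$ formed by the columns $\pmb{L}_k$, $k \in U := \cup_{i \in S} R_i$, which are linearly independent by hypothesis, so $|U| = \rank(\pmb{L}_U)$. For each $i \in S$, the vector $\pmb{u}_i + \pmb{e}_i \in \colsp(\pmb{L}_U)$, so there is a unique coefficient vector $\pmb{\alpha}^{(i)} \in \Fb_q^{|U|}$ (supported on $R_i \subseteq U$) with $\pmb{L}_U \pmb{\alpha}^{(i)} = \pmb{u}_i + \pmb{e}_i$. From $\sum_{i \in S} v_i(\pmb{u}_i + \pmb{e}_i) = \pmb{0}$ and the injectivity of $\pmb{L}_U$ on $\Fb_q^{|U|}$, we get $\sum_{i \in S} v_i \pmb{\alpha}^{(i)} = \pmb{0}$ in $\Fb_q^{|U|}$. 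Now I would count coordinate by coordinate over $k \in U$: for each $k \in U$, the set of $i \in S$ with $\alpha^{(i)}_k \neq 0$ (equivalently $k \in R_i$, since all the $\alpha_{i,k}$ are assumed nonzero) cannot be a singleton — if only one $i \in S$ had $k \in R_i$ then the $k$-th coordinate of $\sum_{i \in S} v_i \pmb{\alpha}^{(i)}$ would be $v_i \alpha^{(i)}_k \neq 0$, a contradiction. Hence every $k \in U$ belongs to $R_i$ for at least two indices $i \in S$. Summing the incidences, $\sum_{i \in S} |R_i| = \sum_{k \in U} |\{i \in S : k \in R_i\}| \geq 2|U| = 2|\cup_{i \in S} R_i|$, which is exactly the claim.

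The main obstacle I anticipate is handling the linear independence and support bookkeeping cleanly: one must be careful that $\supp(\pmb{\alpha}^{(i)}) \subseteq R_i$ (true because $\pmb{u}_i + \pmb{e}_i \in \spp(\pmb{L}_k, k \in R_i)$ and the columns of $\pmb{L}_U$ are independent, so the representation in terms of $\pmb{L}_U$ must have zero coefficients outside $R_i$), and that $\alpha^{(i)}_k \neq 0$ for all $k \in R_i$, which is the standing normalization assumption stated just before the lemma. One also needs $v_i \neq 0$ for exactly the indices $i \in S$, i.e.\ $S = \supp(\pmb{v})$ precisely, which is given. With these pieces in place the counting step is routine. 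A secondary point worth checking is that the reduction to linearly independent $\pmb{L}_k$, $k \in U$, is harmless — but that is exactly the hypothesis of the lemma, so no extra work is needed there; in the application one invokes the earlier reduction that the columns of $\pmb{L}$ indexed by each receiver's query set (and, by iterating, by $\cup_{i \in S} R_i$ after pruning) may be taken linearly independent.
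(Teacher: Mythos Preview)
Your proposal is correct and follows essentially the same route as the paper's proof: write $\sum_{i\in S} v_i \pmb{A}_i = \pmb{0}$, expand each $\pmb{A}_i$ as $\sum_{k\in R_i}\alpha_{i,k}\pmb{L}_k$ with all $\alpha_{i,k}\neq 0$, use linear independence of the $\pmb{L}_k$ over $U=\cup_{i\in S}R_i$ to conclude the coefficient of each $\pmb{L}_k$ must vanish, hence each $k\in U$ lies in at least two of the $R_i$, and finish by double counting. Your first paragraph's observation about $\supp(\pmb{w})=S$ is not needed (and the paper does not use it), but it does no harm.
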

\begin{proof}
Denote the columns of $\pmb{A}$ by $\pmb{A}_1,\dots,\pmb{A}_N$. Let $\pmb{z} \in \nullsp(\pmb{A}) \setminus \{\pmb{0}\}$ be such that $S = \supp(\pmb{z})$. Since $\pmb{Az}=\pmb{0}$, we have $\sum_{i \in S}z_i\pmb{A}_i=\pmb{0}$, where the components $z_i$, $i \in S$, of the vector $\pmb{z}$ are non-zero.
Notice that there exist non-zero scalars $\alpha_{i,k}$ such that $\pmb{A}_i = \sum_{k \in R_i} \alpha_{i,k} \pmb{L}_k$. Hence, we have
\begin{align*}
\pmb{0} = \sum_{i \in S}z_i\pmb{A}_i = \sum_{i \in S}\sum_{k \in R_i} z_i\alpha_{i,k} \pmb{L}_k.
\end{align*} 
All the scalars $z_i\alpha_{i,k}$ in the above linear combination are non-zero, and the set of vectors $\pmb{L}_k$, $k \in \cup_{i \in S}R_i$, appearing in this linear combination are linearly independent. Hence, this linear combination is zero only if each $\pmb{L}_k$, where $k \in \cup_{i \in S}R_i$, appears at least twice in the expansion $\sum_{i \in S}\sum_{k \in R_i} z_i\alpha_{i,k} \pmb{L}_k$, i.e., only if each $k \in \cup_{i \in S}R_i$ is contained in at least two distinct sets $R_i$ and $R_j$, $i \neq j$ and $i,j \in S$. Then a simple counting argument leads to the statement of this lemma.
\end{proof}

% Lemma~\ref{lem:fitting_support} can be used to derive lower bounds on locality using the fact $\sum_{i \in S}|R_i| = \sum_{i \in S}r_i$ for scalar linear codes. 

For any $S \subseteq [N]$, let $G_S$ denote the subgraph of $G$ induced by the vertices in $S$, i.e., the vertex set of $G$ is $S$ and edge set is $\{(i,j) \in \mathcal{E}~\vert~ i,j \in S\}$. The subgraph $G_S$ is the side information graph of the index coding problem obtained by restricting the index coding problem $G$ to the messages \mbox{$x_i$, $i \in S$}.
The following result can be used to manipulate the bound in Lemma~\ref{lem:fitting_support}.
% to derive explicit lower bounds on locality.

\begin{lemma} \label{lem:lower_bound_on_union}
Let $\pmb{L}$ be the encoder matrix of a valid scalar linear index code for $G$ with receiver queries $R_1,\dots,R_N$. For any $S \subseteq [N]$, we have
\begin{equation*}
 \left| \bigcup_{i \in S} R_i \right| \geq \minrank(G_S).
\end{equation*} 
\end{lemma}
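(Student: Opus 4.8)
The plan is to show that the restriction of the scalar linear index code to the messages indexed by $S$ yields a valid scalar linear index code for the induced subproblem $G_S$, whose rate is then lower bounded by $\minrank(G_S)$. Concretely, first I would set $R_S = \bigcup_{i \in S} R_i$ and consider the submatrix $\pmb{L}' \in \Fb_q^{|S| \times |R_S|}$ obtained from $\pmb{L}$ by keeping only the rows indexed by $S$ and the columns indexed by $R_S$. The codeword symbols $c_k$, $k \in R_S$, when restricted to depend only on the variables $x_i$, $i \in S$ (i.e. setting $x_j = 0$ for $j \notin S$), are exactly $\pmb{x}_S^\tr \pmb{L}'_k$. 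The claim is that $\pmb{L}'$ is a valid encoder matrix for $G_S$ with receiver queries $R_i \subseteq R_S$, $i \in S$.

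To verify validity, I would invoke the decodability criterion of Theorem~\ref{thm:design_criterion} (specialized to $M=1$). For each $i \in S$, validity of the original code gives a vector $\pmb{u}_i \in \Fb_q^N$ with $\pmb{u}_i \lhd K_i$ and $\pmb{u}_i + \pmb{e}_i \in \spp(\pmb{L}_k, k \in R_i)$. Restricting these vectors to the coordinates in $S$ gives $\pmb{u}_i' \in \Fb_q^{|S|}$; I would check that $\supp(\pmb{u}_i') \subseteq K_i \cap S$, which is precisely the side information set of $\rx_i$ in $G_S$, and that $\pmb{u}_i' + \pmb{e}_i' \in \spp(\pmb{L}'_k, k \in R_i)$, where $\pmb{e}_i'$ is the standard basis vector in $\Fb_q^{|S|}$. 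The latter follows by projecting the linear combination $\pmb{u}_i + \pmb{e}_i = \sum_{k \in R_i}\alpha_{i,k}\pmb{L}_k$ onto the rows indexed by $S$. Hence $\pmb{L}'$ satisfies the decodability criterion for $G_S$, so it is a valid scalar linear encoder for $G_S$ of length $|R_S|$.

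Since the minimum scalar linear broadcast rate for $G_S$ equals $\minrank(G_S)$ (by~\cite{YBJK_IEEE_IT_11,DSC_IT_12}, as recalled in Section~\ref{sec:sub:scalar_structure}), and the restricted code has rate $|R_S|/1 = |R_S|$, we conclude $|R_S| = \big|\bigcup_{i \in S} R_i\big| \geq \minrank(G_S)$, which is the desired inequality.

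The main obstacle I anticipate is purely bookkeeping: being careful that the projection of $\pmb{u}_i$ to the $S$-coordinates really does land inside $K_i \cap S$ (this is immediate since $\supp(\pmb{u}_i) \subseteq K_i$, so intersecting with $S$ can only shrink the support), and that the projected columns $\pmb{L}'_k$ genuinely span the projected vector $\pmb{u}_i' + \pmb{e}_i'$ — the only subtlety is that column-span membership is preserved under deleting rows, which holds because a linear dependence among full columns restricts to a linear dependence among the shortened columns with the same coefficients. There is no need to worry about the linear independence hypotheses appearing in Lemma~\ref{lem:fitting_support}; this lemma holds for any valid scalar linear code without extra assumptions.
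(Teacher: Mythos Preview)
Your proposal is correct and follows essentially the same approach as the paper: restrict the encoder matrix to the rows indexed by $S$ and the columns indexed by $\cup_{i\in S}R_i$, observe this yields a valid scalar linear encoder for $G_S$, and conclude that its codelength $|\cup_{i\in S}R_i|$ is at least $\minrank(G_S)$. The paper's proof simply asserts the validity of the restricted encoder without spelling out the projection argument, whereas you verify it explicitly via Theorem~\ref{thm:design_criterion}; this extra care is fine and does not constitute a different method.
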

\begin{proof}
The submatrix $\pmb{L}_S$ of $\pmb{L}$ formed by the rows indexed by the set $S$ is a valid encoder matrix for the index coding problem $G_S$. 
Since the receivers $\rx_i$, $i \in S$, query only the coded symbols with indices \mbox{$k \in \cup_{i \in S}R_i$}, the submatrix of $\pmb{L}_S$ consisting of the columns with indices in $\cup_{i \in S}R_i$ is also a valid scalar linear encoder for $G_S$. Hence, the codelength $|\cup_{i \in S}R_i|$ of this index code is lower bounded by $\minrank(G_S)$.
\end{proof}

The next result follows immediately from Lemmas~\ref{lem:fitting_support} and~\ref{lem:lower_bound_on_union}.

\begin{tcolorbox}[colback=white,boxsep=3pt,left=2pt,right=2pt,top=2pt,bottom=2pt]
\begin{corollary} \label{cor:lower_bound_r}
If $\pmb{L}$ is an optimal scalar linear encoder for $G$, i.e., has codelength equal to $\minrank(G)$, $\pmb{A}$ is a fitting matrix corresponding to $\pmb{L}$ and $\pmb{z} \in \nullsp(\pmb{A}) \setminus \{\pmb{0}\}$, then
\begin{equation*}
\sum_{i \in S} r_i ~\geq~ 2 \,\minrank(G_S),
\end{equation*} 
where $S=\supp(\pmb{z})$.
\end{corollary}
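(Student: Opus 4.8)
The plan is to combine Lemma~\ref{lem:fitting_support} and Lemma~\ref{lem:lower_bound_on_union} directly, after checking that the hypotheses of Lemma~\ref{lem:fitting_support} are met in the current setting. Recall that since $\pmb{L}$ is scalar linear, the message length is $M=1$, so $r_i = |R_i|$ for every $i$; hence the claimed inequality $\sum_{i\in S} r_i \ge 2\,\minrank(G_S)$ is literally the inequality $\sum_{i\in S}|R_i| \ge 2\,\minrank(G_S)$ expressed in the locality notation.

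First I would verify the linear-independence hypothesis needed to invoke Lemma~\ref{lem:fitting_support}. By the normalization discussed just before this corollary (the same reduction used in Section~\ref{sec:linear_structure}), we may assume without loss of generality that for each receiver $i$ the columns $\pmb{L}_k$, $k \in R_i$, are linearly independent; moreover, since $\pmb{L}$ is an \emph{optimal} scalar linear encoder with codelength $\ell = \minrank(G)$, the matrix $\pmb{L} \in \Fb_q^{N\times \ell}$ has exactly $\ell$ columns, all of which are used (otherwise a shorter valid encoder would exist, contradicting optimality), and these $\ell$ columns are linearly independent because $\rank(\pmb{L}) \ge \rank(\pmb{A})$; but $\rank(\pmb{A}) = \minrank(G) = \ell$ forces $\rank(\pmb{L}) = \ell$, so all columns of $\pmb{L}$ are independent. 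In particular, for any subset $S \subseteq [N]$, the columns $\pmb{L}_k$, $k \in \bigcup_{i\in S} R_i$, form a subset of a linearly independent set and hence are themselves linearly independent, which is exactly the hypothesis required by Lemma~\ref{lem:fitting_support}.

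Next, with $\pmb{z} \in \nullsp(\pmb{A})\setminus\{\pmb{0}\}$ and $S = \supp(\pmb{z})$, Lemma~\ref{lem:fitting_support} applies verbatim (its hypotheses are: $\pmb{A}$ is a fitting matrix corresponding to a valid scalar linear encoder $\pmb{L}$, $S$ is the support of a nonzero null-space vector of $\pmb{A}$, and the relevant columns of $\pmb{L}$ are independent — all now established), yielding $\sum_{i\in S}|R_i| \ge 2\,\bigl|\bigcup_{i\in S} R_i\bigr|$. Then Lemma~\ref{lem:lower_bound_on_union}, applied to the same encoder and the same subset $S$, gives $\bigl|\bigcup_{i\in S} R_i\bigr| \ge \minrank(G_S)$. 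Chaining these two inequalities and substituting $r_i = |R_i|$ produces $\sum_{i\in S} r_i \ge 2\,\minrank(G_S)$, which is the assertion.

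I do not expect a genuine obstacle here — this corollary is essentially a bookkeeping composition of the two preceding lemmas — but the one point deserving care is the justification that an optimal scalar linear encoder can be taken to have full column rank and that the per-receiver independence normalization does not increase the locality; this is the same argument already spelled out in Section~\ref{sec:linear_structure} (removing redundant columns and shrinking queries never increases $r_i$ or $\ravg$), so I would simply cite that discussion rather than repeat it.
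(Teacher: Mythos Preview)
Your proposal is correct and follows essentially the same route as the paper: establish that $\pmb{L}$ has full column rank via the chain $\ell = \minrank(G) \leq \rank(\pmb{A}) \leq \rank(\pmb{L}) \leq \ell$, so that the columns indexed by $\cup_{i\in S} R_i$ are independent, then combine Lemma~\ref{lem:fitting_support} with Lemma~\ref{lem:lower_bound_on_union} and use $r_i=|R_i|$. The only difference is that you invoke the per-receiver independence normalization from Section~\ref{sec:linear_structure}, which is unnecessary here since full column rank of $\pmb{L}$ already delivers the required independence directly.
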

\end{tcolorbox}
\begin{proof}
The matrix $\pmb{L}$ has linearly independent columns since the number of columns $\ell$ of $\pmb{L}$ satisfies
\begin{equation*}
\ell = \minrank(G) \leq \rank(\pmb{A}) \leq \rank(\pmb{L}) \leq \ell.
\end{equation*} 
The corollary holds since $r_i=|R_i|$ for scalar linear codes and the vectors $\pmb{L}_k$, $k \in \cup_{i \in S}R_i$ satisfy the conditions of Lemma~\ref{lem:fitting_support}.
\end{proof}

\subsection{Scalar Linear Coding when Minrank is $N-1$} \label{sec:sub:minrank_large}

In the rest of this section we will assume $\minrank(G)=N-1$ and characterize the optimal localities $r$ and $\ravg$ among scalar linear index codes for such index coding problems. 
The main result of this section is Theorem~\ref{thm:minrank_N-1}.

Recall that since $M=1$, the receiver localities $r_i$ and the rate $\beta$ are integers. 
Since the minimum scalar coding rate is equal to $\minrank$, we are interested in the operating points corresponding to $\beta=N$ and $\beta=N-1$. 
Note that the side information graph $G$ contains at least one directed cycle, since otherwise, $G$ is a directed acyclic graph and its minrank is equal to $N$~\cite{YBJK_IEEE_IT_11}, a contradiction. Let $N_c$ denote the length of the smallest directed cycle contained in $G$. 
% and let $C \subseteq [N]$, with $|C|=N_c$, be the set of vertices that form this cycle. 

If $N_c=2$, there exist $i,j \in [N]$ such that $(i,j),(j,i) \in \mathcal{E}$, i.e., \mbox{$i \in K_j$} and \mbox{$j \in K_i$}. The following scalar linear code attains the minimum possible locality $r=\ravg=1$ and the minimum possible rate \mbox{$\beta=N-1$} simultaneously. 
Transmit $x_i+x_j$ followed by transmitting the remaining $N-2$ information symbols uncoded. $\rx_i$ and $\rx_j$ can decode using $x_i+x_j$, and the remaining receivers query their demands directly from the codeword.

In the rest of this section we will assume that $N_c \geq 3$. Observe that rate $\beta=N$ can be achieved with smallest possible localities $r=\ravg=1$ using uncoded transmission. 
We will now consider the case $\beta=\minrank(G)=N-1$.
Let $\pmb{L}$ be any scalar encoder matrix with codelength $\ell=N-1$, and $\pmb{A}$ be a corresponding fitting matrix. Since $\ell=\minrank(G)$, we have
% \begin{equation*}
$\ell \leq \rank(\pmb{A}) \leq \rank(\pmb{L}) \leq \ell$,
% \end{equation*}  
and hence, $\rank(\pmb{A})=\rank(\pmb{L})=N-1=\ell$. Thus, the nullspace of $\pmb{A}$ contains a non-zero vector.

\begin{lemma} \label{lem:GS_contains_cycle}
If $\pmb{z} \in \nullsp(\pmb{A}) \setminus \{\pmb{0}\}$ and $S=\supp(\pmb{z})$, the subgraph $G_S$ of $G$ induced by the vertices $S$ contains at least one directed cycle.
\end{lemma}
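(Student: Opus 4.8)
The plan is to argue by contradiction: suppose $G_S$ is a directed acyclic graph, and derive a contradiction with $\rank(\pmb{A}) = N-1$. The key observation is that $\pmb{A}$ fits $G$, so for indices $i, j \in S$ with $(i,j) \notin \mathcal{E}$ we have $\pmb{A}_{j,i} = 0$; in particular, the principal submatrix $\pmb{A}_S$ (rows and columns indexed by $S$) is a matrix that fits the induced subgraph $G_S$.

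First I would recall that $\pmb{z} \in \nullsp(\pmb{A}) \setminus \{\pmb{0}\}$ with $S = \supp(\pmb{z})$ means $\pmb{A}\pmb{z} = \pmb{0}$, and since $z_i = 0$ for $i \notin S$, this linear dependence among the columns of $\pmb{A}$ only involves columns indexed by $S$: $\sum_{i \in S} z_i \pmb{A}_i = \pmb{0}$. Restricting attention to the rows indexed by $S$, this gives $\pmb{A}_S \pmb{z}_S = \pmb{0}$ where $\pmb{z}_S$ is the restriction of $\pmb{z}$ to coordinates in $S$, and $\pmb{z}_S$ has all coordinates non-zero by definition of $S$. Hence $\pmb{A}_S$ is singular.

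Next I would invoke the acyclicity assumption: if $G_S$ is a directed acyclic graph, then its vertices admit a topological ordering, and any matrix fitting $G_S$ becomes (after the corresponding simultaneous row and column permutation) upper (or lower) triangular with all diagonal entries equal to $1$. Such a matrix has rank $|S|$ and is therefore non-singular, contradicting the previous paragraph. This is essentially the same fact used in the excerpt to assert that a DAG has minrank equal to the number of its vertices (citing \cite{YBJK_IEEE_IT_11}), applied here to $G_S$ rather than $G$.

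The main obstacle — really a bookkeeping point rather than a genuine difficulty — is making precise the reduction from the global null vector $\pmb{z}$ of $\pmb{A}$ to a null vector of the principal submatrix $\pmb{A}_S$; one must be careful that the zero pattern of $\pmb{A}$ outside the diagonal is dictated by $\mathcal{E}$ (namely $\pmb{A}_{j,i} = 0$ whenever $(i,j) \notin \mathcal{E}$), so that the entries $\pmb{A}_{j,i}$ with $i \in S$ but $j \notin S$ do not interfere — but these rows simply give additional scalar equations $\sum_{i \in S} z_i \pmb{A}_{j,i} = 0$ that we discard, and the rows with $j \in S$ are exactly what defines $\pmb{A}_S \pmb{z}_S = \pmb{0}$. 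Once that is in place, the triangularization argument for DAGs closes the proof.
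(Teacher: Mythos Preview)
Your proof is correct and follows essentially the same route as the paper: form the principal submatrix $\pmb{A}_S$ of $\pmb{A}$ (which fits $G_S$), observe that the restriction of the null vector $\pmb{z}$ shows $\pmb{A}_S$ is singular, and conclude that $G_S$ cannot be acyclic since any fitting matrix of a directed acyclic graph is (up to permutation) unit-triangular and hence invertible. One minor remark: your opening line says the contradiction will be with $\rank(\pmb{A})=N-1$, but in fact you never use that hypothesis---the contradiction you actually derive (and the one the paper derives) is purely between the singularity of $\pmb{A}_S$ and the triangular structure forced by acyclicity, so the lemma holds for any fitting matrix $\pmb{A}$ admitting a nontrivial null vector.
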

\begin{proof}
Observe that the $|S| \times |S|$ submatrix $\pmb{A}'$ of $\pmb{A}$ composed of the rows and columns of $\pmb{A}$ with indices in $S$ fits $G_S$. Since $\pmb{Az}=\pmb{0}$, the columns of $\pmb{A}$ indexed by $S$ are linearly dependent. This implies that the columns of the submatrix $\pmb{A}'$ are linearly dependent as well, and hence, $\rank(\pmb{A}') \leq |S|-1$. It follows that
% \begin{equation*}
$\minrank(G_S) \leq \rank(\pmb{A}') \leq |S|-1$.
% \end{equation*} 
Clearly $G_S$ is not a directed acyclic graph since otherwise $\minrank(G_S)=|S|$.
% Let $\pmb{A}_1,\dots,\pmb{A}_N \in \Fb_q^N$ be the columns of $\pmb{A}$. Since $\pmb{Az}=\pmb{0}$, the vectors $\pmb{A}_i$, $i \in S$, are linearly dependent. Recall that $\pmb{A}_i=\pmb{u}_i + \pmb{e}_i$ for some $\pmb{u}_i \lhd K_i$. Suppose that $G_S$ is a directed acyclic graph. The vertices in $S$ can be arranged in a topological order $i_1,\dots,i_{|S|}$, i.e., if $n>m$, $(i_m,i_n)$ is not an edge in $G_S$. This implies that $(i_m,i_n)$ is not an edge in $G$ as well. Thus, for any $1 \leq m < n \leq |S|$, $n \notin \supp(\pmb{u}_{i_m})$.
\end{proof}

In order to use Corollary~\ref{cor:lower_bound_r}, we now derive a lower bound on $\minrank(G_S)$.

\begin{lemma} \label{lem:lower_bound_minrank_GS}
If $\minrank(G)=N-1$, then for any $S \subseteq [N]$, $\minrank(G_S) \geq |S|-1$.
\end{lemma}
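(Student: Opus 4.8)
The plan is to bound the minrank of $G_S$ from below by relating it to the minrank of the full graph $G$ through a matrix completion argument. Suppose for contradiction that $\minrank(G_S) \leq |S|-2$ for some $S \subseteq [N]$. This means there is a matrix $\pmb{B} \in \Fb_q^{|S| \times |S|}$ that fits $G_S$ with $\rank(\pmb{B}) \leq |S|-2$. I would then build a matrix $\pmb{A}' \in \Fb_q^{N \times N}$ that fits the full graph $G$ by placing $\pmb{B}$ in the rows and columns indexed by $S$, placing the identity $\pmb{I}_{N-|S|}$ in the rows and columns indexed by $[N] \setminus S$, and placing zeros in the two off-diagonal blocks. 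This block matrix fits $G$: the diagonal entries are all $1$ (the diagonal of $\pmb{B}$ is all $1$ since $\pmb{B}$ fits $G_S$, and the identity block contributes the rest); and the required zero entries are respected because the off-diagonal blocks are zero and $\pmb{B}$ already has the required zeros. Then $\rank(\pmb{A}') \leq \rank(\pmb{B}) + (N-|S|) \leq (|S|-2) + (N-|S|) = N-2$, which contradicts $\minrank(G) = N-1$.

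First I would carefully restate the fitting condition to make sure the block construction actually yields a matrix that fits $G$: we need $\pmb{A}'_{j,i} = 0$ whenever $(i,j) \notin \mathcal{E}$, with $j \neq i$. If both $i,j \in S$ this follows from $\pmb{B}$ fitting $G_S$, since the edge set of $G_S$ is exactly the restriction of $\mathcal{E}$ to $S$. If exactly one of $i,j$ lies in $S$, the entry lies in an off-diagonal block and is zero by construction. If neither lies in $S$, the entry lies in the identity block and equals $0$ for $j \neq i$. So the construction is valid. Next I would invoke subadditivity of rank for block-triangular (in fact block-diagonal) matrices to get $\rank(\pmb{A}') = \rank(\pmb{B}) + \rank(\pmb{I}_{N-|S|}) = \rank(\pmb{B}) + N - |S|$, using the block-diagonal structure.

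The only slightly delicate point, and the step I would be most careful about, is the case $S = [N]$: then there is no identity block and the argument degenerates to $\minrank(G) = \minrank(G_S) \geq N-1$, which is just the hypothesis, so the bound $\minrank(G_S) \geq |S| - 1$ holds trivially. Similarly the bound is vacuous or trivial when $|S| \leq 1$. So the interesting range is $1 \leq |S| \leq N-1$, where the contradiction argument above applies cleanly. I do not anticipate a genuine obstacle here; the proof is a short matrix-padding argument, and the main thing to get right is the bookkeeping that the padded matrix fits $G$ and that its rank is controlled. One could also phrase it without contradiction: directly, given any $\pmb{B}$ fitting $G_S$, pad it to a matrix fitting $G$ of rank $\rank(\pmb{B}) + N - |S|$, hence $N - 1 = \minrank(G) \leq \minrank(G_S) + N - |S|$, which rearranges to $\minrank(G_S) \geq |S| - 1$. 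I would likely present this direct version as it is cleaner.
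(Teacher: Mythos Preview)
Your proof is correct and is essentially the same argument as the paper's, just phrased in the language of fitting matrices rather than codes. The paper constructs a valid scalar linear code for $G$ of length $\minrank(G_S) + (N-|S|)$ by concatenating an optimal scalar code for $G_S$ with uncoded transmission of the remaining $N-|S|$ symbols, and then invokes $\ell \geq \minrank(G)=N-1$; your block-diagonal padding $\pmb{A}'=\mathrm{diag}(\pmb{B},\pmb{I}_{N-|S|})$ is precisely the fitting matrix underlying that concatenated code, so the two arguments coincide.
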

\begin{proof}
% Suppose $\minrank(G_S) \leq |S|-2$. 
Consider the following valid scalar linear code for $G$. Encode the information symbols \mbox{$x_i$, $i \in S$}, using the optimal scalar linear code for $G_S$, and use uncoded transmission for the remaining symbols. The length of this code is lower bounded by $\minrank(G)=N-1$, hence we obtain
% \begin{equation*}
$\minrank(G_S) + N - |S| \geq N - 1$.
% \end{equation*} 
\end{proof}

We now prove the main result of this section.

\begin{tcolorbox}[colback=white,boxsep=3pt,left=2pt,right=2pt,top=2pt,bottom=2pt]
\begin{theorem} \label{thm:minrank_N-1}
If $\minrank(G)=N-1$ and the smallest directed cycle in $G$ is of length $N_c \geq 3$, the optimal locality for scalar linear coding for $G$ with rate $N-1$ is
\begin{equation*}
 r = 2 \text{ and } \ravg = \frac{N+N_c-2}{N}.
\end{equation*} 
\end{theorem}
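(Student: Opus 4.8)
The goal is to pin down the optimal $r$ and $\ravg$ for scalar linear codes of rate $N-1$ when $\minrk(G)=N-1$ and the shortest cycle has length $N_c\geq 3$. The proof splits naturally into a converse (lower bounds on $r$ and on $\ravg$) and an achievability construction. For the converse I would start from an optimal scalar encoder $\pmb{L}$ with $\ell=N-1$ and a fitting matrix $\pmb{A}$ with $\rank(\pmb{A})=N-1$, so $\nullsp(\pmb{A})$ contains a nonzero vector $\pmb{z}$ with support $S$. By Lemma~\ref{lem:GS_contains_cycle}, $G_S$ contains a directed cycle, so $|S|\geq N_c$; by Lemma~\ref{lem:lower_bound_minrank_GS}, $\minrk(G_S)\geq |S|-1$. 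Feeding this into Corollary~\ref{cor:lower_bound_r} gives $\sum_{i\in S} r_i \geq 2\minrk(G_S)\geq 2(|S|-1)$. Since each $r_i\geq 1$ is an integer, if all of them equalled $1$ the sum would be $|S|$, contradicting $|S|\geq 2(|S|-1)$ whenever $|S|\geq 3$ (which holds since $|S|\geq N_c\geq 3$). Hence at least one $r_i\geq 2$, i.e.\ $r\geq 2$. This also shows $r=2$ is the best one can hope for.

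\textbf{Lower bound on $\ravg$.} For the average locality I would push the same inequality further: choosing $\pmb{z}$ to have \emph{minimal} support $S$ forces $G_S$ to be a single directed cycle, hence $|S|\geq N_c$ and in fact one can take $|S|=N_c$ if one picks $\pmb{z}$ supported on a shortest cycle — though one must be careful that such a $\pmb{z}$ actually lies in $\nullsp(\pmb{A})$; more robustly, just use $|S|\geq N_c$. Then $\sum_{i\in S}r_i\geq 2(|S|-1)\geq 2(N_c-1)$, while $\sum_{i\notin S} r_i\geq N-|S|$ trivially, and each $r_i$ in $S$ is at least $1$ as well. Combining, $\sum_{i\in[N]} r_i = \sum_{i\in S} r_i + \sum_{i\notin S} r_i \geq 2(|S|-1) + (N-|S|) = N + |S| - 2 \geq N + N_c - 2$, so $\ravg = \frac{1}{N}\sum_i r_i \geq \frac{N+N_c-2}{N}$. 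I should double-check the integrality bookkeeping here: we need $\sum_{i\in S}r_i\geq 2(|S|-1)$ and $r_i\geq 1$ for $i\in S$; since $2(|S|-1)\geq |S|$ for $|S|\geq 2$ these are consistent, and the bound $N+|S|-2\geq N+N_c-2$ uses only $|S|\geq N_c$.

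\textbf{Achievability.} I would exhibit a scalar linear code of rate $N-1$, locality $r=2$ and average locality exactly $(N+N_c-2)/N$. Take a shortest directed cycle $C$ in $G$ on vertex set $S$ with $|S|=N_c$. On the restricted problem $G_S$ — which contains the directed $N_c$-cycle — apply the construction of Example~\ref{ex:simple_code_cycle} (choosing the row permutation so two of the $N_c$ receivers get locality $1$ and the remaining $N_c-2$ get locality $2$), costing $N_c-1$ transmissions; transmit the other $N-N_c$ messages uncoded, each decoded with locality $1$. This is valid for $G$ because every receiver $i\in S$ only needs side information along the cycle $C$ (which is present, as $C$ is a cycle in $G$), and every receiver $i\notin S$ gets its own message uncoded. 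The codelength is $(N_c-1)+(N-N_c)=N-1$, so $\beta=N-1$; the locality vector has two entries equal to $1$ (in $S$), $N_c-2$ entries equal to $2$ (in $S$), and $N-N_c$ entries equal to $1$ (outside $S$), giving $r=2$ and $\sum_i r_i = 2\cdot 1\cdot\text{(wait, recount)}$ — precisely $2(N_c-2) + (N-N_c+2)\cdot 1 = 2N_c-4 + N - N_c + 2 = N + N_c - 2$, hence $\ravg=(N+N_c-2)/N$, matching the converse.

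\textbf{Main obstacle.} The delicate point is the converse's use of Corollary~\ref{cor:lower_bound_r}: that corollary requires $\pmb{L}$ to be an \emph{optimal} encoder (codelength $=\minrk(G)=N-1$) and the columns $\pmb{L}_k$, $k\in\bigcup_{i\in S}R_i$, to be linearly independent. The first is exactly our rate-$N-1$ assumption. For the second I would invoke the reduction from Section~\ref{sec:sub:scalar_structure} (discard redundant queried columns without increasing $r$ or $\ravg$), so WLOG the queried columns of each receiver are independent; then since $\ell=N-1=\rank(\pmb{L})$ \emph{all} columns of $\pmb{L}$ are independent, so in particular any subset is. The other subtlety is making sure the achievability code is genuinely valid for $G$ and not just for $G_S$ — this needs the observation that cycle-based decoding at a receiver in $S$ uses only messages that are its side information in $G$ (because $S$ is a directed cycle in $G$), which is immediate. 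I expect the bookkeeping in the $\ravg$ converse — correctly splitting the sum over $S$ and its complement and tracking the integrality — to be the part most prone to slip, but it is routine once set up as above.
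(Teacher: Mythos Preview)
Your proposal is correct and follows essentially the same route as the paper: the converse uses Corollary~\ref{cor:lower_bound_r} together with Lemmas~\ref{lem:GS_contains_cycle} and~\ref{lem:lower_bound_minrank_GS} to get $\sum_{i\in S}r_i\geq 2(|S|-1)$ and then $\sum_i r_i\geq N+|S|-2\geq N+N_c-2$, and the achievability is the shortest-cycle code of Example~\ref{ex:simple_code_cycle} plus uncoded transmission on the remaining $N-N_c$ messages. The only cosmetic difference is that the paper deduces $r\geq 2$ from $r\geq\ravg>(N+1)/N$ and integrality of $r$, whereas you argue it directly from $\sum_{i\in S}r_i\geq 2(|S|-1)>|S|$; both are fine.
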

\end{tcolorbox}
\begin{proof}
\emph{Converse:} Let $\pmb{A}$ be a fitting matrix corresponding to any valid scalar linear code for $G$ with rate $N-1$. Let \mbox{$\pmb{z} \in \mathcal{N}(\pmb{A}) \setminus \{\pmb{0}\}$} and $S=\supp(\pmb{z})$.
From Corollary~\ref{cor:lower_bound_r} and Lemma~\ref{lem:lower_bound_minrank_GS}, $\sum_{i \in S}r_i \geq 2\,\minrank(G_S) \geq 2(|S|-1)$. Using the trivial bound $r_i \geq 1$ for $i \notin S$, we have
\begin{align*}
\sum_{i \in [N]} r_i &= \sum_{i \in S} r_i + \sum_{i \notin S}r_i \\
&\geq 2(|S|-1) + N-|S| = N + |S| - 2. 
\end{align*} 
From Lemma~\ref{lem:GS_contains_cycle}, we know that $G_S$ contains a cycle, and hence, the number of vertices $|S|$ in $G_S$ is at least $N_c$. Thus,
\begin{equation*}
\ravg = \frac{\sum_{i \in [N]}r_i}{N} \geq \frac{N +|S|-2}{N} \geq \frac{N+N_c-2}{N}.
\end{equation*} 
Since $N_c \geq 3$, we have $r \geq \ravg \geq (N+1)/N$, and since $r$ is an integer we conclude that $r \geq 2$.

\emph{Achievability:} Let $C \subseteq [N]$ be the set of vertices that form the smallest directed cycle in $G$. Note that the subgraph $G_C$ is a directed cycle of length $|C|=N_c$. We encode the symbols $x_i$, $i \in C$, using the scalar linear code given in Example~\ref{ex:simple_code_cycle} and send the remaining $N-N_c$ symbols uncoded. This achieves the codelength $N-1$. From Example~\ref{ex:simple_code_cycle}, the sum locality within the cycle $\sum_{i \in C}r_i = 2(N_c-1)$ and the maximum locality within the cycle $\max_{i \in C}r_i=2$. The locality of the remaining receivers is $r_i=1$, $i \notin C$. This scheme achieves the optimal values of $r$ and $\ravg$ for rate $N-1$.
\end{proof}

\begin{remark}
Corollary~V.2 of~\cite{KSCF_arXiv_18} shows that, over the binary field $q=2$, if $\minrank(G)=N-1$, then a scalar linear coding rate of $N-1$ is achievable for any choice of locality $r \geq 2$. 
Our results show that $r=2$ is optimal (if $N_c \geq 3$) and also provide the optimal value of the average locality $\ravg$ for scalar linear codes over an arbitrary finite field $\Fb_q$.
\end{remark}

\subsection{Feasible Receiver Localities for Directed Cycles} \label{sec:sub:receiver_localities_cycles}

We observe from the results in Section~\ref{sec:sub:minrank_large} that the interaction between rate and locality for scalar linear codes for graphs with minrank equal to $N-1$ depends on the rate-locality trade-off of directed cycles.
Further, one of the elementary index coding schemes (for arbitrary single unicast problems), known as \emph{cycle covering}~\cite{NTZ_IT_13,CASL_ISIT_11}, partitions the side information graph into disjoint cycles and applies a scalar linear code for each cycle.
In view of the important role directed cycles play in index coding, we now determine the possible values of the tuple of receiver localities $\p{r} = (r_1,\dots,r_N)$ for scalar linear coding when $G$ is a directed $N$-cycle with the desired broadcast rate being $N-1$.

\begin{definition}
A vector $\p{r}=(r_1,\dots,r_N)$ of receiver localities is \emph{feasible} for $G$ at rate $\ell$ through scalar linear coding if there exists a scalar linear code for $G$ with codelength at the most $\ell$ with the localities of $\rx_1,\dots,\rx_N$ at the most $r_1,\dots,r_N$, respectively.
\end{definition}

In the rest of this section we will assume that $G$ is a directed $N$-cycle. We have $N_c=N$ in this case, and Theorem~\ref{thm:minrank_N-1} implies that for any valid scalar coding scheme with rate $N-1$, $\ravg \geq 2(N-1)/N$.
This implies that if the vector of receiver localities $\p{r}$ is feasible for $G$ using scalar linear codes at rate $N-1$, then $r_1 + \cdots + r_N \geq 2(N-1)$.
We also know that each $r_i \geq 1$, $i \in [N]$. We now show that these two necessary conditions are also sufficient for the feasibility of $\p{r}$. 
The rest of this section is the proof of the sufficiency part of 

\begin{tcolorbox}[colback=white,boxsep=3pt,left=2pt,right=2pt,top=2pt,bottom=2pt]
\begin{theorem} \label{thm:cycle_feasible}
Let $G$ be a directed $N$-cycle and $\Fb_q$ be any finite field. A vector of receiver localities $\p{r}=(r_1,\dots,r_N)$ consisting of strictly positive integers is feasible for $G$ at rate $N-1$ through scalar linear coding over $\Fb_q$ if and only if $\sum_{i \in [N]}r_i \geq 2(N-1)$. 
% Then $\p{r}$ is feasible for $G$ at rate $N-1$ through scalar linear coding.
\end{theorem}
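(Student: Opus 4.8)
The necessity direction is already in hand, so the task is sufficiency: given strictly positive integers $r_1,\dots,r_N$ with $\sum_{i\in[N]}r_i\geq 2(N-1)$, exhibit a scalar linear code over $\Fb_q$ of codelength $N-1$ whose receiver localities are at most $r_1,\dots,r_N$. The plan is to reduce this to a combinatorial statement about trees. For any spanning tree $T$ on the vertex set $[N]$ (identifying vertex $j$ with the message $x_j$), orient each edge arbitrarily and transmit, for every edge $\{u,v\}\in T$, the symbol $x_u-x_v$; this yields a codeword of length $N-1$, and since the incidence vectors $\pmb{e}_u-\pmb{e}_v$ of a tree are linearly independent, the encoder matrix has full column rank $N-1$. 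Receiver $\rx_i$ knows $x_{i+1}$ (indices mod $N$) and demands $x_i$; summing the transmitted symbols along the unique $T$-path from $i+1$ to $i$ telescopes to $x_i-x_{i+1}$, so $\rx_i$ decodes by querying exactly the edges of that path, i.e.\ with locality $\mathrm{dist}_T(i,i+1)$ (formally, the criterion of Theorem~\ref{thm:design_criterion} holds with $\pmb{u}_i=-\pmb{e}_{i+1}$). Hence it suffices to prove the following \emph{combinatorial claim}: if $r_1,\dots,r_N\geq 1$ are integers with $\sum_i r_i\geq 2(N-1)$, then there is a tree $T$ on $[N]$ with $\mathrm{dist}_T(i,i+1)\leq r_i$ for every $i$.

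I would prove the claim by induction on $N$. The base case $N=2$ is the single edge $\{1,2\}$, which works because $r_1,r_2\geq 1$. For the inductive step, first dispose of the case where $r_i\geq 2$ for all $i$: then the code of Example~\ref{ex:simple_code_cycle} already has every locality at most $2\leq r_i$ (equivalently, the star tree centred at vertex $1$ realizes $\mathrm{dist}_T(i,i+1)\in\{1,2\}$ everywhere). Otherwise some $r_i=1$; since $N\geq 3$ forces $\sum_i r_i\geq 2(N-1)>N$, not all $r_i$ equal $1$, and a short "propagation around the cycle" argument (the set $\{i:r_i=1\}$ cannot be closed under taking the predecessor, else it would be all of $[N]$) produces an index $i$ with $r_i=1$ and $r_{i-1}\geq 2$. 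Delete vertex $i$ to form the $(N-1)$-cycle on $[N]\setminus\{i\}$, in which $i-1$ now precedes $i+1$; keep every locality requirement, except lower the one attached to the arc $(i-1)\to(i+1)$ to $r_{i-1}-1$. These new requirements are still $\geq 1$ (using $r_{i-1}\geq 2$) and sum to $\sum_i r_i-2\geq 2\big((N-1)-1\big)$, so the induction hypothesis yields a tree $T'$ on $[N]\setminus\{i\}$ meeting them. Setting $T:=T'\cup\{\{i,i+1\}\}$ attaches $i$ as a leaf of $i+1$, which leaves every distance among the old vertices unchanged, gives $\mathrm{dist}_T(i,i+1)=1\leq r_i$, and gives $\mathrm{dist}_T(i-1,i)=\mathrm{dist}_{T'}(i-1,i+1)+1\leq(r_{i-1}-1)+1=r_{i-1}$, completing the step.

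Note that no monotonicity reduction to the case $\sum_i r_i=2(N-1)$ is actually needed: the induction propagates the inequality $\sum_i r_i\geq 2(N-1)$ directly, and feasibility with localities \emph{at most} $r_i$ makes the all-$r_i\geq 2$ base case immediate. The only delicate part of the whole argument is the bookkeeping in the deletion step — checking that lowering exactly one requirement by $1$ preserves both "$\geq 1$ componentwise" and "sum $\geq 2((N-1)-1)$" at size $N-1$, and that re-attaching the deleted vertex as a pendant does not disturb any of the distance bounds inherited from $T'$; everything else (the tree-code construction, the rank and decodability checks, the propagation argument) is routine.
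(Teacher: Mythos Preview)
Your proof is correct and takes a genuinely different route from the paper.

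The paper's argument is algebraic. After reducing to $\sum_i r_i = 2(N-1)$ and sorting the $r_i$, the paper builds an auxiliary bipartite graph $\Bc$ on vertex sets $\{u_1,\dots,u_N\}$ and $\{v_1,\dots,v_{N-1}\}$ with $\deg(u_i)=r_i$, $\deg(v_k)=2$, and a triangular adjacency structure (Lemma~\ref{lem:construction_Bc}, proved by a separate greedy construction in Appendix~\ref{app:construct_bipartite}). It then fixes the fitting matrix $\p{A}$ with columns $\p{e}_i-\p{e}_{\pi(i)}$ and solves a triangular linear system~\eqref{eq:weights_L_and_A_2} for the encoder columns $\p{L}_1,\dots,\p{L}_{N-1}$. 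The resulting $\p{L}_k$ are in general \emph{not} of the form $\p{e}_u-\p{e}_v$ (see the worked example, where $\p{L}_4$ has four nonzero entries).

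Your argument is purely combinatorial: the code is always a tree code (transmit $x_u-x_v$ along the edges of a spanning tree $T$), and feasibility reduces to finding $T$ with $\mathrm{dist}_T(i,i+1)\le r_i$. The induction on $N$ is clean, the ``find $i$ with $r_i=1$ and $r_{i-1}\ge 2$, delete $i$, decrement $r_{i-1}$'' step preserves both constraints, and re-attaching $i$ as a leaf of $i+1$ restores all bounds. The linear independence of tree incidence vectors over any field (including characteristic $2$, where a dependence would force a subforest with all even degrees) gives rank $N-1$ for free.

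What each buys: your approach is shorter, self-contained, and yields codes whose every transmitted symbol is a simple difference $x_u-x_v$; the paper's approach, while heavier, produces an explicit non-recursive description of $\p{L}$ via the bipartite graph and makes the connection to the fitting matrix $\p{A}$ (and hence Corollary~\ref{cor:lower_bound_r}) more transparent.
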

\end{tcolorbox}

We will assume without loss of generality that $\sum_{i \in [N]} r_i = 2(N-1)$ and that the components of $\p{r}$ are in ascending order. We might have to relabel the vertices of $G$ for this latter assumption to hold. In general, we will assume $r_1 \leq \cdots \leq r_N$, and that $\rx_i$ demands $x_i$ and knows the message $x_{\pi(i)}$ as side information, where $\pi$ is a permutation on $[N]$ and $\pi(i) \neq i$ for any $i \in [N]$. 

Our proof for Theorem~\ref{thm:cycle_feasible} is constructive. Given a vector $\p{r}$ satisfying the conditions in the theorem, we construct a scalar linear code of rate $N-1$ that achieves receiver localities $r_1,\dots,r_N$. Towards this, we first construct a bipartite graph $\Bc$ satisfying certain adjacency properties, and then design the index code using $\Bc$.

\subsubsection*{The Bipartite Graph $\Bc$}

The bipartite graph $\Bc$ consists of the vertex sets $\{u_1,\dots,u_N\}$ representing the receivers and $\{v_1,\dots,v_{N-1}\}$ representing the codeword symbols of the length $N-1$ index code.
The edges between the two vertex sets denote the queries made by the receivers to decode their demands.
Thus, designing $\Bc$ is identical to designing the sets $R_1,\dots,R_N$.
An edge $\{u_i,v_k\}$ exists if and only if $k \in R_i$, i.e., the $k^\tth$ codeword symbol is observed by $\rx_i$.
The degree of $u_i$ is $r_i = |R_i|$, and the degree of $v_k$ is the number of receivers querying the $k^\tth$ coded symbol.
%% R1version %%
\begin{lemma} \label{lem:construction_Bc}
Let $N \geq 2$ and $\p{r}=(r_1,\dots,r_N)$ be a vector of strictly positive integers satisfying $\sum_i r_i = 2(N-1)$ and $r_1 \leq r_2 \leq \cdots \leq r_N$. There exists a bipartite graph $\Bc$ with vertex set $\{u_1,\dots,u_N\} \cup \{v_1,\dots,v_{N-1}\}$ such that\\
% In Appendix~\ref{app:construct_bipartite} we provide a procedure to construct $\Bc$ corresponding to the given vector $\p{r}$
% % Given the vector $\p{r}$ we construct $\Bc$ 
% such that:
\emph{(i)}~the degree of $u_i$ is $r_i$, for $i \in [N]$;\\
\emph{(ii)}~the degrees of $v_1,\dots,v_{N-1}$ are all equal to $2$; and \\
\emph{(iii)}~for $i \in [N-1]$, $\{u_i,v_i\}$ is an edge and the neighbors of $u_i$ are a subset of $\{v_1,\dots,v_i\}$.
% \emph{(iii)}~for $i \in [N-1]$, $R_i \subseteq [i] = \{1,\dots,i\}$ and $i \in R_i$.
% \emph{(iv)}~for $i \in [N-1]$, $i \in R_i$.
\end{lemma}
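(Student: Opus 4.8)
The plan is to construct $\Bc$ explicitly by a greedy/sequential procedure and verify the three properties by induction on $N$. First I would set up the invariant: after processing $u_1,\dots,u_i$ we will have used up exactly codeword vertices $v_1,\dots,v_i$ (so far each with degree at most $2$), with the degree of each $v_k$ so far equal to the number of receivers among $u_1,\dots,u_i$ adjacent to it. The base case $N=2$ forces $\p{r}=(1,1)$ since $r_1+r_2=2$ and $r_1,r_2\ge 1$; the graph with the single edge $\{u_1,v_1\},\{u_2,v_1\}$ works, giving $v_1$ degree $2$ and $u_1,u_2$ degree $1$, and $\{u_1,v_1\}$ is an edge with $u_1$'s neighbors inside $\{v_1\}$.

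For the inductive step, the key idea is to \emph{peel off} $u_N$, the vertex of largest locality. I would argue that $r_N\ge 2$: since $\sum_i r_i = 2(N-1)$ and all $r_i\ge 1$, if $r_N=1$ then all $r_i=1$ and the sum is $N$, which equals $2(N-1)$ only when $N=2$, already handled. So for $N\ge 3$ we have $r_N\ge 2$. Now attach $u_N$ to $v_{N-1}$ (this uses the edge $\{u_{N-1},v_{N-1}\}$ guaranteed by property (iii) for index $N-1$, filling $v_{N-1}$ to degree $2$ and consuming one unit of $u_N$'s locality) and, if $r_N>2$, attach $u_N$ additionally to $v_{N-2},v_{N-3},\dots$ — but this needs care, so instead I would restructure: \textbf{remove $u_N$ and $v_{N-1}$}, decrement $r_{N-1}$ by $1$ to reflect the edge $\{u_{N-1},v_{N-1}\}$ that will be re-added, and decrement $r_N$ by its contribution. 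Concretely, delete $u_N$; the edges incident to $u_N$ should be $\{u_N,v_{N-1}\}$ together with $r_N-1$ further edges $\{u_N, v_{k}\}$ for a carefully chosen multiset of indices $k\le N-2$; each such $v_k$ then already has degree $1$ from $u_k$ (by property (iii) applied recursively), so adding the edge from $u_N$ brings it to degree $2$, and we must then \emph{not} add any further edge to that $v_k$ later, i.e.\ we must reduce the residual problem to $u_1,\dots,u_{N-1}$ where the receivers $u_k$ that got paired with $u_N$ are "already satisfied" on those coordinates.

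The cleanest way to make this rigorous is the following reformulation, which I would carry out: define $\p{r}' = (r_1,\dots,r_{N-2}, r_{N-1}')$ on $N-1$ receivers where $r_{N-1}' = r_{N-1} + r_N - 2$ (so the new sum is $2(N-1) - r_N + (r_N-2) = 2(N-2)$, and all entries are $\ge 1$ since $r_N\ge 2$), reorder $\p{r}'$ ascending, apply the induction hypothesis to get $\Bc'$ on $\{u_1,\dots,u_{N-1}\}\cup\{v_1,\dots,v_{N-2}\}$, and then "split" the vertex of locality $r_{N-1}'$ back into $u_{N-1}$ (keeping one designated edge, namely its edge to its own $v$) and $u_N$ (taking the other $r_N-1$ edges), and add a new codeword vertex $v_{N-1}$ with the two edges $\{u_{N-1},v_{N-1}\}$ and $\{u_N,v_{N-1}\}$. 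I expect the main obstacle to be bookkeeping the reordering: after forming $\p{r}'$ the ascending order changes, so the labels $u_i$ and the "diagonal" edges $\{u_i,v_i\}$ of property (iii) get permuted, and I must check that the split vertex can always be taken to be the last one (largest locality) in $\Bc'$ so that property (iii)'s nesting structure ($\{u_i$'s neighbors$\}\subseteq\{v_1,\dots,v_i\}$) is preserved when $v_{N-1}$ is appended — this should follow because $r_{N-1}' = r_{N-1}+r_N-2 \ge r_{N-1} \ge r_j$ for $j\le N-2$, so the merged vertex is indeed a maximum and sits last, but the details of relabeling and the degenerate cases (e.g.\ $r_{N-1}'$ tied with other entries, or $N-1 = 2$) need to be handled explicitly.
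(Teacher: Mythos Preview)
Your inductive merge--split argument is correct in outline and differs from the paper's proof. Two remarks on your approach: first, the reordering worry you flag is actually moot, since $r_{N-1}' = r_{N-1}+r_N-2 \ge r_{N-1} \ge r_{N-2}$ means $(r_1,\dots,r_{N-2},r_{N-1}')$ is already ascending; second, you should record explicitly that $r_{N-1}' \le N-2$ (so that the merged vertex can have simple edges inside $\{v_1,\dots,v_{N-2}\}$), which follows from $r_{N-1}+r_N = 2(N-1)-\sum_{i\le N-2} r_i \le N$. With those two observations the split is clean and property~(iii) for $i=N-1$ is trivial since $\{v_1,\dots,v_{N-1}\}$ is the whole right side.

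The paper instead builds $\Bc$ directly in one pass over $u_1,\dots,u_N$: at step $i\le N-1$ it adds the edge $\{u_i,v_i\}$ and then attaches $u_i$ to $r_i-1$ of the vertices among $v_1,\dots,v_{i-1}$ that currently have degree~$1$; at step $N$, $u_N$ absorbs all remaining degree-$1$ vertices. The key feasibility check---that at step $i$ at least $r_i-1$ degree-$1$ vertices are available---is handled by a separate lemma showing $r_1+\cdots+r_i \le 2i-1$ for every $i$, which follows from the ascending order and the total $2(N-1)$. Your induction effectively hides this partial-sum inequality inside the recursive call; the paper's greedy construction makes it explicit and avoids all relabeling, at the cost of needing that auxiliary inequality. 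Both routes are short; the paper's is slightly more transparent about \emph{why} the degree budget always suffices.
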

\begin{proof}
We provide a constructive proof of this lemma in Appendix~\ref{app:construct_bipartite}. Note that property~\emph{(iii)} in this lemma is equivalent to $i \in R_i$ and $R_i \subseteq [i]$ for all $i \in [N-1]$.
\end{proof}
%% R1version %%

The neighbors of $u_i$ correspond to the set $R_i \subseteq [N-1]$.
We will represent the neighbors of $v_k$ by the set $S_k \subset [N]$, i.e., $S_k=\{i \in [N]~|~k \in R_i\}$. Note that $|S_k|=2$ for $k \in [N-1]$.
For each edge $\{u_i,v_k\}$, we associate a weight $w_{i,k} \in \Fb_q$ as follows. 
If the characteristic of $\Fb_q$ is two, all the weights are equal to $1$.
Otherwise, for each $k \in [N-1]$, the two edges incident on $v_k$ are assigned the two distinct weights $1$ and $-1$, respectively, in an arbitrary fashion. 
Such an assignment of weights $w_{i,k}$ implies, over any field $\Fb_q$,
\begin{equation} \label{eq:w_ik_sum_is_zero}
\sum_{i \in S_k} w_{i,k} = 1 - 1 = 0, \text{ for all } k \in [N-1]. 
\end{equation} 

\begin{example} \label{ex:feasibility:1}
\begin{figure}[!t]
\centering
\includegraphics[width=3.4in]{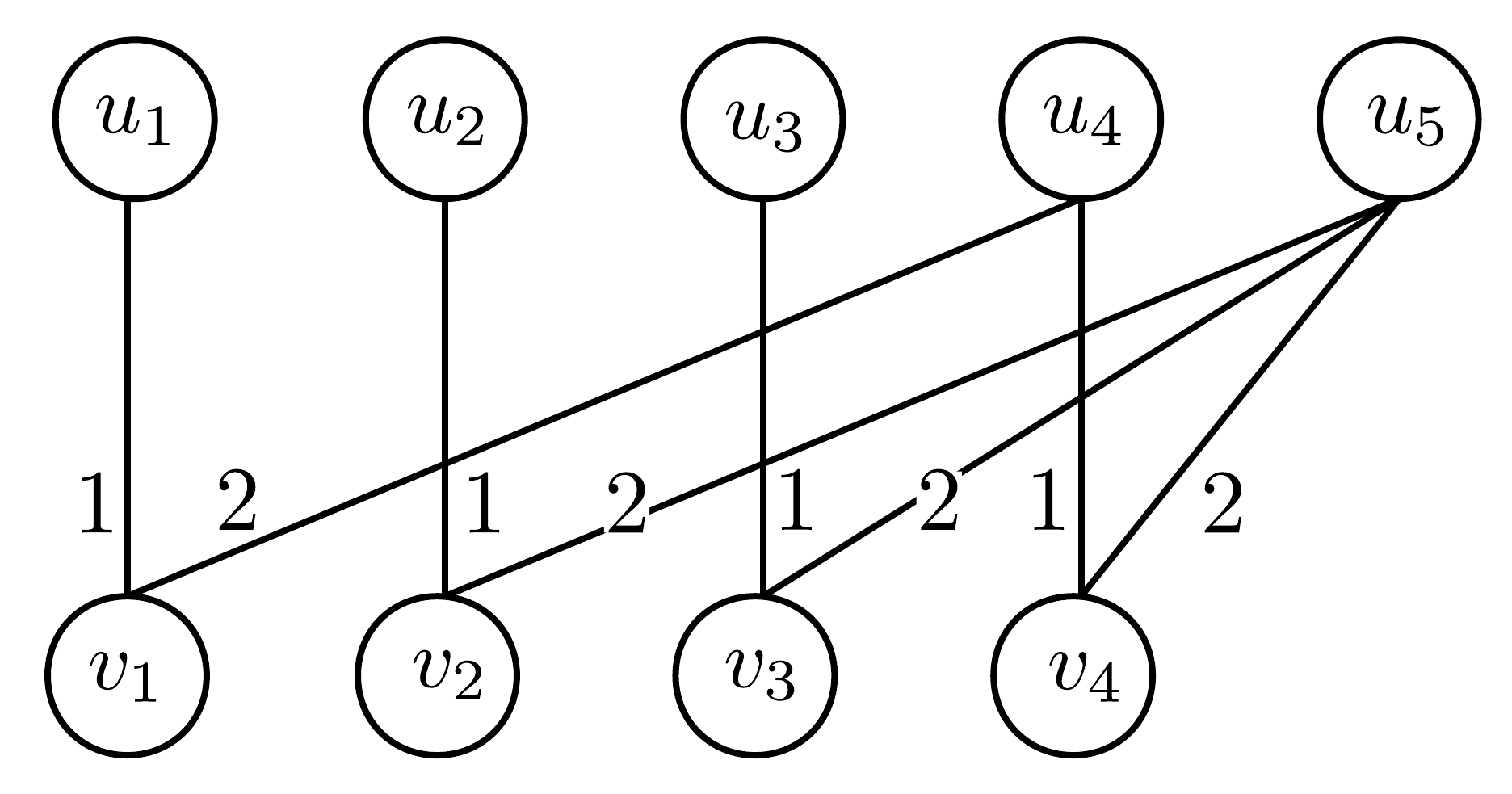}
\caption{A bipartite graph $\Bc$ for the index coding problem of Example~\ref{ex:feasibility:1}. The edges of $\Bc$ are weighted using scalars from $\Fb_3$, with weights equal to either $1$ or $2=-1$.}
\label{fig:bipartite_graph}
\end{figure} 
Consider a directed cycle of length $N=5$, where the side-informations of the receivers $\rx_1$, $\rx_2$, $\rx_3$, $\rx_4$, $\rx_5$ are the messages $x_3,x_1,x_5,x_2,x_4$, respectively. 
Note that $(\pi(1),\pi(2),\dots,\pi(5))=(3,1,5,2,4)$.
Here, $G$ is the cycle $(1,3,5,4,2,1)$. Let the desired locality vector be $\pmb{r}=(r_1,r_2,r_3,r_4,r_5) = (1,1,1,2,3)$. Let us consider code construction over the field $\Fb_3=\{0,1,2\}$, where $-1=2$. 
A bipartite graph $\Bc$ corresponding to this problem is shown in Fig.~\ref{fig:bipartite_graph}.
Note that the degrees of $u_1,\dots,u_5$ are $1,1,1,2,3$, respectively. The degrees of $v_1,\dots,v_4$ are equal to $2$. For each $i \leq 4$, $\{u_i,v_i\}$ is an edge and the neighbors of $u_i$ are a subset of $\{1,2,\dots,i\}$. For each $k \leq 4$, the weights of the two edges incident on $v_k$ are distinct, and are equal to $1$ and $2$, respectively.
\end{example}

\subsubsection*{Designing the Scalar Linear Index Code}

Let $\Fb_q$ be any finite field.
We will use the following fitting matrix $\p{A}$ to design the index code over $\Fb_q$. 
The $i^\tth$ column $\p{A}_i$ of the matrix $\p{A}$ contains exactly two non-zero components: the entry in position $i$ is $1$, and the entry in position $\pi(i)$ is $-1$. 
Note that each row of $\p{A}$ has exactly two non-zero entries with corresponding values $1$ and $-1$. Hence we have 
\begin{equation} \label{eq:sum_A_i_is_zero}
\p{A}_1 + \cdots + \p{A}_N=\p{0}.
\end{equation} 

We now construct an encoding matrix $\p{L}$ with $N-1$ columns $\p{L}_1,\dots,\p{L}_{N-1}$ corresponding to the 
fitting matrix $\p{A}$ and for the receiver queries $R_1,\dots,R_N$ dictated by $\Bc$.
A sufficient condition for $\p{L}$ to satisfy these properties is
\begin{align} \label{eq:weights_L_and_A}
\p{A}_i = \sum_{k \in R_i} w_{i,k} \p{L}_k, \text{ for } i \in [N],
\end{align}
since this implies $x_i - x_{\pi(i)} = \xb^\tr\p{A}_i$ can be obtained from the coded symbols $\xb^\tr\p{L}_k$, $k \in R_i$.
We prove the existence of $\p{L}$ by viewing~\eqref{eq:weights_L_and_A} as a set of $N$ linear equations in unknowns $\p{L}_1,\dots,\p{L}_{N-1}$, and showing that this system has a unique solution.
Towards this, we claim that the last equation in this system is linearly dependent on the first $N-1$ equations, and hence, is redundant. 
The proof for this claim follows from using~\eqref{eq:w_ik_sum_is_zero} and~\eqref{eq:sum_A_i_is_zero} as below
\begin{align*}
\p{0} &= \p{A}_1 + \cdots + \p{A}_N - \sum_{k \in [N-1]} \,\sum_{i \in S_k} w_{i,k} \p{L}_k \\
&\qquad \qquad (\text{last term corresponds to all the edges in } \Bc)\\
      &= \p{A}_1 + \cdots + \p{A}_N - \sum_{i \in [N]} \sum_{k \in R_i} w_{i,k} \p{L}_k \\
&\qquad \qquad (\text{alternative way to enumerate the edges in } \Bc)\\
%       &= \sum_{i \in [N]} \left( \p{A}_i -  \sum_{k \in R_i} w_{i,k} \p{L}_k\right) \\
      &= \sum_{i \in [N-1]} \! ( \p{A}_i -  \sum_{k \in R_i} w_{i,k} \p{L}_k ) \, + \, ( \p{A}_N -  \sum_{k \in R_N} w_{N,k} \p{L}_k ). \\
\end{align*} 
We conclude that~\eqref{eq:weights_L_and_A} is equivalent to the following system consisting of $N-1$ equations with an equal number of unknowns
\begin{align} \label{eq:weights_L_and_A_2}
\p{A}_i = \sum_{k \in R_i} w_{i,k} \p{L}_k, \text{ for } i \in [N-1].
\end{align}
%% R1version %%a
From Lemma~\ref{lem:construction_Bc} we know that $i \in R_i$ and $R_i \subseteq [i]$. Thus, the unknowns involved in the $i^\tth$ equation of the linear system~\eqref{eq:weights_L_and_A_2} are a subset of $\{\p{L}_1,\dots,\p{L}_i\}$ and necessarily include the unknown $\p{L}_i$. Hence,~\eqref{eq:weights_L_and_A_2} is a triangular system of linear equations, and therefore, has a unique solution.
% involves only the unknowns $\p{L}_k$, $k \in R_i$
% Observe that~\eqref{eq:weights_L_and_A_2} is a triangular system of linear equations since the $i^\tth$ equation involves only the unknowns $\p{L}_k$, $k \in R_i$, where $R_i \subseteq [i]$ and $i \in R_i$.
% Thus~\eqref{eq:weights_L_and_A_2} has a unique solution, and 
We conclude that the locality vector $\p{r}$ is feasible.
%% R1version %%

\begin{example}
We will continue with the index coding problem of Example~\ref{ex:feasibility:1}. The fitting matrix for this problem is
\begin{equation*}
 \pmb{A} = \begin{bmatrix}
           1 & 2 & 0 & 0 & 0 \\
           0 & 1 & 0 & 2 & 0 \\
           2 & 0 & 1 & 0 & 0 \\
           0 & 0 & 0 & 1 & 2 \\
           0 & 0 & 2 & 0 & 1
           \end{bmatrix}.
\end{equation*}
We now determine the encoder matrix $\pmb{L}$, that consists of $4$ columns $\pmb{L}_1,\dots,\pmb{L}_4$, using the bipartite graph $\Bc$ in Fig~\ref{fig:bipartite_graph}. The system of linear equations~\eqref{eq:weights_L_and_A_2} for this problem is
\begin{equation*}
\pmb{A}_1 = \pmb{L}_1,~\pmb{A}_2=\pmb{L}_2,~\pmb{A}_3=\pmb{L}_3, \text{ and } \pmb{A}_4 = 2\pmb{L}_1 + \pmb{L}_4.
\end{equation*} 
The unique solution to this system is 
\begin{equation*}
\pmb{L} = [\pmb{L}_1 \cdots \pmb{L}_4]
=[\pmb{A}_1~\pmb{A}_2~\pmb{A}_3~\pmb{A}_4+\pmb{A}_1]
=\begin{bmatrix}
1 & 2 & 0 & 1 \\
0 & 1 & 0 & 2 \\
2 & 0 & 1 & 2 \\
0 & 0 & 0 & 1 \\
0 & 0 & 2 & 0
\end{bmatrix} \!\!.
\end{equation*} 
With this encoding matrix, the transmitted codeword is 
\begin{align*}
\pmb{c}&=(c_1,c_2,c_3,c_4) \\
       &=(x_1-x_3, x_2-x_1,x_3-x_5,x_1-x_2-x_3+x_4).
\end{align*} 
The receivers $\rx_1,\rx_2,\dots,\rx_5$ decode their demanded messages using a subset of codeword symbols and their own side information as follows
\begin{align*}
x_1 &= c_1 + x_3, ~~~~~~ x_2 = c_2 + x_1, ~~~~~~x_3=c_3 + x_5, \\
x_4 &= c_4 - c_1 + x_2 ~~\text{ and }~~ x_5 = -c_2-c_3-c_4+x_4.
\end{align*}
The localities at the five receivers are $(r_1,r_2,\dots,r_5)=(1,1,1,2,3)$, which are as stipulated in Example~\ref{ex:feasibility:1}. 
\end{example}

\section{Constructing Locally Decodable\\ Index Codes}  \label{sec:design}

In the previous sections we considered index codes for specific cases: locality is equal to one, side information graph is a directed cycle, and minrank is one less than the order of the problem. 
We developed matching converses for each of these cases and identified optimal index codes.
In this section we propose techniques to construct linear locally decodable index codes for arbitrary index coding problems. We also show that some of the traditional index code constructions can be modified to yield locally decodable codes.

Recall the decoding technique for scalar linear codes from Section~\ref{sec:sub:scalar_structure}.
Let the encoding matrix be $\p{L}$, the corresponding fitting matrix be $\p{A}$ and the $i^\tth$ column of $\p{A}$ be $\p{A}_i=\p{e}_i + \p{u}_i$ where $\p{u}_i \lhd K_i$.
Receiver $\rx_i$ decodes its demand by computing a specific linear combination of codeword symbols and its side information. The linear combination $\cb^\tr\p{d}_i$, $\p{d}_i \in \Fb_q^\ell$, of the codeword symbols is chosen such $\p{L}\p{d}_i = \p{A}_i$, and hence,
\begin{equation}
\cb^\tr\p{d}_i = \xb^\tr\p{L}\p{d}_i = \xb^\tr\p{A}_i = \xb^\tr(\p{e}_i + \p{u}_i).
\end{equation}  
With this choice of linear combination, the demand $x_i$ can be decoded as 
% \begin{equation*}
$x_i = \cb^\tr\p{d}_i - \xb^\tr\p{u}_i$,
% \end{equation*}  
Note that $R_i=\supp(\p{d}_i)$ since the receiver needs to observe only those components of $\cb$ that correspond to the non-zero entries of $\p{d}_i$. 
Thus, the locality of the $i^\tth$ receiver equals the Hamming weight of $\p{d}_i$, i.e., $r_i = \wt(\p{d}_i)$.
If $\p{A}_i$ is one of the columns of the encoding matrix $\p{L}$, then $\p{d}_i$ can be chosen such that $\wt(\p{d}_i)=1$ resulting in $r_i=1$. If $\p{L}$ does not contain $\p{A}_i$ as one of its columns, then we have the naive upper bound $r_i = \wt(\p{d}_i) \leq \ell$.

\subsection{Separation Based Scheme using Covering Codes}

% We first consider a separation based scalar linear index coding technique over $\Fb_q$ for a unicast problem $G$ 
We can arrive at a locally decodable code using a separation based technique where the encoder matrix $\p{L}=\p{L}'\p{H}$ is the product of two matrices: an optimal index coding matrix $\p{L}'$ with number of columns equal to $\minrank(G)$, and the parity-check matrix $\p{H}$ of a \emph{covering code} $\mathscr{C}$ with length $\ell$ and dimension $\ell-\minrank(G)$. 
The linear code $\mathscr{C}$ is chosen such that its covering radius is equal to the desired locality $r$, i.e., Hamming spheres of radius $r$ centered around the codewords of $\mathscr{C}$ cover the entire space $\Fb_q^\ell$. 
Among all covering codes over $\Fb_q$ with covering radius $r$ and co-dimension $\minrank(G)$ we choose the one with the smallest possible blocklength $\ell$.
Note that the number of rows of $\p{H}$ is $\minrank(G)$.
Since the covering radius of $\mathscr{C}$ is $r$, it is well known that, any vector of length $\minrank(G)$ is a linear combination of at the most $r$ columns of $\p{H}$.
If $\p{A}$ is a fitting matrix corresponding to $\p{L}'$, every column of $\p{A}$ is a linear combination of columns of $\p{L}'$.
Thus, $\p{A}_i$ can be expressed as a linear combination of at the most $r$ columns of the matrix \mbox{$\p{L}=\p{L}'\p{H}$}. 
We conclude that $\p{L}$ is a valid scalar linear encoder matrix for $G$ with locality $r$.

\subsection{Codes using Acyclic Induced Subgraph Covers of $G$} \label{sec:ais_cover}

% In this section 
We will design vector linear codes using the acyclic induced subgraphs of $G$. 
% For any subset $S \subset [N]$ of vertices of $G$, let $G_S$ denote the subgraph of $G$ induced by $S$.
First, we require the following result.

Note that for any subset $S \subset [N]$ of vertices of $G$, $G_S$ denotes the subgraph of $G$ induced by $S$.

\begin{lemma} \label{lem:ais_lemma}
Let $\ell \geq \minrank(G)$ and the subgraph $G_S$ of $G$ induced by the subset $S \subseteq [N]$ be a directed acyclic graph. There exists a scalar linear index code of length $\ell$ for $G$ such that $r_i=1$ for every $i \in S$.
\end{lemma}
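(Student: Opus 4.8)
The plan is to construct the scalar linear encoder matrix explicitly by combining an optimal encoder for $G$ with the uncoded transmission of the messages indexed by $S$, and then padding the result to length $\ell$. First I would observe that since $G_S$ is a directed acyclic graph, there is a topological ordering of the vertices in $S$; relabel so that $S = \{1, 2, \dots, |S|\}$ and every edge of $G_S$ goes from a higher-indexed vertex to a lower-indexed vertex, i.e.\ for $i \in S$ the only side-information symbols of $\rx_i$ that lie inside $S$ are among $x_{i+1}, \dots, x_{|S|}$. The key idea is then to transmit the symbols $x_i$, $i \in S$, uncoded in this order, and to use an optimal length-$\minrank(G)$ code for the whole graph $G$ to handle the remaining receivers. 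Concretely, let $\p{L}' \in \Fb_q^{N \times \minrank(G)}$ be an optimal scalar linear encoder for $G$ with fitting matrix $\p{A}$. Form the $N \times (|S| + \minrank(G))$ matrix $\p{L}'' = [\, \p{e}_1 ~ \p{e}_2 ~ \cdots ~ \p{e}_{|S|} ~ \p{L}' \,]$, and then, if $\ell > |S| + \minrank(G)$, delete columns of $\p{L}'$ that are redundant (possible because $\rank(\p{L}') = \minrank(G) \le \minrank(G) + $ anything, so we can keep a spanning subset) — wait, more carefully: we need exactly length $\ell$ with $\ell \ge \minrank(G)$, so I would instead append $\ell - |S| - \minrank(G)$ arbitrary (e.g.\ zero) columns if $\ell \ge |S| + \minrank(G)$, and if $\minrank(G) \le \ell < |S| + \minrank(G)$ handle it by a slightly different splicing described below.

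The main verification is decodability. For each $i \notin S$, the submatrix $\p{L}'$ already forms a valid encoder for $G$, and since the codeword symbols $\xb^\tr\p{L}'_k$ are all present among the transmitted symbols, $\rx_i$ can decode $x_i$ exactly as in the optimal code — this uses Theorem~\ref{thm:design_criterion} applied to the columns coming from $\p{L}'$, with query set contained in the indices of those columns. For each $i \in S$, receiver $\rx_i$ simply reads off the dedicated uncoded symbol $c = \xb^\tr \p{e}_i = x_i$, so $r_i = 1$; this is where the acyclicity (hence the existence of the topological order) is not even strictly needed for the uncoded symbols themselves, but it is needed to keep the total length down to $\ell$, since transmitting $|S|$ uncoded symbols plus a full optimal code would generically give length $|S| + \minrank(G)$, which can exceed $\ell$. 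So the real content is: we must show a length-$\ell$ code suffices, not length $|S| + \minrank(G)$.

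The main obstacle, and the step I expect to require the most care, is precisely this length bookkeeping when $\minrank(G) \le \ell < |S| + \minrank(G)$. Here the trick is to exploit the acyclic order on $S$: instead of adding a fresh uncoded column $\p{e}_i$ for each $i \in S$, note that an optimal code for $G$ restricted appropriately already ``covers'' the demands, and one can re-use columns. The cleanest route is to invoke Lemma~\ref{lem:lower_bound_on_union}-style reasoning in reverse: build the encoder so that its columns span $\colsp(\p{A})$ together with $\{\p{e}_i : i \in S\}$; this span has dimension at most $\minrank(G) + |S|$ but, crucially, for the purpose of achieving $r_i = 1$ on $S$ we only need the individual vectors $\p{e}_i$ to literally appear as columns, while the rest of $\colsp(\p{A})$ needs only to be spanned. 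Since $\ell \ge \minrank(G)$, pick any $\minrank(G)$ columns forming a basis of $\colsp(\p{A})$ — but if some $\p{e}_i$, $i \in S$, is not among them we are forced to grow $\ell$. To avoid this I would instead prove the statement only for $\ell \ge \minrank(G)$ by first showing $\minrank(G) \ge $ (number of columns needed) is automatically large enough: indeed, any valid encoder must have rank $\ge \minrank(G)$, and one checks that $[\,\p{e}_i : i\in S\,] $ extended by enough columns of $\p{L}'$ to reach rank $\minrank(G)$ already forms a valid encoder of length exactly $\minrank(G)$, because the acyclic structure of $G_S$ lets the $\p{e}_i$'s themselves serve as part of a fitting-matrix decomposition for the receivers in $S$ — then pad to length $\ell$ with zero columns. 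Confirming that these at-most-$\minrank(G)$ columns still satisfy Theorem~\ref{thm:design_criterion} for \emph{all} receivers, using that $G_S$ acyclic forces the needed $\p{u}_i \lhd K_i$ to exist within the chosen span, is the crux; I would do this by an induction along the topological order of $S$.
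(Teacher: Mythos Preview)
Your approach has a genuine gap in the length-bookkeeping step. The critical issue is that the vectors $\p{e}_i$, $i \in S$, need not lie in the column span of an optimal encoder $\p{L}'$, so taking $[\,\p{e}_i : i \in S\,]$ and extending with columns of $\p{L}'$ to rank $\minrank(G)$ does \emph{not} in general yield a valid encoder. Concretely: let $N=3$, $K_1=\{2,3\}$, $K_2=\{1,3\}$, $K_3=\{1,2\}$, so $\minrank(G)=1$ (the all-ones matrix fits $G$). Take $S=\{1\}$, trivially acyclic. Your recipe produces the single-column encoder $\p{L}''=[\p{e}_1]$, already of rank $1=\minrank(G)$. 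But $\rx_2$ cannot decode: you would need $\p{e}_2 + \p{u}_2 \in \spp(\p{e}_1)$ with $\supp(\p{u}_2)\subseteq\{1,3\}$, which is impossible. The induction along the topological order of $S$ that you propose cannot rescue this, because the failure occurs at a receiver \emph{outside} $S$.

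The fix, which is exactly what the paper does, is to use the fitting-matrix columns $\p{A}_i$, $i \in S$, instead of the standard basis vectors $\p{e}_i$. Since $\p{A}_i = \p{e}_i + \p{u}_i$ with $\p{u}_i \lhd K_i$, receiver $i$ still achieves $r_i=1$ from the single symbol $\xb^\tr \p{A}_i$. Crucially, each $\p{A}_i$ already lies in $\colsp(\p{L}')$, so these $|S|$ vectors can be extended to a spanning set of $\colsp(\p{L}')$ using at most $\ell$ columns in total, and validity for all receivers (including those outside $S$) is then automatic because the column space is unchanged. The acyclicity of $G_S$ enters precisely here, and not where you invoke it: it guarantees that the $|S|\times|S|$ submatrix of $\p{A}$ on rows and columns $S$ is (after topological reordering) lower triangular with unit diagonal, hence the columns $\p{A}_i$, $i\in S$, are linearly independent and the extension is possible. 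Your setup of the topological order is correct, but it is the independence of the $\p{A}_i$'s---not of the $\p{e}_i$'s---that it should be used to establish.
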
 
\begin{proof}
Let $\p{L} \in \Fb_q^{N \times \ell}$ be the encoding matrix of any valid scalar linear code for $G$, and $\p{A}=[\,\p{A}_1~\cdots~\p{A}_N \,]$ be the fitting matrix corresponding to $\p{L}$.
% Then there exists a matrix \mbox{$\p{A} =[\,\p{a}_1^T~\cdots~\p{a}_N^T\,] \in \Fb_q^{N \times N}$} that fits $G$ and whose column space is contained in the column space of $\p{B}$. 
Since $G_S$ is acyclic, there exists a \emph{topological ordering} $i_1,\dots,i_{|S|}$ of its vertex set $S=\{i_1,\dots,i_{|S|}\}$, i.e., for any $1 \leq a < b \leq |S|$, there exists no directed edge $(i_b,i_a)$ in $G_S$, and hence, $G$ does not contain the edge $(i_b,i_a)$. 
It follows that for any choice of $1 \leq a < b \leq |S|$, $i_a \notin K_{i_b}$, and hence, \mbox{$\p{A}_{i_a,i_b}=0$}. Further, for any $1 \leq a \leq |S|$, $\p{A}_{i_a,i_a}=1$ since the diagonal entries of $\p{A}$ are equal to $1$.
Let $\p{E}$ be the $|S| \times |S|$ square submatrix of $\p{A}$ composed of the rows and columns indexed by $S$. It follows that if the rows and columns of $\p{E}$ are permuted according to the topological ordering $i_1,\dots,i_{|S|}$, then $\p{E}$ is lower triangular and all the entries on its main diagonal are equal to $1$. Thus $\p{E}$ is a full-rank matrix, and hence, the columns $\p{A}_{i_1},\dots,\p{A}_{i_{|S|}}$ of $\p{A}$ are linearly independent.

Consider the matrix $\p{L}' \in \Fb_q^{N \times \ell}$ constructed as follows. 
Let the first $|S|$ columns of $\p{L}'$ be $\p{A}_{i_1},\dots,\p{A}_{i_{|S|}}$. 
The remaining $\ell-|S|$ columns of $\p{L}'$ are chosen from among the columns of $\p{L}$ such that the column spaces of $\p{L}'$ and $\p{L}$ are identical. 
This is possible since $\p{A}_{i_1},\dots,\p{A}_{i_{|S|}}$ are linearly independent and are contained in the column space of $\p{L}$.
By construction, the column space of $\p{L}'$ contains the column space of $\p{A}$, and hence, $\p{L}'$ is a valid scalar linear index coding matrix for $G$. Also, for any $i \in S$, $\p{A}_i$ is a column of $\p{L}'$, and hence, $r_i=1$.
\end{proof}

\begin{definition} % [\emph{Acyclic Induced Subgraph (AIS) cover of $G$}]
A set of $P$ subsets $S_1,\dots,S_P \subseteq [N]$ of the vertex set of the side information graph $G$ is a \emph{$Q$-fold acyclic induced subgraph (AIS) cover} of $G$ if \emph{(i)}~$S_1 \cup \cdots \cup S_P = [N]$, \emph{(ii)}~each $i \in [N]$ is an element of at least $Q$ of the $P$ subsets $S_1,\dots,S_P$, and \emph{(iii)}~all the $P$ induced subgraphs $G_{S_1},\dots,G_{S_M}$ are acyclic.
\end{definition}

Given an AIS cover $S_1,\dots,S_P$ of $G$ and any \mbox{$\ell \geq \minrank(G)$},
% a scalar linear index code of length $\ell$ for $G$, 
we construct a vector linear code as follows. 
From Lemma~\ref{lem:ais_lemma}, we know that for each $j \in [P]$, there exists a valid scalar linear encoding matrix, say $\p{L}^{(j)}$, with codelength $\ell$ such that the locality of every receiver $i \in S_j$ is $1$. 
Consider a vector linear index code that encodes $P$ independent instances of the scalar messages $x_1,\dots,x_N \in \Fb_q$ using the encoding matrices $\p{L}^{(1)},\dots,\p{L}^{(P)}$, respectively. 
The broadcast rate of this scheme is $\ell$. 
If $S_1,\dots,S_P$ is a $Q$-fold AIS cover of $G$, for each \mbox{$i \in [N]$}, there exist $Q$ scalar linear encoders among $\p{L}^{(1)},\dots,\p{L}^{(P)}$ that provide locality $1$ at the $i^\tth$ receiver. 
The locality provided by the remaining $(P-Q)$ encoders is at most $\ell$ at this receiver. 
Thus the number of encoded symbols queried by any receiver in this vector linear coding scheme is at the most $Q + (P-Q)\ell$. 
Normalizing this by the number of message instances $P$, we observe that the locality of this scheme is at the most ${(Q + (P-Q)\ell)}/{P}$. 
Thus, we have proved 
\begin{theorem}~\label{thm:ais_cover}
If there exists a $Q$-fold AIS cover of $G$ consisting of $P$ subsets of its vertex set, and if $\ell \geq \minrank(G)$,
% there exists a scalar linear index code of length $\ell$ for $G$, then 
there exists a vector linear code for $G$ with broadcast rate $\ell$, message length $M=P$, and locality at the most $({Q + (P-Q)\ell})/{P}$, and hence,
\begin{equation*}
\beta_{G,q}^*\left( \frac{Q + (P-Q)\ell}{P} \right) \leq \ell.
\end{equation*}  
\end{theorem}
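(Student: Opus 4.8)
The plan is to build the claimed code explicitly by superposing $P$ scalar linear index codes, one tailored to each member of the AIS cover, and then to count how many codeword coordinates each receiver must query. First I would invoke Lemma~\ref{lem:ais_lemma}: since $\ell \geq \minrank(G)$ and every induced subgraph $G_{S_j}$ in the cover is acyclic, for each $j \in [P]$ there is a valid scalar linear encoding matrix $\p{L}^{(j)} \in \Fb_q^{N \times \ell}$ for $G$ under which $r_i = 1$ for all $i \in S_j$; that is, any $\rx_i$ with $i \in S_j$ recovers $x_i$ from a single coordinate of $\xb^\tr\p{L}^{(j)}$ together with its side information $x_k$, $k \in K_i$.

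Next I would assemble the vector linear code of message length $M = P$ as follows: take $P$ independent instances $\xb^{(1)},\dots,\xb^{(P)}$ of the scalar message tuple $(x_1,\dots,x_N)$ and encode the $j$-th instance with $\p{L}^{(j)}$, concatenating the resulting $P$ blocks into a codeword of length $P\ell$. The broadcast rate is then $\beta = P\ell/P = \ell$, as required. Decoding at $\rx_i$ proceeds block by block: the AIS-cover property guarantees that $i$ lies in at least $Q$ of the sets $S_j$, and for each such block $\rx_i$ queries just one coordinate; for each of the remaining (at most $P-Q$) blocks it falls back on the trivial strategy of querying all $\ell$ coordinates. Hence $|R_i| \leq Q\cdot 1 + (P-Q)\cdot\ell = Q + (P-Q)\ell$ for every $i$, so $r_i = |R_i|/M \leq \big(Q + (P-Q)\ell\big)/P$ and the overall locality $r = \max_i r_i$ satisfies the same bound.

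Finally, the displayed inequality $\beta_{G,q}^*\!\big((Q + (P-Q)\ell)/P\big) \leq \ell$ follows at once from the definition of $\beta_{G,q}^*$ as the infimum of the rates of all linear index codes for $G$ with locality at most its argument, since we have just exhibited one such code of rate exactly $\ell$ and message length $P$.

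I do not expect a genuine obstacle once Lemma~\ref{lem:ais_lemma} is available; the only place needing mild care is the locality count, where one must use the inequality ``each vertex belongs to \emph{at least} $Q$ of the $S_j$'' (so that the number of blocks failing to give locality $1$ at $\rx_i$ is at most $P-Q$, not exactly $P-Q$), after which the bound on $r_i$ can only improve if more than $Q$ of the subsets happen to contain $i$. It is also worth noting in passing that choosing $\ell = \minrank(G)$ minimizes the rate in this construction, while larger $\ell$ trades rate for reduced locality.
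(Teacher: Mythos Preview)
Your proof is correct and follows essentially the same approach as the paper: invoke Lemma~\ref{lem:ais_lemma} for each member of the AIS cover to obtain scalar encoders $\p{L}^{(1)},\dots,\p{L}^{(P)}$, time-share them over $P$ message instances, and count queries block by block (at most $Q$ blocks with locality $1$ and at most $P-Q$ with locality $\ell$). Your explicit remark that ``at least $Q$'' of the $S_j$ contain $i$, so the bound can only improve, matches the paper's reasoning exactly.
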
 

As an application of Theorem~\ref{thm:ais_cover}, consider the following coding scheme. 
Let the parameter \mbox{$t \geq 1$} be such that the side information graph $G$ contains no cycles of length $t$ or less. 
With $P=\binom{N}{t}$, let $S_1,\dots,S_P$ be the set of all subsets of $[N]$ of size $t$. 
The subgraph of $G$ induced by $S_j$, for any $j \in [P]$, is acyclic since $|S_j|=t$. Further, each $i \in [P]$ is an element of 
% \begin{equation*}
$Q = \binom{N-1}{t-1}$ 
% \end{equation*} 
of the $P$ subsets $S_1,\dots,S_P$. 
Hence the resulting locality is $({Q + (P-Q)\ell})/{P}$ which can easily be shown to be equal to $(t+(N-t)\ell)/{N}$. Hence, we have %the following

\begin{corollary}~\label{cor:ais_cover_t}
If $G$ contains no cycles of length $t$ or less and $\ell \geq \minrank(G)$,
% there exists a scalar linear index code of length $\ell$ for $G$, 
then we can achieve broadcast rate $\ell$ with locality at the most $(t+(N-t)\ell)/{N}$.
\end{corollary}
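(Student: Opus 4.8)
The plan is to apply Theorem~\ref{thm:ais_cover} to a specific, combinatorially transparent acyclic induced subgraph (AIS) cover of $G$. First I would set $P = \binom{N}{t}$ and let $S_1,\dots,S_P$ be an enumeration of all $t$-element subsets of $[N]$. Since $t \geq 1$, we have $S_1 \cup \cdots \cup S_P = [N]$. Moreover, each $G_{S_j}$ is acyclic: any directed cycle contained in $G_{S_j}$ uses only vertices of $S_j$ and hence has length at most $|S_j| = t$, which would be a directed cycle in $G$ of length $t$ or less, contradicting the hypothesis. Thus every $G_{S_j}$ is a directed acyclic graph, and property~(iii) in the definition of an AIS cover holds.

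Next I would determine the fold number $Q$ by counting, for a fixed vertex $i \in [N]$, how many of the subsets $S_1,\dots,S_P$ contain $i$. Choosing a $t$-subset containing $i$ amounts to choosing the remaining $t-1$ elements from the other $N-1$ vertices, so exactly $\binom{N-1}{t-1}$ of the subsets contain $i$. Hence $S_1,\dots,S_P$ is a $Q$-fold AIS cover of $G$ with $Q = \binom{N-1}{t-1}$ and $P = \binom{N}{t}$.

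Then, invoking Theorem~\ref{thm:ais_cover} with any $\ell \geq \minrank(G)$, there exists a vector linear code for $G$ with broadcast rate $\ell$ and locality at most $(Q + (P-Q)\ell)/P$. It remains only to simplify this quantity. Using the elementary identity $Q/P = \binom{N-1}{t-1}\big/\binom{N}{t} = t/N$, I would write
\begin{equation*}
\frac{Q + (P-Q)\ell}{P} = \frac{Q}{P} + \left(1 - \frac{Q}{P}\right)\ell = \frac{t}{N} + \frac{N-t}{N}\,\ell = \frac{t + (N-t)\ell}{N},
\end{equation*}
which is precisely the claimed bound on the locality. Translating to the trade-off function gives $\beta_{G,q}^*\big((t+(N-t)\ell)/N\big) \leq \ell$.

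I do not anticipate a genuine obstacle here; the corollary is a direct instantiation of Theorem~\ref{thm:ais_cover}. The only points requiring a word of care are (a) confirming that a $t$-vertex induced subgraph of a graph whose shortest directed cycle has length exceeding $t$ is necessarily acyclic (immediate, as above), and (b) carrying out the binomial-coefficient simplification of $Q/P$ correctly; both are routine.
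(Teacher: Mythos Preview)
Your proposal is correct and matches the paper's argument essentially verbatim: the paper also takes $P=\binom{N}{t}$, $S_1,\dots,S_P$ the collection of all $t$-subsets of $[N]$, observes each $G_{S_j}$ is acyclic by the girth hypothesis, sets $Q=\binom{N-1}{t-1}$, applies Theorem~\ref{thm:ais_cover}, and simplifies $(Q+(P-Q)\ell)/P$ to $(t+(N-t)\ell)/N$.
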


Using $t=1$ in Corollary~\ref{cor:ais_cover_t} we immediately arrive at the following result.

\begin{corollary} \label{cor:lem:arbitrary_locality}
% Let $G$ be any unicast index coding problem with a valid scalar linear index code of length $\ell$. 
Let $G$ be any index coding problem and $\ell \geq \minrank(G)$.
There exists a vector linear index code for $G$ with message length $M=N$, broadcast rate $\beta=\ell$ and locality $r \leq (1+(N-1)\ell)/{N}$.
\end{corollary}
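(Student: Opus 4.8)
The plan is to obtain this statement as the special case $t=1$ of Corollary~\ref{cor:ais_cover_t}, after first checking that the hypothesis of that corollary is vacuously true for $t=1$. Since our model imposes $K_i \subseteq [N] \setminus \{i\}$, the side information graph $G$ has no self-loops and hence no directed cycle of length $1$; so the condition ``$G$ contains no cycles of length $t$ or less'' holds for $t=1$ no matter what $G$ is. I would then invoke Corollary~\ref{cor:ais_cover_t} with this value of $t$ and the given $\ell \geq \minrank(G)$ to obtain a scheme of broadcast rate $\ell$ with locality at most $(1+(N-1)\ell)/N$, which is the claimed bound.

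To pin down the assertion $M = N$, I would trace through the construction behind Corollary~\ref{cor:ais_cover_t}: its AIS cover is the family of all singletons $S_1 = \{1\}, \dots, S_N = \{N\}$, so $P = \binom{N}{1} = N$ and each vertex lies in exactly $Q = \binom{N-1}{0} = 1$ of them. Theorem~\ref{thm:ais_cover} then yields a vector linear code of message length $M = P = N$, broadcast rate $\ell$, and locality at most $(Q + (P-Q)\ell)/P = (1 + (N-1)\ell)/N$, which is exactly what is claimed.

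Alternatively — and I would probably present this as the cleaner, self-contained route — one can argue directly without the AIS-cover formalism. For each $i \in [N]$, the one-vertex induced subgraph $G_{\{i\}}$ is trivially acyclic, so Lemma~\ref{lem:ais_lemma} (with $S = \{i\}$) supplies a valid scalar linear encoder $\p{L}^{(i)} \in \Fb_q^{N \times \ell}$ for $G$ with $r_i = 1$. I would then encode $N$ independent instances of the scalar message tuple $x_1, \dots, x_N \in \Fb_q$ using $\p{L}^{(1)}, \dots, \p{L}^{(N)}$ respectively; this is a vector linear code of message length $M = N$ and broadcast rate $\ell$. Receiver $\rx_i$ reads a single coded symbol from the $i$-th instance and at most $\ell$ coded symbols from each of the remaining $N-1$ instances, so $|R_i| \leq 1 + (N-1)\ell$, and hence $r_i = |R_i|/M \leq (1 + (N-1)\ell)/N$ for every $i$, giving $r \leq (1 + (N-1)\ell)/N$.

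There is essentially no obstacle to overcome here: all the work already sits in Lemma~\ref{lem:ais_lemma} and Theorem~\ref{thm:ais_cover} (equivalently, in Corollary~\ref{cor:ais_cover_t}). The only points requiring any care are the parameter bookkeeping ($P = N$, $Q = 1$, $M = N$) and the observation that the $t=1$ hypothesis is automatically satisfied because $G$ has no self-loops; I expect the direct argument to make both of these transparent.
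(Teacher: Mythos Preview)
Your proposal is correct and follows exactly the paper's approach: the paper simply notes that the result is the $t=1$ case of Corollary~\ref{cor:ais_cover_t}. Your additional bookkeeping (verifying the $t=1$ hypothesis is vacuous and tracing $P=N$, $Q=1$ to obtain $M=N$) fills in details the paper leaves implicit.
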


\subsection{Side Information Graphs with Circular Symmetry} \label{sec:minrank_const}

We will now construct vector linear index codes for side information graphs $G$ that satisfy a symmetry property. 
Consider the permutation $\sigma$ on the set $[N]$ that maps $i \in [N]$ to $\sigma(i)=(i \mod N) + 1$, i.e., $\sigma(1)=2, \sigma(2)=3,\dots,\sigma(N)=1$. In this subsection we will assume $G$ to be any directed graph with vertex set $[N]$ such that $\sigma$ is an automorphism of $G$. 
Such unicast index coding problems have been considered before in the literature, see~\cite{MCJ_IT_14}, and are related to topological interference management~\cite{Jaf_IT_14}.
% First we require the following definition.
The following theorem shows that any rate $\ell \geq \minrank(G)$ can be achieved using a vector linear code with $r$ at the most $\ell(N-\ell+1)/N$.

\begin{theorem} \label{thm:cyclic_const}
If the cyclic permutation $\sigma$ is an automorphism of $G$ and if $\ell \geq \minrank(G)$, then
\begin{equation*}
 \beta_{G,q}^*\left( \frac{\ell(N-\ell+1)}{N} \right) \leq \ell.
\end{equation*} 
\end{theorem}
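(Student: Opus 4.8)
The plan is to build a vector linear code over $\Fb_q$ of message length $M=N$ by time‑sharing $N$ cyclically rotated copies of a single scalar linear code of length $\ell$, chosen so that the rotations spread the locality‑$1$ receivers evenly across the $N$ copies. (We may assume $\minrank(G)\le\ell\le N$; for $\ell>N$ the claimed locality $\ell(N-\ell+1)/N$ is less than $1$ and the statement is vacuous.)

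The building block, which needs no symmetry assumption, is the following: for any single unicast problem $G$ and any integer $\ell$ with $\minrank(G)\le\ell\le N$ there is a valid scalar linear encoder $\p{L}\in\Fb_q^{N\times\ell}$ for $G$ and a set $J\subseteq[N]$ with $|J|=\ell$ such that $r_i=1$ for every $i\in J$. To obtain it, take a matrix $\p{A}$ that fits $G$ with $\rank(\p{A})=\minrank(G)$, choose $I\subseteq[N]$ with $|I|=\minrank(G)$ so that $\{\p{A}_i:i\in I\}$ is a basis of $\colsp(\p{A})$, extend $I$ to a set $J\subseteq[N]$ of size $\ell$, and let $\p{L}$ be the matrix whose columns are the vectors $\p{A}_j$, $j\in J$, in any order. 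Then $\colsp(\p{L})\supseteq\colsp(\p{A})$, so by the validity criterion recalled in Section~\ref{sec:sub:scalar_structure} the matrix $\p{L}$ is a valid encoder for $G$; and for each $j\in J$ the column $\p{A}_j=\p{e}_j+\p{u}_j$ (with $\p{u}_j\lhd K_j$) is literally a codeword coordinate, so $\rx_j$ decodes $x_j$ from that single symbol and its side information, giving $r_j=1$.

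Now I would invoke the circular symmetry. Since $\sigma$ is an automorphism of $G$, relabelling the message indices according to a power of $\sigma$ turns a valid code for $G$ into another valid code for $G$ in which the locality of $\rx_i$ equals the old locality of $\rx_{\sigma^{m}(i)}$ — the same relabelling argument used in Lemma~\ref{lem:cyclic_symmetry}. Applying this for $m=0,1,\dots,N-1$ to the code from the previous paragraph yields $N$ valid scalar linear encoders $\p{L}^{(1)},\dots,\p{L}^{(N)}$ for $G$, each of length $\ell$, such that the locality‑$1$ receivers of $\p{L}^{(j)}$ include a set $T_j$, where the $T_j$ run over the images $\sigma^{m}(J)$, $m=0,\dots,N-1$. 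Because $\sigma$ is a single $N$‑cycle, $\langle\sigma\rangle$ acts freely and transitively on $[N]$, so for each fixed $i\in[N]$ the index $i$ belongs to exactly $|J|=\ell$ of the sets $T_1,\dots,T_N$.

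Finally I would take the vector linear code with message length $M=N$ that encodes $N$ independent instances of $(x_1,\dots,x_N)$ using $\p{L}^{(1)},\dots,\p{L}^{(N)}$ respectively; its codeword length is $N\ell$, hence its broadcast rate is $\ell$. For receiver $\rx_i$: in each of the $\ell$ copies $j$ with $i\in T_j$ it queries just one codeword symbol, and in each of the remaining $N-\ell$ copies it queries at most $\ell$ symbols, so it observes at most $\ell\cdot 1+(N-\ell)\cdot\ell=\ell(N-\ell+1)$ symbols in total. Dividing by $M=N$ gives locality at most $\ell(N-\ell+1)/N$ at every receiver, whence $\beta_{G,q}^*\!\big(\ell(N-\ell+1)/N\big)\le\ell$. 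The only steps carrying real content are the fitting‑matrix building block (which gives the crucial "any $\ell$ receivers can be made to have locality $1$", strictly beyond what the acyclic‑induced‑subgraph lemma of Section~\ref{sec:ais_cover} provides) and the freeness/transitivity count that makes the covering of $[N]$ by the rotated good‑sets exactly $\ell$‑fold, which is precisely what pins the bound to $\ell(N-\ell+1)/N$; everything else is a routine time‑sharing estimate.
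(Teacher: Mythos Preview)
Your proof is correct and follows essentially the same approach as the paper: build a scalar code of length $\ell$ in which $\ell$ designated receivers have locality $1$ (you do this by taking $\ell$ columns of a minimum-rank fitting matrix, whereas the paper takes the $\ell$ independent columns of a rank-$\ell$ fitting matrix, but either works), rotate it by the $N$ powers of $\sigma$, use the free transitive action to count that each receiver gets locality $1$ in exactly $\ell$ of the rotated copies, and time-share. The only cosmetic difference is that your encoder $\p{L}$ may have linearly dependent columns while the paper's does not; this is harmless for the locality and rate computation.
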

\begin{proof}
%% R1version %%
Since $\ell \geq \minrank(G)$, there exists an $\p{A} \in \Fb_q^{N \times N}$ that fits $G$ and which is of rank $\ell$.
%% R1version %% 
Let $\p{L}$ be an $N \times \ell$ matrix composed of a set of $\ell$ linearly independent columns of $\p{A}$.
% Suppose $\p{A}$ fits $G$ and the first $\ell$ columns of $\p{A}$ form a basis of the column space of $\p{A}$. Denote by $\p{B}$ the $N \times \ell$ matrix composed of the first $\ell$ columns of $\p{A}$.
Note that when $\p{L}$ is used as a scalar linear index code there exist $\ell$ receivers with locality $1$ since exactly $\ell$ columns of $\p{A}$ appear as columns of $\p{L}$. 

Let $\p{P}$ be the permutation matrix obtained by cyclically shifting down the rows of the $N \times N$ identity matrix by one position. It is straightforward to verify that $\p{PAP}^\tr$ fits the graph $\sigma(G)$. 
Since $\sigma$ is an automorphism of $G$, we have $\sigma(G)=G$, and hence, $\p{PAP}^\tr$ fits $G$. 
Since the column space of $\p{PL}$ is identical to that of $\p{PAP}^\tr$, $\p{PL}$ represents a valid scalar linear index code for $G$. 
Extending this argument we deduce that for any $j \in [N]$, the matrix $\p{A}^{(j)}=\p{P}^j\p{A}(\p{P}^j)^\tr$ fits $G$, and the column space of $\p{L}^{(j)}=\p{P}^j\p{L}$ is identical to that of $\p{A}^{(j)}$. 
The matrices $\p{L}^{(j)}$ and $\p{A}^{(j)}$, $j \in [N]$, represent $N$ valid scalar linear codes and the corresponding fitting matrices for $G$. 
The receiver $\rx_i$ has locality $1$ when using the $j^\tth$ code $\p{L}^{(j)}$ if the $i^\tth$ column of $\p{A}^{(j)}$ appears as a column of $\p{L}^{(j)}$, else its locality is at the most $\ell$.
Using the fact that the group $\{\sigma,\sigma^2,\dots,\sigma^N=1\}$ acts transitively on the vertices of $G$, we observe that for any $i \in [N]$, there exist $\ell$ distinct values of $j$ such that the $i^{\text{th}}$ column of $\p{A}^{(j)}$ is a column of $\p{L}^{(j)}$. 
Thus, for any $i \in [N]$, the locality of $\rx_i$ is $1$ for $\ell$ of the $N$ codes $\p{L}^{(1)},\dots,\p{L}^{(N)}$, and is at the most $\ell$ for the remaining $N-\ell$ codes. Time-sharing these $N$ codes we obtain a vector linear code with message length $M=N$, rate $\ell$ and locality at the most $(\ell + \ell(N-\ell))/N = \ell(N-\ell+1)/N$ for each of the $N$ receivers.
\end{proof}

\subsection{Codes from Optimal Coverings of $G$}

Several index coding schemes in the literature partition the given problem (side information graph $G$) into subproblems (subgraphs of $G$), and apply a pre-defined coding technique on each of these subproblems independently. 
The overall codelength is the sum of the codelengths of the individual subproblems.
The broadcast rate is then reduced by optimizing over all possible partitions of $G$. We will now quickly recall a few such \emph{covering-based} schemes, and then show that they can be modified to guarantee locality.

% Let $G$ be the side information graph of any given index coding problem where the side information index set of receiver $i$ is $K_i \subset [N]$. 

\subsubsection*{Partition Multicast or Partial Clique Covering~\cite{BiK_INFOCOM_98,BKL_IT_13,TDN_ISIT_12}}
% The \emph{partition multicast} or the \emph{partial clique covering}  scheme uses the transpose of the parity-check matrix of an appropriate maximum distance separable (MDS) code to encode each subgraph of $G$~\cite{BiK_INFOCOM_98,BKL_IT_13,TDN_ISIT_12}. 
Let $G_S$ be the subgraph of $G$ induced by the subset of vertices \mbox{$S \subseteq [N]$}. 
The number of information symbols in the index coding problem $G_S$ is $|S|$, and the side information of receiver \mbox{$i \in S$} in $G_S$ is $K_i \cap S$. 
The partition multicast scheme uses a scalar linear encoder for $G_S$ where the encoding matrix is the transpose of the parity-check matrix of an MDS code of length $|S|$ and dimension $\min_{i \in S}|K_i \cap S|$. This code for $G_S$ encodes messages of length \mbox{$m_S=1$} and has codelength $\ell_S=|S|-\min_{i \in S} |K_i \cap S|$. We will use the trivial value of overall locality $r_S=\ell_S$ for this coding scheme.
% The locality of every receiver in $G_S$ is at the most the codelength, i.e., locality $r_S \leq \ell_S$. 
The finite field $\Fb_q$ must be sufficiently large for the said MDS codes to exist. % to guarantee that the MDS codes of required blocklength and dimension exist for every possible choice of $S \subset [N]$.

\subsubsection*{Cycle Covering~\cite{NTZ_IT_13,CASL_ISIT_11}}

% The \emph{cycle covering} scheme~\cite{NTZ_IT_13,CASL_ISIT_11} considers subgraphs $G_S$ which form a cycle of length $|S|$.  
If $G_S$ is a directed cycle of length $|S|$, then it is encoded using the scalar linear coding scheme described in Example~\ref{ex:simple_code_cycle}.
% with encoding matrix equal to the transpose of the parity-check matrix of a repetition code of length $|S|$, 
resulting in message length \mbox{$m_S=1$}, index codelength $\ell_S=|S|-1$, and locality $r_S=2$.
%  is at the most the codelength $\ell_S$. 
If $G_S$ is not a directed cycle, then its information symbols are transmitted uncoded resulting in $m_S=1$, $\ell_S=|S|$ and locality $r_S=1$. 

In similar vein to partition multicast and cycle covering schemes, consider the following proposed technique that applies the optimal scalar linear code over each subgraph $G_S$. 
% If the subgraphs are chosen such that each subgraph has a small minrank, then the locality of the corresponding index code will also be small.

\subsubsection*{Minrank Covering}
Encode each subgraph $G_S$ using its own optimal scalar linear index code. The message length $m_S=1$, codelength $\ell_S$ equals $\minrank(G_S)$, and use trivial value for locality $r_S=\minrank(G_S)$. By partitioning $G$ into subgraphs $G_S$ of small minrank we can achieve a small locality for the overall scheme.

Now consider any covering-based index coding technique (such as partition multicast, cycle covering or minrank covering) for $G$. 
Let the scalar linear index code associated with the subgraph $G_S$, $S \subseteq [N]$, have codelength $\ell_S$ and locality $r_S$. 
The overall index code uses a partition of the vertex set $[N]$, which is represented by the tuple $(a_S, S \subset [N])$, where each $a_S \in \{0,1\}$ is such that the partition of $[N]$ consists of all subsets $S$ with $a_S=1$.
Note that $(a_S, S \subset [N])$ represents a partition of $[N]$ if and only if $\sum_{S: i \in S}a_S=1$ for every $i \in [N]$, i.e., every vertex $i$ is contained in exactly one of the subsets in the partition.
The covering-based index coding technique applies an index code of length $\ell_S$ and locality $r_S$ to each subgraph $G_S$ with \mbox{$a_S=1$} independently. Thus the codelength of the overall index code is \mbox{$\ell=\sum_{S: a_S=1}\ell_S=\sum_{S \subseteq [N]}a_S\ell_S$} and locality is $r=\max_{S:a_S=1}r_S=\max_{S \subseteq [N]}a_Sr_S$.
By optimizing over all possible partitions of $G$, we have 

\begin{theorem}[Covering with locality] \label{thm:partition_scalar}
Consider a family of scalar linear index codes, one for each $G_S$, $S \subseteq [N]$, with length $\ell_S$ and locality $r_S$. 
Given any $r \geq 1$, there exists a scalar linear code for $G$ with locality at the most $r$ and rate equal to the solution of the following integer program:
minimize $\sum_{S \subseteq [N]}a_S\ell_S$ subject to the following constraints, 
$\sum_{S: i \in S}a_S = 1$, for all $i \in [N]$, and 
$a_Sr_S \leq r$, where $a_S \in \{0,1\}$ for all $S \subseteq [N]$.
% \text{and } a_S \in \{0,1\}. 
%% \begin{align*}
%% &\text{minimize } \sum_{S \subset [N]}a_S\ell_S \text{ subject to} \\
%% &\sum_{S: i \in S}a_S = 1 \,\forall\, i \in [N],~
%% a_Sr_S \leq r, ~a_S \in \{0,1\} ~\forall\, S \subset [N].
%% % \text{and } a_S \in \{0,1\}.
%% \end{align*}  
\end{theorem}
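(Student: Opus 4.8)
The plan is to show that the integer program is exactly the combinatorial description of the family of covering-based scalar linear index codes for $G$ that respect the locality bound $r$, so that solving the program both produces a valid code and certifies optimality within this family. First I would fix a target locality $r \geq 1$ and observe that, by hypothesis, for every subset $S \subseteq [N]$ we are handed a scalar linear index code for the induced subproblem $G_S$ with message length $m_S = 1$, codelength $\ell_S$, and locality $r_S$. A covering-based code for $G$ is obtained by choosing a partition of the vertex set $[N]$ into blocks, applying the given code to each block, and concatenating the resulting codewords; the partition is encoded by the indicator variables $a_S \in \{0,1\}$, and the condition that $(a_S)_{S \subseteq [N]}$ is a genuine partition is precisely $\sum_{S : i \in S} a_S = 1$ for every $i \in [N]$, since this forces each vertex into exactly one chosen block.

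Next I would verify the two claimed properties of the concatenated code. For correctness: every receiver $\rx_i$ lies in exactly one chosen block $S$ (the one with $a_S = 1$ and $i \in S$); within that block it receives $\cb^{(S)}$, which by assumption lets it decode $x_i$ using only its side information $K_i \cap S \subseteq K_i$ and a subset of $\cb^{(S)}$, so decodability over the full problem $G$ holds. For the codelength: since the blocks are disjoint and their codewords are concatenated without sharing symbols, the total codelength is $\ell = \sum_{S : a_S = 1} \ell_S = \sum_{S \subseteq [N]} a_S \ell_S$, and since $M = m_S = 1$ the broadcast rate equals this quantity. For the locality: receiver $\rx_i$ queries only symbols inside its block $S$, so its receiver locality is at most $r_S = a_S r_S$ (using $a_S = 1$ for that block); taking the max over receivers gives $r \leq \max_{S \subseteq [N]} a_S r_S$. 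Imposing $a_S r_S \leq r$ for all $S$ therefore guarantees the overall locality is at most $r$. Conversely, any covering-based code of the stated form whose locality is at most $r$ yields a feasible point of the integer program with objective value equal to its rate, so minimizing $\sum_{S} a_S \ell_S$ over the feasible region returns the smallest achievable rate in this family.

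The step I expect to require the most care is the locality accounting: one must be precise that in the concatenated code the query set $R_i$ of $\rx_i$ is contained in the coordinate block belonging to the unique $S \ni i$ with $a_S = 1$, so that $|R_i|$ is bounded by the locality $r_S$ of the sub-code rather than by the total codelength — in particular no receiver needs to look across blocks because its entire demand and usable side information live inside its block. Once this is nailed down, the reduction to the integer program is immediate: feasibility of $(a_S)$ corresponds to a valid partitioned code with locality $\le r$, the objective is its rate, and the claimed theorem follows by taking the optimum. I would also remark that the integer program is always feasible (e.g.\ take the trivial partition into singletons, for which $G_{\{i\}}$ is acyclic, $\ell_{\{i\}} = 1$, $r_{\{i\}} = 1 \le r$), so the minimum is well-defined.
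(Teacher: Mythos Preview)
Your proposal is correct and follows essentially the same approach as the paper: both arguments encode a partition of $[N]$ by indicators $a_S \in \{0,1\}$, apply the given sub-code to each block independently, read off the concatenated codelength $\sum_S a_S \ell_S$ and overall locality $\max_S a_S r_S$, and then optimize over partitions subject to the locality constraint. You supply a bit more detail than the paper (explicit decodability check, the converse that every covering-based code yields a feasible point, and the singleton-partition feasibility remark), but the underlying construction and reasoning are the same.
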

The second constraint \mbox{$a_Sr_S \leq r$} in Theorem~\ref{thm:partition_scalar} ensures that the locality of the resulting coding scheme is at the most $r$. Since $a_S \in \{0,1\}$, this implies that when solving for the optimal partition, the integer program considers only those subsets $S$ with $r_S \leq r$, i.e., locality $r$ is achieved by partitioning $G$ into subproblems of small locality.

The above technique can be extended to vector linear codes by using a linear programming relaxation to allow $0 \leq a_S \leq 1$. 
The subproblems $G_S$ can themselves be encoded using vector linear codes of small locality, as in
Theorems~\ref{thm:ais_cover} and~\ref{thm:cyclic_const},
% ~Lemmas~\ref{lem:ais_cover_t}, \ref{lem:arbitrary_locality} and~\ref{lem:cycles_locality}, 
say with rate $\beta_S$ and locality $r_S$. 
Then the overall achievable rate by time-sharing over these subproblems, for a given locality $r$, is the solution to $\min \sum_{S \subseteq [N]} a_S\beta_S$ subject to $\sum_{S:i\in S}a_Sr_S \leq r$ and $\sum_{S:i \in S}a_S=1$ for each $i \in [N]$, where $a_S \in [0,1]$ for all $S \subseteq [N]$.

\section{Discussion} \label{sec:conclusion}

%% R1version %%
We introduced the problem of designing index codes that are locally decodable and have identified several techniques to construct such codes.
We identified the optimal trade-off between rate and locality for the following cases: when locality $r=1$, side information graph is the directed $3$-cycle, vector linear codes for directed cycles of any length, and scalar linear codes when minrank is one less than the number of messages.
% We identified the optimum broadcast rate corresponding to unit locality for any unicast index coding problem, and the complete locality-rate trade-off curve for the problem represented by a directed $3$-cycle. 
% We derived the optimal trade-off among linear codes for directed cycles are arbitrary length, and side information graphs whose minrank is one less than the number of messages.

\subsection{Relation to Overcomplete Dictionaries}

Locally decodable index codes seem to be related to the problem of sparse representation of sets of vectors.
A scalar linear index code with locality $r$ is characterized by a valid encoder matrix $\p{L}$ with a corresponding fitting matrix $\p{A}$ such that any column of $\p{A}$ is some linear combination of at the most $r$ columns of $\p{L}$. Thus the columns of $\p{L}$ serve as an overcomplete basis for a sparse representation of the columns of $\p{A}$. 
Given an index coding problem, we must design a fitting matrix $\p{A}$ and a corresponding overcomplete basis $\p{L}$ for $\p{A}$ such that both the locality $r$ and the codelength $\ell$ are small.

\subsection{On Generalization of Results on Locality-Rate Trade-off}

Stronger achievability and/or converse results may be required to gain deeper insights into the locality-rate trade-off of a general index coding problem. 
In Section~\ref{sec:minrank_N_minus_1} we identified optimal scalar linear codes for side information graphs $G$ with $\minrank(G) = N-1$. 
The key property used in Section~VI is that any such directed graph $G$ contains at least one directed cycle (see Lemma~\ref{lem:GS_contains_cycle} and Theorem~\ref{thm:minrank_N-1}). 
A natural direction to generalize this result is to consider side information graphs $G$ with smaller values of minrank, such as \mbox{$\minrank(G) \leq N-2$}.
However, the structural properties of directed graphs with \mbox{$\minrank(G)=N-2$} are not as extensively understood in the index coding community as the more extreme (and simpler) case of $\minrank(G)=N-1$.
For instance, reference~\cite{Ong_Netcod_14} identifies a subset of index coding problems whose minrank equals $N-2$.

Another avenue of work is the generalization of the results on directed $3$-cycle (in Section~\ref{sec:3cycle}) to directed cycles of length $N>3$.
The case $N=3$ is highly structured and this facilitated our analysis in Section~\ref{sec:3cycle}. However, some of this symmetry is lost when $N$ takes values larger than $3$ and this imposes analytical difficulties (see Remark~\ref{rem:3cycle}). 

One of the extreme operating points in the trade-off between locality $r$ and rate $\beta$ is the point where the broadcast rate is optimal, i.e., minimum possible with no restriction imposed by locality. 
Thus, when deriving the optimal trade-off between $r$ and $\beta$ we are naturally restricted to the family of index coding problems for which the minimum broadcast rate or capacity is explicitly known.
However, determining the capacity of index coding problems is difficult in general, and the capacity is known for only some families of side information graphs.

%% R1version %%

%%%% appendices %%%%

\appendices

%% R1version %%

\section{Invariance of Locality-Rate Trade-off with respect to Channel and Message Alphabets} \label{app:invariance_alphabet}

The channel model considered in this paper assumes that the message alphabet and the channel alphabet are identical, i.e., all the messages and the codeword are vectors over the same alphabet $\Ac$. 
The definition of rate $\beta$, locality $r$ and the optimal locality-rate trade-off $\beta_G^*(r)$ given in Section~\ref{sec:system_model} are specific to this model. 
In this section, we consider the general scenario where the channel alphabet, denoted by $\Zc$, and the message alphabet $\Ac$ can be different. 
We will define locality when $\Ac \neq \Zc$, and show that the optimal locality-rate trade-off (that captures non-linear codes as well) is independent of the choice of $\Ac$ and $\Zc$ as long as these sets are finite and of size at least $2$.
The results of this appendix are summarized in Appendix~\ref{app:sec:summary_invariance}.

\subsection{The General System Model}

As in Section~\ref{sec:system_model}, we will assume that $G$ is a given side information graph with $N$ vertices, and the messages $\p{x}_1,\dots,\p{x}_N$ are elements of $\Ac^M$, where $\Ac$ is the message alphabet and $M$ is the message length. The codeword $\p{c}$ is a vector of length $\ell$ over the channel alphabet $\Zc$, i.e., $\p{c} = (c_1,\dots,c_{\ell}) \in \Zc^{\ell}$. Both $\Ac$ and $\Zc$ are finite sets.
The receiver $\rx_i$ observes $\p{c}_{R_i} = (c_k, k \in R_i) \in \Zc^{|R_i|}$, where $R_i \subseteq [\ell]$, and decodes its demand $\p{x}_i$ using $\p{c}_{R_i}$ and its side information $\p{x}_j$, $j \in K_i$.
The rate of this code is defined as
\begin{equation*}
 \beta = \frac{\log_2 |\Zc|^\ell}{\log_2 |\Ac|^M} = \frac{\ell}{M} \times \frac{\log_2 |\Zc|}{\log_2 |\Ac|}.
\end{equation*} 
This is consistent with the definitions of rate used in the literature; for instance, see~\cite{BKL_IT_13}. We define the locality of receiver $\rx_i$ as
\begin{equation*}
 r_i = \frac{\log_2 |\Zc|^{|R_i|}}{\log_2 |\Ac|^M} = \frac{|R_i|}{M} \times \frac{\log_2 |\Zc|}{\log_2 |\Ac|}.
\end{equation*} 
Here, $\log_2 |\Zc|^{|R_i|}$ measures the number of bits observed from the channel and $\log_2 |\Ac|^M$ is the information content of the desired message (in bits). Thus $r_i$ is the number of transmission bits queried per decoded message bit at $\rx_i$. 
The average locality $\ravg$ and overall locality $r$ are the arithmetic average and the maximum of $r_1,\dots,r_N$, respectively.
When $\Zc=\Ac$ these definitions of $\beta$, $r_i$, $\ravg$ and $r$ coincide with the ones used in the main text of this paper.

\begin{definition}
The \emph{optimal broadcast rate} $\beta_G^*(r,\Ac,\Zc)$ is the infimum among the broadcast rates $\beta$ of all valid index codes (including non-linear codes) with message alphabet $\Ac$, channel alphabet $\Zc$, and with locality at the most $r$, considering all possible message lengths $M \geq 1$.
\end{definition}

The definition of $\beta_G^*(r,\Ac,\Zc)$ captures the possible dependence (if any) of the locality-rate trade-off on the choice of message and channel alphabets. The function $\beta_G^*(r)$ in the main text is identical to $\beta_G^*(r,\Ac,\Ac)$.

\subsection{Invariance with respect to channel alphabet} \label{sec:app:invariance_channel_alph}

We now show that $\beta_G^*(r,\Ac,\Zc)$ is independent of $\Zc$. We will do so by showing that $\beta_G^*(r,\Ac,\Zc') \leq \beta_G^*(r,\Ac,\Zc)$ for any choice of alphabets $\Zc$ and $\Zc'$. Reversing the roles of $\Zc$ and $\Zc'$ in this inequality shows that $\beta_G^*(r,\Ac,\Zc) \leq \beta_G^*(r,\Ac,\Zc')$, and hence, $\beta_G^*(r,\Ac,\Zc)$ is independent of $\Zc$.

\begin{lemma} \label{lem:app:channel_alphabet}
Consider any valid index code for $G$ with message and channel alphabets $\Ac$ and $\Zc$ with rate $\beta$ and overall locality $r$, with $|\Ac|,|\Zc| \geq 2$. For any finite set $\Zc'$, with $|\Zc'| \geq 2$, and any $\epsilon > 0$, there exists a valid index code for $G$ with message alphabet $\Ac$, channel alphabet $\Zc'$, rate at the most $(1 + \epsilon)\beta$ and locality at the most $(1 + \epsilon) r$.
\end{lemma}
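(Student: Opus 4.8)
The plan is to combine many independent copies of the given code with a block-wise re-encoding of its output over the new alphabet $\Zc'$, arranged so that the part of the new codeword a receiver must query stays proportional to the part it queried before. So the re-encoding is done coordinate by coordinate, not on the whole codeword at once.

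First I would fix a large integer $T$ (to be pinned down at the end) and run $T$ independent instances of the given index code. This yields a valid code for $G$ with message length $M' = TM$ over $\Ac$, whose codeword is naturally partitioned into $\ell$ blocks, the $k$-th block being $\bar{\pmb{c}}_k \in \Zc^T$, namely the $k$-th codeword coordinate collected across all $T$ instances. Now set $s = \lceil T \log_{|\Zc'|} |\Zc| \rceil$; since $|\Zc'|^s \ge |\Zc|^T$ there is an injection $\phi : \Zc^T \to (\Zc')^s$. Define the new codeword $\pmb{c}' = (\phi(\bar{\pmb{c}}_1),\dots,\phi(\bar{\pmb{c}}_\ell)) \in (\Zc')^{\ell s}$, so the new codelength is $\ell' = \ell s$. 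To decode, $\rx_i$ queries the $s$-symbol blocks indexed by $k \in R_i$, i.e. $|R_i'| = s |R_i|$ symbols of $\pmb{c}'$; inverting $\phi$ on each queried block recovers $\bar{\pmb{c}}_k$, hence $c_k^{(t)}$ for every $k \in R_i$ and every instance $t \in [T]$, after which $\rx_i$ runs the original decoder $T$ times using its side information. Validity is immediate from injectivity of $\phi$.

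It then remains to check the parameters. From $s \le T \log_{|\Zc'|}|\Zc| + 1$ we get $s\log_2|\Zc'| \le T\log_2|\Zc| + \log_2|\Zc'|$, and substituting the definitions of rate and locality from this appendix gives $\beta' \le \beta\,\bigl(1 + \tfrac{\log_2|\Zc'|}{T\log_2|\Zc|}\bigr)$ and $r_i' \le r_i\,\bigl(1 + \tfrac{\log_2|\Zc'|}{T\log_2|\Zc|}\bigr)$ for every $i \in [N]$. Since $|\Zc|,|\Zc'| \ge 2$, the quantity $\tfrac{\log_2|\Zc'|}{\epsilon\log_2|\Zc|}$ is finite, so choosing any integer $T \ge \tfrac{\log_2|\Zc'|}{\epsilon\log_2|\Zc|}$ makes $\beta' \le (1+\epsilon)\beta$ and $r' = \max_i r_i' \le (1+\epsilon)r$, as required.

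The one point that needs care — and the reason for re-encoding block by block rather than the whole codeword at once — is preserving locality: a single injection of the full $\Zc^{\ell}$-word into a shorter $\Zc'$-word would in general force a receiver to read all of $\pmb{c}'$ just to recover one original symbol. Encoding each coordinate's $T$-history into its own short $\Zc'$-block keeps the query structure of the original code intact, with only the vanishing ceiling overhead to pay; the rest is routine bookkeeping.
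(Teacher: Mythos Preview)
Your argument is correct and follows essentially the same approach as the paper: both run several independent instances of the given code and then, crucially, re-encode each original codeword coordinate separately into a fixed-length block over $\Zc'$ via an injection, so that the query structure $R_1,\dots,R_N$ is preserved up to a multiplicative factor. The only cosmetic difference is parametrization: the paper picks integers $s,t$ satisfying $\tfrac{\log_2|\Zc|}{\log_2|\Zc'|} \le \tfrac{t}{s} \le (1+\epsilon)\tfrac{\log_2|\Zc|}{\log_2|\Zc'|}$, whereas you fix $T$ first and take $s=\lceil T\log_{|\Zc'|}|\Zc|\rceil$, which amounts to the same estimate.
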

\begin{proof}
Let $\Cs$ be the given valid index code over the channel alphabet $\Zc$, with codelength $\ell$, message length $M$, receiver queries $R_1,\dots,R_N$. 
Denote the decoders at the $N$ receivers by $\Ds_1,\dots,\Ds_N$. We will use this index code to build a valid code over the channel alphabet $\Zc'$. To do so, we choose positive integers $t$ and $s$ such that
\begin{equation} \label{eq:app:lem:channel_alphabet}
 \frac{\log_2 |\Zc|}{\log_2 |\Zc'|} \leq \frac{t}{s} \leq (1 + \epsilon) \frac{\log_2 |\Zc|}{\log_2 |\Zc'|}.
\end{equation} 
We consider $s$ generations of messages and encode each of them, independently, using $\Cs$. For $j \in [s]$, we encode the $j^\tth$ generation $\p{x}_1^{(j)},\dots,\p{x}_N^{(j)} \in \Ac^M$ using $\Cs$ to obtain the codeword $\p{c}^{(j)}=(c_1^{(j)},\dots,c_\ell^{(j)}) \in \Zc^\ell$.
Note that the number of symbols encoded per message, considering all generations, is $M' = sM$.
Now, to design an index code $\Cs'$ for the channel alphabet $\Zc'$, we map the vectors $\p{c}^{(1)},\dots,\p{c}^{(s)}$ to a vector of length $\ell'=t\ell$ over $\Zc'$ as follows.
For each $k \in [\ell]$, we use any one-to-one map from $\Zc^s$ to ${\Zc'}^{\,t}$ to map the vector $(c_k^{(1)},\dots,c_k^{(s)})$ to $(d_k^{(1)},\dots,d_k^{(t)})$. This is possible since $|\Zc|^s \leq |\Zc'|^t$, see~\eqref{eq:app:lem:channel_alphabet}.
The codeword transmitted for the index code $\Cs'$ is the vector $(d_k^{(j)}, k \in [\ell], j \in [t])$.

To decode the $i^\tth$ message, receiver $\rx_i$ observes the symbols $(d_k^{(j)}, k \in R_i, j \in [t])$. The number of transmissions queried by $\rx_i$ is $|R_i'| = t|R_i|$. 
For each $k \in R_i$, the receiver uses $(d_k^{(1)},\dots,d_k^{(t)})$ to retrieve $(c_k^{(1)},\dots,c_k^{(s)})$. 
Thus, $\rx_i$ can compute $(c_k^{(j)}, k \in R_i)$, for every $j \in [s]$. For each $j \in [s]$, the vector $(c_k^{(j)}, k \in R_i)$ can be used with the side information available at $\rx_i$ to decode $\p{x}_i^{(j)}$ using the decoder $\Ds_i$ of the original index code $\Cs$. 
This shows that $\Cs'$ is a valid index code with channel alphabet $\Zc'$.

Using~\eqref{eq:app:lem:channel_alphabet} we observe that the rate $\beta'$ of $\Cs'$ satisfies
\begin{align*}
\beta' = \frac{\ell' \log_2 |\Zc'|}{M' \log_2 |\Ac|} = \frac{t \ell \, \log_2 |\Zc'|}{s M \, \log_2 |\Ac|} &\leq (1 + \epsilon) \frac{\ell \log_2 |\Zc|}{M \log_2 |\Ac|} \\ 
&= (1 + \epsilon)\beta.
\end{align*} 
Similarly, the locality $r_i'$ of the $i^\tth$ receiver for the index code $\Cs'$ is
\begin{align*}
r_i' = \frac{|R_i'| \log_2 |\Zc'|}{M' \log_2 |\Ac|} = \frac{t |R_i| \log_2 |\Zc'|}{s M \log_2 |\Ac|} &\leq (1 + \epsilon) \frac{|R_i| \log_2 |\Zc|}{M \log_2 |\Ac|} \\
&= (1 + \epsilon) r_i,
\end{align*} 
where $r_i$ is the receiver locality of $\rx_i$ for the code $\Cs$.
Hence, the overall locality of $\Cs'$ is at the most $(1 + \epsilon)r$.

\end{proof}

Considering vanishingly small values of $\epsilon$, and choosing $\beta$ arbitrarily close to $\beta_G^*(r,\Ac,\Zc)$, Lemma~\ref{lem:app:channel_alphabet} implies that $\beta_G^*(r,\Ac,\Zc') \leq \beta_G^*(r,\Ac,\Zc)$.

\subsection{Invariance with respect to message alphabet}

We will use arguments similar to Appendix~\ref{sec:app:invariance_channel_alph} to show that $\beta_G^*(r,\Ac,\Zc)$ is independent of $\Ac$. In particular, we show that for any choice of finite sets $\Ac$ and $\Ac'$, with $|\Ac|, |\Ac'| \geq 2$, $\beta_G^*(r,\Ac',\Zc) \leq \beta_G^*(r,\Ac,\Zc)$. From the symmetry of this result, it follows that the inequality holds in the other direction as well, implying equality.

\begin{lemma} \label{lem:app:message_alphabet}
Suppose there exists a valid index code for $G$ with message alphabet $\Ac$ and channel alphabet $\Zc$ of rate $\beta$ and locality $r$, with $|\Ac|,|\Zc| \geq 2$. Let $\Ac'$ be any finite set with size at least two. For any $\epsilon > 0$, there exists a valid index code for $G$ with message alphabet $\Ac'$, channel alphabet $\Zc$, with rate at the most $(1+\epsilon) \beta$ and locality at the most $(1 + \epsilon)r$.
\end{lemma}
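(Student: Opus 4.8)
The plan is to mirror the structure of the proof of Lemma~\ref{lem:app:channel_alphabet}, but now batching together several message generations so that blocks of messages over $\Ac$ can be re-encoded bijectively into blocks over $\Ac'$. The key is to insert a pair of bijective ``alphabet translators'' on the message side, before and after the given code $\Cs$, without touching the channel.

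First I would fix the given valid index code $\Cs$ over message alphabet $\Ac$ and channel alphabet $\Zc$ with codelength $\ell$, message length $M$, receiver queries $R_1,\dots,R_N$ and decoders $\Ds_1,\dots,\Ds_N$. I would choose positive integers $s,t$ such that
\begin{equation*}
\frac{\log_2 |\Ac|}{\log_2 |\Ac'|} \le \frac{t}{s} \le (1+\epsilon)\frac{\log_2 |\Ac|}{\log_2 |\Ac'|},
\end{equation*}
which is possible since rationals are dense in the reals. Now consider $sM$ symbols of each demanded message written over $\Ac'$, i.e.\ for each $i\in[N]$ a vector $\p{x}_i' \in (\Ac')^{sM}$, and (for each receiver and each $j\in K_i$) the corresponding side information over $\Ac'$. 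Since $|\Ac'|^{sM} \le |\Ac|^{tM}$ (equivalently $s\log_2|\Ac'|\le t\log_2|\Ac|$, from the displayed inequality), I would fix a one-to-one map $\phi\colon (\Ac')^{sM}\to \Ac^{tM}$ applied componentwise-per-message to translate each $\p{x}_i'$ into a vector $\p{x}_i^{\sharp}\in \Ac^{tM}$; this can be viewed as $t$ generations of messages $\p{x}_i^{(1)},\dots,\p{x}_i^{(t)} \in \Ac^M$ over the original alphabet. Encode each of these $t$ generations independently using $\Cs$ to produce codewords $\p{c}^{(1)},\dots,\p{c}^{(t)} \in \Zc^{\ell}$, and transmit the concatenation, a vector of length $\ell'=t\ell$ over $\Zc$. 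The new message length is $M'=sM$.

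For decoding at $\rx_i$: the receiver observes the $t|R_i|$ channel symbols $(c_k^{(j)}, k\in R_i, j\in[t])$. For each generation $j$, it uses $\Ds_i$ together with its side information (translated into $\Ac$ the same way) to recover $\p{x}_i^{(j)}$; collecting these over all $j\in[t]$ yields $\p{x}_i^{\sharp}\in\Ac^{tM}$, and applying $\phi^{-1}$ on the relevant message coordinates recovers $\p{x}_i'$. The one subtlety to check is that the side information needed is exactly the $\Ac$-image (under the appropriate coordinate restriction of $\phi$) of the $\Ac'$-side-information $\p{x}_j'$, $j\in K_i$: since $\phi$ is applied to each message separately, the translator is ``local'' to each message block, so $\rx_i$ can indeed compute the $\Ac$-side-information it needs. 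This confirms validity of the new code over $(\Ac',\Zc)$. Finally, the rate is
\begin{equation*}
\beta' = \frac{\ell' \log_2|\Zc|}{M'\log_2|\Ac'|} = \frac{t\ell\,\log_2|\Zc|}{sM\,\log_2|\Ac'|} \le (1+\epsilon)\frac{\ell\log_2|\Zc|}{M\log_2|\Ac|} = (1+\epsilon)\beta,
\end{equation*}
and similarly $r_i' = \dfrac{t|R_i|\log_2|\Zc|}{sM\log_2|\Ac'|} \le (1+\epsilon) r_i$, so the overall locality is at most $(1+\epsilon)r$. Letting $\epsilon\to 0$ and taking $\beta$ arbitrarily close to $\beta_G^*(r,\Ac,\Zc)$ then gives $\beta_G^*(r,\Ac',\Zc)\le \beta_G^*(r,\Ac,\Zc)$, and swapping $\Ac\leftrightarrow\Ac'$ yields equality.

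The main obstacle I anticipate is purely bookkeeping rather than conceptual: one must be careful that the message-side bijection $\phi$ is applied per-message (so that it commutes with the receiver's access to side information), and that the integers $s,t$ are chosen so that both the domain inclusion $|\Ac'|^{sM}\le|\Ac|^{tM}$ holds \emph{and} the rate/locality inflation is bounded by $1+\epsilon$; these two requirements pull $t/s$ in the same direction, so the density-of-rationals choice above handles them simultaneously. No genuinely hard step is expected.
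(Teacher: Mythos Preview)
Your approach is essentially the same as the paper's: inject messages over $\Ac'$ into blocks of $\Ac$-messages, run several generations of $\Cs$, and bound the resulting rate and locality. The paper parametrizes by $(M',s)$ while you parametrize by $(s,t)$ with $M'=sM$; this is only cosmetic.

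However, the displayed inequality you chose for $t/s$ is inverted, and this makes both subsequent claims fail as written. From
\[
\frac{\log_2|\Ac|}{\log_2|\Ac'|}\le \frac{t}{s}
\]
one gets $s\log_2|\Ac|\le t\log_2|\Ac'|$, i.e.\ $|\Ac|^{sM}\le |\Ac'|^{tM}$, which is the wrong direction for the injection $\phi:(\Ac')^{sM}\to\Ac^{tM}$ (take $|\Ac|=2$, $|\Ac'|=4$: your inequality forces $t/s\approx 1/2$, and then $|\Ac'|^{sM}=2^{4tM}\gg 2^{tM}=|\Ac|^{tM}$). Likewise, using the right-hand side $\tfrac{t}{s}\le(1+\epsilon)\tfrac{\log_2|\Ac|}{\log_2|\Ac'|}$ in $\beta'=\tfrac{t}{s}\cdot\tfrac{\ell\log_2|\Zc|}{M\log_2|\Ac'|}$ does not yield $(1+\epsilon)\beta$; you get an extra factor $(\log_2|\Ac|/\log_2|\Ac'|)^2$. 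The correct choice is
\[
\frac{\log_2|\Ac'|}{\log_2|\Ac|}\le\frac{t}{s}\le(1+\epsilon)\frac{\log_2|\Ac'|}{\log_2|\Ac|},
\]
which is exactly the paper's condition $\tfrac{\log_2|\Ac'|}{M\log_2|\Ac|}\le\tfrac{s}{M'}\le(1+\epsilon)\tfrac{\log_2|\Ac'|}{M\log_2|\Ac|}$ after substituting $M'=sM$ and renaming $s\leftrightarrow t$. With this fix your injection and your rate/locality bounds go through verbatim.
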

\begin{proof}
Let $\Cs$ denote a valid index code with codelength $\ell$, message length $M$, receiver localities $r_1,\dots,r_N$, and receiver queries $R_1,\dots,R_N$. 
Note that its rate is $\beta=\ell/M$ and locality of $\rx_i$ is $r_i=|R_i|/M$. 
We now choose positive integers $M'$ and $s$ such that
\begin{equation} \label{eq:app:lem:message_alphabet}
\frac{\log_2 |\Ac'|}{M \,\log_2|\Ac|} \leq \frac{s}{M'} \leq (1+\epsilon) \frac{\log_2 |\Ac'|}{M \,\log_2|\Ac|}
\end{equation}
 
We use $\Cs$ to design a valid index code $\Cs'$ for message alphabet $\Ac'$ with message length $M'$. For $\Cs'$, let $\p{y}_1,\dots,\p{y}_N \in {\Ac'}^{\,M'}$ denote the $N$ messages. 
From~\eqref{eq:app:lem:message_alphabet}, an injective function exists from ${\Ac'}^{\,M'}$ to $\Ac^{sM}$. 
For each $i \in [N]$, we use such a function to map $\p{y}_i$ to a vector $(\p{x}_i^{(1)},\dots,\p{x}_i^{(s)}) \in \Ac^{sM}$, where $\p{x}_i^{(j)} \in \Ac^M$ for all $j \in [s]$. 
With this, we represent a message $\p{y}_i$ over $\Ac'$ using $s$ generations of $M$-length vectors $\p{x}_i^{(j)}$ over $\Ac$.
The side information at $\rx_i$ is $\p{y}_n$, $n \in K_i$, which is equivalent to $\p{x}_n^{(j)}$, $n \in K_i$ and $j \in [s]$.

We transmit the $s$ generations of messages independently using the index code $\Cs$. The resulting index code $\Cs'$ has codelength $\ell' = s \, \ell$, and message length $M'$ over the message alphabet $\Ac'$. The rate of this code is 
\begin{align*}
\beta' = \frac{\ell' \log_2 |\Zc|}{M' \log_2 |\Ac'|} = \frac{s \ell \log_2 |\Zc|}{M' \log_2 |\Ac'|} &\leq (1 + \epsilon) \frac{\ell \log_2 |\Zc|}{M \log_2 |\Ac|} \\ 
&= (1 + \epsilon) \beta,
\end{align*} 
where we have used~\eqref{eq:app:lem:message_alphabet}.
The number of codeword symbols observed by $\rx_i$ for the code $\Cs'$ is $s |R_i|$ since $s$ instances of $\Cs$ are used to encode $\p{y}_1,\dots,\p{y}_N$. Thus the locality $r_i'$ of $\rx_i$ under the code $\Cs'$ is
\begin{equation*}
r_i' = \frac{s|R_i| \log_2 |\Zc|}{M' \log_2 |\Ac'|} \leq (1 + \epsilon)\frac{|R_i|\log_2 |\Zc|}{M \log_2 |\Ac|} = (1 + \epsilon)r_i.
\end{equation*}
We conclude that $\Cs'$ has rate at the most $(1+\epsilon)\beta$ and locality at the most $(1+\epsilon)r$. 
\end{proof}

If we choose $\epsilon$ and $\beta$ arbitrarily close to $0$ and $\beta_G^*(r,\Ac,\Zc)$, respectively, we observe from Lemma~\ref{lem:app:message_alphabet} that $\beta_G^*(r,\Ac',\Zc) \leq \beta_G^*(r,\Ac,\Zc)$.

\subsection{Summary} \label{app:sec:summary_invariance}

From the results of this appendix we observe that the value of $\beta_G^*(r,\Ac,\Zc)$ is independent of $\Ac$ and $\Zc$. Hence, for any choice of alphabet $\Ac$
\begin{equation*}
\beta_G^*(r,\Ac,\Ac) = \inf_{\Ac', \, \Zc'} \beta_G^*(r,\Ac',\Zc'),
\end{equation*} 
i.e., $\beta_G^*(r,\Ac,\Ac)$ captures the ultimate locality-rate trade-off even when optimization over message and channel alphabets is allowed.
% Hence, in the rest of this paper we restrict ourselves to the case \mbox{$\Zc=\Ac$}, i.e., the codeword and the messages are vectors over the same alphabet $\Ac$.
%% R1version %%

\section{Proof of Lemma~\ref{lem:convexity}} \label{app:lem:convexity}

% We will use time-sharing to prove convexity. 
Assume $r_1,r_2 \geq 1$ and let $\epsilon > 0$. For each $j=1,2$, there exists an index code with broadcast rate $\beta_j \leq \beta_G^*(r_j) + \epsilon$ and locality at the most $r_j$. 
We will denote the blocklength of this code by $\ell_j$, message length by $M_j$ and the subsets of the indices used by the $N$ receivers as $R_{1,j},\dots,R_{N,j}$, where $j=1,2$. 
For some choice of non-negative integers $k_1$ and $k_2$, consider a time-sharing scheme where the first index code is used $k_1M_2$ times, and the second index code is used $k_2M_1$ times. 
For this composite scheme, the overall message length is $M=k_1M_2M_1 + k_2M_1M_2 = M_1M_2(k_1+k_2)$. The blocklength is $\ell=k_1M_2\ell_1 + k_2M_1\ell_2$, and the number of codeword symbols utilized by the $i^{\text{th}}$ receiver to decode its desired message is $|R_i|=k_1M_2|R_{i,1}| + k_2M_1|R_{i,2}|$. The locality of this time-sharing scheme can be upper bounded as
\begin{align*}
\max_i \frac{|R_i|}{M} &= \max_i \,\, \frac{k_1|R_{i,1}|}{(k_1+k_2)M_1} + \frac{k_2|R_{i,2}|}{(k_1+k_2)M_2} \\ 
&\leq \frac{k_1}{k_1+k_2} r_1 + \frac{k_2}{k_1+k_2}r_2.
\end{align*} 
Similarly, the broadcast rate $\beta$ of this time-sharing scheme can be shown to be equal to $k_1\beta_1/(k_1+k_2) + k_2\beta_2/(k_1+k_2)$, which is upper bounded as
\begin{align*}
\beta &= \frac{k_1}{k_1+k_2}\beta_1 + \frac{k_2}{k_1+k_2}\beta_2 \\
&\leq \frac{k_1}{k_1+k_2}\beta_G^*(r_1) + \frac{k_2}{k_1+k_2}\beta_G^*(r_2) + \epsilon.
\end{align*} 
Denoting $k_1r_1/(k_1+k_2) + k_2r_2/(k_1+k_2)$ by $r$, and by letting $\epsilon \to 0$, we observe that
\begin{equation*}
\beta_G^*(r) \leq \frac{k_1}{k_1+k_2}\beta_G^*(r_1) + \frac{k_2}{k_1+k_2}\beta_G^*(r_2).
\end{equation*}
Convexity follows by approximating any real number in the interval $(0,1)$ by a rational number $k_1/(k_1+k_2)$ to any desired accuracy by using sufficiently large $k_1$ and $k_2$. 

\section{Proof of Theorem~\ref{thm:zeroes}} \label{app:proof:thm:zeroes}

We will design a new encoding matrix $\pmb{L}'$ by modifying the subset of the columns of the given matrix $\pmb{L}$ corresponding to the column indices $\Sc_1 \cup \cdots \cup \Sc_N$. 
For the remaining indices $k \in \Mc_1 \cup \cdots \cup \Mc_N$, the $k^\tth$ columns of $\pmb{L}$ and $\pmb{L}'$ are equal, i.e., $\pmb{L}_k = \pmb{L}'_k$. For an arbitrary $i \in [N]$, we will now explain the construction of the column vectors $\pmb{L}'_k$, $k \in \Sc_i$. Since the symbols $\pmb{x}^\tr\pmb{L}'_k$, $k \in \Sc_i$, are queried only by $\rx_i$ and are unused by other receivers, we only need to consider the constraints that are imposed by the demands of $\rx_i$ while designing the column vectors $\pmb{L}'_k$, $k \in \Sc_i$.

We will introduce the notation which will be used in the rest of the proof. For any $E \subseteq [MN]$, let $U_{E} = \spp(\pmb{e}_k, k \in E)$, i.e., $U_E$ is the subspace of all vectors whose support is a subset of $E$. 
% Note that $\dim(U_E)=|E|$.
For any $F \subseteq [\ell]$, let $V_F = \spp(\pmb{L}_k, k \in F)$ and $V'_F = \spp(\pmb{L}'_k, k \in F)$.
Since $R_i = \Sc_i \cup \Mc_i$, we have $V_{R_i} = V_{\Sc_i} + V_{\Mc_i}$ and $V'_{R_i} = V'_{\Sc_i} + V'_{\Mc_i}$, where the addition corresponds to sum of subspaces.
From Theorem~\ref{thm:design_criterion} and using the fact that $\pmb{L}$ is a valid encoder matrix, we have $\pmb{e}_j \in V_{R_i} + U_{\Kb_i}$, for all $j \in \Db_i$, i.e., we have 
\begin{equation} \label{eq:lem:zeroes}
U_{\Db_i} = \spp(\pmb{e}_j, j \in \Db_i) \subseteq V_{R_i} + U_{\Kb_i} = V_{\Sc_i} + V_{\Mc_i} + U_{\Kb_i}.
\end{equation} 
% where the addition denotes the sum of subspaces. 
Again using Theorem~\ref{thm:design_criterion}, we observe that $\pmb{L}'$ allows $\rx_i$ to decode its demand if and only if 
\begin{equation} \label{eq:lem:zeroes:2}
U_{\Db_i} \subseteq V'_{R_i} + U_{\Kb_i} = V'_{\Sc_i} + V'_{\Mc_i} + U_{\Kb_i}.
\end{equation} 

Using the validity of the encoder matrix $\pmb{L}$, we will first lower bound $|\Sc_i|$ which is the number of coded symbols queried uniquely by $\rx_i$.
From~\eqref{eq:lem:zeroes}, we obtain
\begin{align} \label{eq:lem:zeroes:3}
U_{\Db_i} = (V_{\Sc_i} + V_{\Mc_i} + U_{\Kb_i}) \cap U_{\Db_i}.
          % &= (V_{\Sc_i} \cap U_{\Db_i}) \,+\, (V_{\Mc_i} + U_{\Kb_i}) \cap U_{\Db_i}
\end{align} 
Let $\Vin = (V_{\Mc_i} + U_{\Kb_i}) \cap U_{\Db_i}$ denote the subspace of $V_{\Mc_i} + U_{\Kb_i}$ contained in $U_{\Db_i}$, and let $\Vout$ be any subspace such that $\Vout \cap U_{\Db_i} = \{\pmb{0}\}$ and $\Vin + \Vout = V_{\Mc_i} + U_{\Kb_i}$. Continuing from~\eqref{eq:lem:zeroes:3}, we claim that
\begin{align}
U_{\Db_i} &= (V_{\Sc_i} + \Vin + \Vout) \cap U_{\Db_i} \label{eq:lem:zeroes:4} \\
&= \left( (V_{\Sc_i} + \Vout) \cap U_{\Db_i} \right) + \Vin. \label{eq:lem:zeroes:5}
\end{align} 
It is clear that the subspace in~\eqref{eq:lem:zeroes:4} contains the subspace in~\eqref{eq:lem:zeroes:5} since $\Vin \subseteq U_{\Db_i}$. To prove that~\eqref{eq:lem:zeroes:4} is contained in~\eqref{eq:lem:zeroes:5}, assume that $\pmb{y} \in V_{\Sc_i} + \Vout$ and $\pmb{z} \in \Vin$ are such that $\pmb{y} + \pmb{z} \in U_{\Db_i}$. Since $\pmb{z} \in \Vin \subseteq U_{\Db_i}$ and $\pmb{y} + \pmb{z} \in U_{\Db_i}$, we conclude that $\pmb{y} \in U_{\Db_i}$ as well. Thus $\pmb{y} \in (V_{\Sc_i} + \Vout) \cap U_{\Db_i}$, and hence, $\pmb{y} + \pmb{z} \in \left((V_{\Sc_i} + \Vout) \cap U_{\Db_i}\right) + \Vin$.

Considering the dimensions of the subspaces in~\eqref{eq:lem:zeroes:5}, we have 
\begin{equation} \label{eq:lem:zeroes:6}
\dim(U_{\Db_i}) \leq \dim((V_{\Sc_i} + \Vout) \cap U_{\Db_i}) + \dim(\Vin).
\end{equation} 
In order to proceed with the proof of the theorem, we will now show that $\dim((V_{\Sc_i} + \Vout)\cap U_{\Db_i}) \leq |\Sc_i|$. To do so, assume that $\pmb{y}_j + \pmb{z}_j$, $j=1,\dots,n$, form a basis for $(V_{\Sc_i} + \Vout)\cap U_{\Db_i}$ where $\pmb{y}_j \in V_{\Sc_i}$ and $\pmb{z}_j \in \Vout$ and $n=\dim((V_{\Sc_i} + \Vout)\cap U_{\Db_i})$. 
If the scalars $\alpha_1,\dots,\alpha_n$ are such that $\sum_j \alpha_j \pmb{y}_j = \pmb{0}$, then
\begin{equation*}
 \sum_{j = 1}^{n} \alpha_j(\pmb{y}_j + \pmb{z}_j) = \sum_{j=1}^{n} \alpha_j \pmb{z}_j \in \Vout.
\end{equation*} 
Since $\pmb{y}_j + \pmb{z}_j \in U_{\Db_i}$, we also observe that $\sum_j \alpha_j(\pmb{y}_j + \pmb{z}_j) \in U_{\Db_i}$. Using the fact $\Vout \cap U_{\Db_i} = \{\pmb{0}\}$, we deduce that $\sum_j \alpha_j(\pmb{y}_j + \pmb{z}_j)=\pmb{0}$, and hence, $\alpha_1=\cdots=\alpha_n=0$. We conclude that $\pmb{y}_1,\dots,\pmb{y}_n \in V_{\Sc_i}$ are linearly independent, and therefore
\begin{equation} \label{eq:lem:zeroes:7}
\dim((V_{\Sc_i} + \Vout) \cap U_{\Db_i}) = n \leq \dim(V_{\Sc_i}) \leq |\Sc_i|.
\end{equation} 

Note that the columns of $\pmb{L}$ and $\pmb{L}'$ corresponding to the column indices $\Mc_i$ are equal, and hence $V_{\Mc_i}'=V_{\Mc_i}$. Using this fact together with~\eqref{eq:lem:zeroes:6} and~\eqref{eq:lem:zeroes:7}, we have
\begin{align*}
|\Sc_i| &\geq \dim((V_{\Sc_i} + \Vout)\cap U_{\Db_i}) \\
        &\geq \dim(U_{\Db_i}) - \dim(\Vin) \\
        &= \dim(U_{\Db_i}) - \dim((V_{\Mc_i} + U_{\Kb_i}) \cap U_{\Db_i}) \\
        &= \dim(U_{\Db_i}) - \dim((V'_{\Mc_i} + U_{\Kb_i}) \cap U_{\Db_i}).
\end{align*} 

From~\eqref{eq:lem:zeroes:2}, in order to satisfy the claim of this theorem, it is sufficient to chose the $|\Sc_i|$ vectors $\pmb{L}'_k$, $k \in \Sc_i$, such that $\pmb{L}'_k \lhd \Db_i$, i.e., $\pmb{L}'_k \in U_{\Db_i}$ and 
\begin{equation*}
 V'_{\Sc_i} \,+\, \left( (V'_{\Mc_i} + U_{\Kb_i}) \cap U_{\Db_i} \right) \supseteq U_{\Db_i}.
\end{equation*} 
This is always possible since the difference in the dimensions of $U_{\Db_i}$ and $(V'_{\Mc_i} + U_{\Kb_i}) \cap U_{\Db_i}$ is at the most $|\Sc_i|$. 
One way to construct $\pmb{L}'_k$, $k \in \Sc_i$, is as follows. We begin with a basis for $(V'_{\Mc_i} + U_{\Kb_i}) \cap U_{\Db_i}$. These vectors form a linearly independent set in $U_{\Db_i}$. We extend this set to a basis for $U_{\Db_i}$. The number of additional vectors in this basis is $\dim(U_{\Db_i}) - \dim((V'_{\Mc_i} + U_{\Kb_i}) \cap U_{\Db_i})$. These additional vectors together with $|\Sc_i|-\dim(U_{\Db_i}) + \dim((V'_{\Mc_i} + U_{\Kb_i}) \cap U_{\Db_i})$ all-zero vectors are chosen as the columns $\pmb{L}'_k$, $k \in \Sc_i$.

\section{Construction of the Bipartite Graph $\Bc$} \label{app:construct_bipartite}

We require the following lemma to prove the correctness of the proposed construction.
Note that $r_1=|R_1|,\dots,r_N=|R_N|$ are the degrees of the vertices $u_1,\dots,u_N$, respectively, in $\Bc$, and are the localities of the $N$ receivers in the index coding problem.
\begin{lemma} \label{lem:first_i_entries_of_r}
For any $i \in [N]$, $r_1 + \cdots + r_i \leq 2i-1$.
\end{lemma}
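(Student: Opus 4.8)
We must show that for the vector $\p{r}=(r_1,\dots,r_N)$ of strictly positive integers with $\sum_{i}r_i=2(N-1)$ and $r_1\le r_2\le\cdots\le r_N$, the partial sums satisfy $r_1+\cdots+r_i\le 2i-1$ for every $i\in[N]$. The plan is a short combinatorial argument by contradiction on the partial sums, exploiting that the entries are sorted and sum to exactly $2(N-1)=2N-2$.

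**Main argument.** Suppose, for contradiction, that some index $i\in[N]$ violates the claim, i.e.\ $r_1+\cdots+r_i\ge 2i$. First note $i<N$, since $r_1+\cdots+r_N=2N-2<2N$. Now consider the remaining $N-i$ entries $r_{i+1},\dots,r_N$. Because the sequence is non-decreasing, each of these is at least as large as the average of the first $i$ entries; more directly, I would argue as follows. Since $r_1\le\cdots\le r_i$ and their sum is at least $2i$, the largest of them satisfies $r_i\ge \lceil 2i/i\rceil = 2$, so in particular $r_i\ge 2$, and hence $r_{i+1}\ge r_i\ge 2$, \dots, $r_N\ge 2$. But wait --- this only gives $r_j\ge 2$ for $j>i$, which yields $\sum_{j=i+1}^N r_j\ge 2(N-i)$, and therefore $\sum_{j=1}^N r_j\ge 2i+2(N-i)=2N>2N-2$, a contradiction. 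So I should sharpen: from $r_1+\cdots+r_i\ge 2i$ and $r_1\le\cdots\le r_i$ we actually get the (possibly non-integer) bound $r_i\ge 2i/i=2$; combined with the monotonicity $r_j\ge r_i$ for $j\ge i$ we get $r_j\ge 2$ for all $j\ge i$, hence
\[
2N-2=\sum_{j=1}^N r_j=\sum_{j=1}^{i-1}r_j+\sum_{j=i}^N r_j\ge (i-1)\cdot 1 + (N-i+1)\cdot 2 = 2N-i+1.
\]
This forces $2N-2\ge 2N-i+1$, i.e.\ $i\ge 3$. That alone is not yet a contradiction, so I need to be a little more careful about how much mass the first $i$ entries carry.

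**Refinement.** The clean way is to track the surplus. Write $s_i=r_1+\cdots+r_i$ and assume $s_i\ge 2i$ for some smallest such $i$ (so $i<N$). Since $r_i=\max\{r_1,\dots,r_i\}$ and $s_i\ge 2i$, if all of $r_1,\dots,r_i$ equaled $1$ then $s_i=i<2i$, impossible; so $r_i\ge 2$. Then the tail satisfies $\sum_{j=i+1}^N r_j\ge 2(N-i)$ by monotonicity, giving $2N-2=s_i+\sum_{j=i+1}^N r_j\ge 2i+2(N-i)=2N$, the desired contradiction. The only subtlety is the step $\sum_{j=i+1}^N r_j\ge 2(N-i)$: this needs $r_j\ge 2$ for each $j>i$, which follows from $r_j\ge r_i\ge 2$. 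So in fact the crude argument above already works once we correctly use $r_i\ge 2$ rather than the weaker $r_i\ge 2i/i$ rounding — the point is simply that $r_i$ is an integer that is at least the average $s_i/i\ge 2$, hence $r_i\ge 2$, hence the whole tail is $\ge 2$. I expect no genuine obstacle here; the only thing to be careful about is not to double-count or mis-handle the boundary case $i=N$ (ruled out immediately) and to state clearly that strict positivity and integrality of the $r_j$ are what make the counting tight. The lemma then feeds directly into the triangular-system construction of $\Bc$ in Appendix~\ref{app:construct_bipartite}.
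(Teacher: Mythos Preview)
Your argument is correct and is essentially the same as the paper's: both exploit that the sorted order forces $r_j \geq s_i/i$ for all $j>i$, which together with the constraint $\sum_j r_j = 2(N-1)$ yields $s_i < 2i$ and hence $s_i \leq 2i-1$ by integrality. The paper presents this directly (bounding $y \cdot N/i \leq 2(N-1)$) rather than by contradiction, but the substance is identical; you could tighten your write-up by dropping the exploratory false start and noting that integrality is used only to pass from $s_i < 2i$ to $s_i \leq 2i-1$, not in the step $r_i \geq s_i/i \geq 2$.
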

\begin{proof}
Let $r_1 + \cdots + r_i = y$. The average of the first $i$ entries of $\p{r}$ is $y/i$. Since the components of $\p{r}$ are in the ascending order, we have $r_{i+1},r_{i+2},\dots,r_N \geq y/i$. Using this in 
\begin{align*}
y + r_{i+1} + r_{i+2} + \cdots + r_N = r_1 + \cdots + r_N = 2(N-1), 
\end{align*} 
we have $y + (N-i)y/i \leq 2(N-1)$. This implies $y \leq 2(N-1)i/N < 2i$, and since $y$ is an integer, we deduce that $y \leq 2i-1$.
\end{proof}

\subsection*{Proof of Lemma~\ref{lem:construction_Bc}}

The construction of $\Bc$ starts with the vertex sets $\{u_1,\dots,u_N\}$ and $\{v_1,\dots,v_{N-1}\}$, and an empty set of edges.
Edges are appended sequentially by considering the vertices $u_1,\dots,u_N$, in that order. At step $i$, we determine all the neighbors of $u_i$, while making sure that the degrees of $v_1,\dots,v_{N-1}$ do not exceed $2$. 
Using $i=1$ in Lemma~\ref{lem:first_i_entries_of_r}, we deduce that $r_1$ is necessarily equal to $1$. We append $\{u_1,v_1\}$ to the edge set, and this completes the first step. 
For $2 \leq i \leq N-1$, at step $i$, we append $\{u_i,v_i\}$ to the edge set of $\Bc$, and identify $r_i-1$ vertices among $\{v_1,\dots,v_{i-1}\}$ with degrees equal to $1$, and include edges between $u_i$ and these $r_i-1$ vertices. We will shortly show that such a set of $r_i-1$ vertices indeed exist.
This procedure ensures that degrees of $v_1,\dots,v_{N-1}$ do not exceed $2$, and for any $i \leq N-1$, we have $i \in R_i$ and $R_i \subseteq [i]$. 
Finally, at step $N$, we append edges between $u_N$ and all the vertices among $\{v_1,\dots,v_{N-1}\}$ that have degree $1$.
Since $\sum_{i \in [N]}r_i=2(N-1)$, at the end of step $N$, it is guaranteed that the degrees of $u_1,\dots,u_N$ are $r_1,\dots,r_N$, respectively, and that the degrees of $v_1,\dots,v_{N-1}$ are all equal to $2$.

It only remains to show that at step $i$, for $2 \leq i \leq N-1$, there exist $r_i-1$ vertices among $v_1,\dots,v_{i-1}$ with degree equal to $1$. We prove this by induction. 
Let us first consider $i=2$. We know from Lemma~\ref{lem:first_i_entries_of_r} that $r_1=1$ and $r_1 + r_2 \leq 3$, i.e, $r_2 \leq 2$ and $r_2-1 \leq 1$.
At the end of step $1$, $v_1$ has degree $1$, which is because of the edge $\{u_1,v_1\}$. Since at the most $1$ vertex with degree $1$ is required, step 2 can be completed.
Now consider $3 \leq i \leq N-1$, and assume that steps $1,\dots,(i-1)$ have been completed. 
Now consider the end of step $(i-1)$. At this point, the edges $\{u_1,v_1\},\dots,\{u_{i-1},v_{i-1}\}$ are present in $\Bc$. Thus the degrees of each of $v_1,\dots,v_{i-1}$ is either $1$ or $2$. Let the number of vertices with degree $1$ among $v_1,\dots,v_{i-1}$ at the end of step $(i-1)$ be $z$. 
Thus the sum of the degrees of $v_1,\dots,v_{i-1}$ is $z + 2(i-1-z)=2i-z-2$.
Since the neighbors of each of $v_1,\dots,v_{i-1}$ are subsets of $u_1,\dots,u_{i-1}$ at the end of step $(i-1)$, and since the degrees of $u_1,\dots,u_{i-1}$ are $r_1,\dots,r_{i-1}$ at this stage, we have $2i-z-2 = r_1 + \cdots + r_{i-1}$. Using Lemma~\ref{lem:first_i_entries_of_r}, we have $2i-1 \geq r_1 + \cdots + r_i = 2i-z-2 + r_i$ which implies $z \geq r_i-1$. This proves that step $i$ can be completed.

\section{Proof of Lemma~\ref{lem:cyclic_symmetry}} \label{app:lem:cyclic_symmetry}

Since $\sigma$ is an automorphism of $G$, so is $\sigma^n$ for any $n \in [N]$. Note that the group $\{\sigma,\sigma^2,\dots,\sigma^{N}=1\}$ acts transitively on the vertex set of $G$.
Let $(\Ef,\Df_1,\dots,\Df_N)$ be an index code for $G$ with rate $\beta$, message length $M$, receiver localities $r_1,\dots,r_N$, and receiver queries $R_1,\dots,R_N$. 
Note that $r \geq \ravg = \sum_{i \in [N]} r_i/N$.
We will consider $N$ index coding schemes $(\Ef^{(n)},\Df_1^{(n)},\dots,\Df_N^{(n)})$, $n \in [N]$, each of which is derived from $(\Ef,\Df_1,\dots,\Df_N)$ by permuting the roles of the messages $\xb_1,\dots,\xb_N$. Specifically, the $n^{\text{th}}$ encoder $\Ef^{(n)}$ is the encoder $\Ef$ applied to the $n^{\text{th}}$ left cyclic shift of the message tuple $\xb_1,\dots,\xb_N$, i.e.,
\begin{align*}
\Ef^{(n)}\left( \, (\xb_1,\dots,\xb_N) \,\right) \!&=\! \Ef( (\xb_{n+1},\xb_{n+2},\dots,\xb_N,\xb_1,\dots,\xb_{n}) ) \\ 
&= \Ef\left( \, (\xb_{\sigma^{n}(1)},\dots,\xb_{\sigma^{n}(N)}) \, \right).
\end{align*}
For any $i \in [N]$, in the above expression of $\Ef^{(n)}$, the message $\xb_i$ is the $(i-n)_N^{\text{th}}$ argument of $\Ef$ where $(i-n)_N=(i-n)$ if $(i-n) \geq 1$ and $(i-n)_N = i-n+N$ otherwise. Hence, the encoding function $\Ef^{(n)}$ operates on the message $\xb_i$ in the same manner as $\Ef$ operates on the message $\xb_{(i-n)_N}$. Using the fact that $\sigma$ is an automorphism of $G$, it is easy to see that, when the $n^{\text{th}}$ code is used, the $i^{\text{th}}$ receiver can decode $\xb_i$ as $\Df_{(i-n)_N}(\cb_{R_{(i-n)_N}},\xb_{K_i})$. Thus, the number of codeword symbols queried by the $i^{\text{th}}$ receiver in the $n^{\text{th}}$ code is $|R_{(i-n)_N}| = M\,r_{(i-n)_N}$, where $M$ is the message length.

Now consider a time sharing scheme that utilizes each of the $N$ encoders $\Ef^{(1)},\dots,\Ef^{(N)}$ exactly once. The overall message length for this scheme is $MN$, the broadcast rate is $\beta$, and the number of codeword symbols queried by the $i^{\text{th}}$ receiver is 
\begin{equation*}
|R_i'| = \sum_{n \in [N] }|R_{(i-n)_N}| = M \sum_{ n \in [N]} r_n = MN \ravg,
\end{equation*}
which is independent of $i$.
Also, the overall locality of this time sharing scheme is $r' = \max_i |R_i'|/MN = \ravg \leq r$.
To complete the proof observe that for any $i \in [N]$ and $j = i \mod N + 1$, we have
\begin{align*}
|R_i' \cap R_j'| &= \sum_{n \in [N]} |R_{(i-n)_N} \cap R_{(i-n+1)_N}| \\
                 &= \sum_{n \in [N]} |R_{n} \cap R_{(n+1)_N}|,
\end{align*}  
which is independent of $i$.

%%%%%%%%%%%%%%%%%%%%%%%%%%%%%%%%%%%%%%%

\section*{Acknowledgment}

The first author thanks Prof.\ B.\ Sundar\ Rajan for discussions regarding the topic of this paper, as well as Dr.\ Yi Hong and Prof.\ Emanuele Viterbo for hosting him at Monash University which led to several insightful discussions.

%%%% references %%%%% 
% \bibliography{IEEEabrv,NKLD_IT_19}

% Generated by IEEEtran.bst, version: 1.13 (2008/09/30)

\end{document}